\numberwithin{equation}{section}
\theoremstyle{plain}   % or definition
\newtheorem{theorem}{Theorem}[section]
 \newtheorem{proposition}[theorem]{Proposition}
\newtheorem{lemma}[theorem]{Lemma}
\newtheorem{corollary}[theorem]{Corollary}
\newtheorem{conjecture}[theorem]{Conjecture}
\theoremstyle{definition}
\newtheorem{definition}[theorem]{Definition}
\theoremstyle{remark}
\newtheorem{remark}[theorem]{Remark}
\begin{document}

\title[]{Tilted spacetime positive mass theorem with arbitrary ends}

%% Information for first author
\author[]{Daoqiang Liu}

%  Address of record for the research reported here
\address{
     School of Mathematical Sciences,
     Capital Normal University, 100048,
     Beijing, China}
\email{\href{mailto:dqliumath@cnu.edu.cn}{dqliumath@cnu.edu.cn}}

%    General info
\subjclass[2020]{Primary 53C21; Secondary 53C24, 53C27}

\date{November 27, 2023}

\keywords{ADM energy-momentum, tilted boundary dominant energy condition, a non-compact boundary, arbitrary ends}

\begin{abstract}
In this paper, we prove the spacetime positive mass theorem for asymptotically flat spin initial data sets with arbitrary ends and a non-compact boundary. Moreover, we demonstrate a quantitative shielding theorem, subject to the tilted boundary dominant energy condition. Our results are established by solving a mixed boundary value problem for the Dirac-Witten operator with a Callias potential.
\end{abstract}

\maketitle

%%%%%%%%%%%%%%%%%%%%%%%%%%%%%%%%%%%%%%%%%%%%%%%%%%%%%%%%%%%%%%%%%%%%%%%%%%%%%%%%%%%%%%%%%%%%%%%%%%%%%%%%%%%%%
%%%%%%%%%%%%%%%%%%%%%%%%%%%%%%%%%%%%%%%%%%%%%%%%%%%%%%%%%%%%%%%%%%%%%%%%%%%%%%%%%%%%%%%%%%%%%%%%%%%%%%%%%%%%%
%%%%%%%%%%%%%%%%%%%%%%%%%%%%%%%%%%%%%%%%%%%%%%%%%%%%%%%%%%%%%%%%%%%%%%%%%%%%%%%%%%%%%%%%%%%%%%%%%%%%%%%%%%%%%
%%%%%%%%%%%%%%%%%%%%%%%%%%%%%%%%%%%%%%%%%%%%%%%%%%%%%%%%%%%%%%%%%%%%%%%%%%%%%%%%%%%%%%%%%%%%%%%%%%%%%%%%%%%%%
\section{Introduction}\label{sec:intro}
The positive mass theorem is one of the most fundamental results in differential geometry and mathematical general relativity, which states that if a complete asymptotically flat manifold $(M,g)$ has non-negative scalar curvature, then the ADM mass of an asymptotically flat end is non-negative and the mass vanishes if and only if $(M,g)$ is isometric to the Euclidean space. This result was first given by Schoen and Yau \cite{SY79a,SY79b,SY81a} (see also \cite{Sch89}) for dimensions $\leq 7$ using minimal hypersurfaces. Furthermore, the non-negativity result was generalized to initial data sets by Schoen-Yau \cite{SY81b} in dimension three via Jang’s equation and by Eichmair–Huang–Lee–Schoen \cite{EHLS16} for dimensions $\leq 7$ using marginally outer trapped hypersurfaces (MOTS), instead of minimal hypersurfaces. 
Witten gave a different proof of the positive mass theorem by spinors under the spin condition \cite{Wit81}, as explained in detail in \cite{Bar86,PT82}. For higher dimension cases, proofs were presented by Schoen-Yau \cite{SY17} and Lohkamp \cite{Loh16,Loh17}.

In \cite{SY88,LUY20}, Schoen-Yau and Lesourd-Unger-Yau conjectured that the positive mass theorem holds for an asymptotically flat manifold with arbitrary ends.
\begin{conjecture}\label{con:SY-LUY}
Let $(M^{n\geq 3},g)$ be an $n$-dimensional complete Riemannian manifold which has non-negative scalar curvature. Let $\mathcal{E}\subseteq M$ be a single asymptotically flat end in $M$. Then the ADM mass of $\mathcal{E}$ is non-negative.
\end{conjecture}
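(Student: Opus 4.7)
The plan is to adapt Witten's spinorial argument to the arbitrary-ends setting by introducing a Callias-type potential into the Dirac operator, so as to shield the distinguished asymptotically flat end $\mathcal{E}$ from the other (possibly pathological) ends. Classically, under the spin hypothesis, the ADM mass is recovered as a boundary integral at infinity of a harmonic spinor asymptotic to a constant $\psi_{0}$, and the Lichnerowicz--Weitzenb\"ock identity
\[
D^{2} \;=\; \nabla^{*}\nabla + \tfrac{1}{4}R
\]
together with $R\geq 0$ forces $m_{\mathrm{ADM}}(\mathcal{E})\geq 0$. Here the global $L^{2}$ theory of $D$ may fail on the arbitrary ends, so a modification is needed.

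Concretely, I would exhaust $M$ by compact domains $\Omega_{R}$ whose boundary decomposes as $\partial\Omega_{R}=\Sigma_{R}^{\mathrm{out}}\sqcup\Sigma_{R}^{\mathrm{in}}$, with $\Sigma_{R}^{\mathrm{out}}$ a large coordinate sphere in $\mathcal{E}$ and $\Sigma_{R}^{\mathrm{in}}$ truncating the arbitrary ends. Fix a smooth function $f\colon M\to\mathbb{R}$ that vanishes on a neighbourhood of $\mathcal{E}$ and grows on the arbitrary ends so that $f^{2}\gg|\nabla f|$ there. Let $\rho$ denote the natural chirality endomorphism anticommuting with Clifford multiplication, and form the Callias--Dirac operator $\mathcal{D}_{f}=D+\rho f$. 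Its Weitzenb\"ock identity takes the form
\[
\mathcal{D}_{f}^{*}\mathcal{D}_{f} \;=\; \nabla^{*}\nabla + \tfrac{1}{4}R + f^{2} - \rho\,c(\nabla f),
\]
whose right-hand side is pointwise non-negative under the above conditions. On each $\Omega_{R}$, I impose a chirality or MIT-bag type boundary condition on $\Sigma_{R}^{\mathrm{in}}$ making the boundary term in Green's identity non-negative, while prescribing $\phi_{R}\sim\psi_{0}$ along $\Sigma_{R}^{\mathrm{out}}$; standard elliptic theory for Callias operators with mixed boundary conditions then yields a unique solution of $\mathcal{D}_{f}\phi_{R}=0$.

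The coercivity provided by $f^{2}$ gives uniform $W^{1,2}$-estimates and exponential decay of $\phi_{R}$ into the arbitrary ends, so passing $R\to\infty$ produces a limit spinor $\phi$ on $M$ with $\mathcal{D}_{f}\phi=0$ and $\phi\to\psi_{0}$ in $\mathcal{E}$. Substituting $\phi$ into Witten's integration by parts and evaluating the boundary term at infinity in $\mathcal{E}$ gives
\[
c_{n}\,m_{\mathrm{ADM}}(\mathcal{E})\,|\psi_{0}|^{2} \;=\; \int_{M}\!\bigl(|\nabla\phi|^{2}+\tfrac{1}{4}R|\phi|^{2}+f^{2}|\phi|^{2}-\langle\rho\,c(\nabla f)\phi,\phi\rangle\bigr)\,dV \;\geq\;0
\]
for a positive dimensional constant $c_{n}$, which is the desired conclusion.

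The principal obstacle, I expect, is the quantitative design of both the Callias potential and the boundary condition: $f$ must grow fast enough that $f^{2}$ pointwise absorbs the indefinite term $\rho\,c(\nabla f)$ on the arbitrary ends, yet in a way that keeps $\mathcal{D}_{f}$ Fredholm on the manifold with non-compact boundary and yields estimates uniform in $R$. Meanwhile, the chirality boundary condition on $\Sigma_{R}^{\mathrm{in}}$ must simultaneously respect the positivity of the boundary integral, be compatible with the eventual passage to the DEC setting of the paper, and produce a limit spinor with precisely the right asymptotics in $\mathcal{E}$ so that Witten's boundary integral recovers exactly the ADM mass. This is the technical heart of the argument, and is what the paper's tilted Dirac--Witten--Callias framework is designed to deliver.
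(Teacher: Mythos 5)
There is a fundamental gap: the statement you are asked to prove is Conjecture \ref{con:SY-LUY}, which makes \emph{no} spin assumption on $M$, while your entire argument lives on the spinor bundle and requires $M$ (or at least a neighbourhood of $\mathcal{E}$ together with everything your Callias operator sees) to be spin. This is not a removable technicality: the spinorial/Callias route is exactly how Bartnik--Chru\'sciel and Cecchini--Zeidler resolved the \emph{spin case} of this conjecture, and it is the same framework the present paper uses for its own theorems, but the general conjecture remains open precisely because no one knows how to run such an argument without spinors in dimensions $\geq 8$ (for $n\leq 7$ the known proofs use $\mu$-bubbles or minimal-hypersurface compactifications instead). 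The paper itself records the non-spin statement as an open problem (cf.\ Conjecture \ref{conj:flat}), so a proof proposal that silently invokes a Dirac operator on $M$ cannot establish the stated result; at best it establishes the known spin special case.

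Within the spin setting your outline is essentially correct and coincides with the strategy of Cecchini--Zeidler and of this paper: a Callias potential $f$ vanishing near $\mathcal{E}$ and coercive elsewhere, the Weitzenb\"ock identity $\mathcal{D}_f^*\mathcal{D}_f=\nabla^*\nabla+\tfrac14 R+f^2\pm\rho\,c(\nabla f)$, chirality-type boundary conditions on the truncating hypersurfaces, solvability of the mixed boundary value problem, and recovery of the ADM mass from the boundary term at infinity. Two smaller points to be careful about even there: first, the positivity of $f^2-|\nabla f|$ pointwise is not by itself enough to make $\mathcal{D}_f$ invertible or to justify discarding the interior ends in the limit $R\to\infty$; one needs a quantitative coercivity (a strictly positive lower bound outside a compact set, or a weighted Poincar\'e inequality as in Proposition \ref{pro:existence-lemma}) to get uniform estimates and to rule out a nontrivial kernel. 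Second, in the paper's (and Cecchini--Zeidler's) implementation the potential is taken with \emph{compact} support concentrated on an annular shield separating $\mathcal{E}$ from the arbitrary ends, rather than growing to infinity along them; your version with $f\to\infty$ on the bad ends can be made to work but requires extra care with essential self-adjointness and with the vanishing of the inner boundary terms, which you assert rather than argue.
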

Using $\mu$-bubbles, Lesourd, Unger, and Yau \cite{LUY21} proved Conjecture \ref{con:SY-LUY} for $n\leq 7$, assuming that the chosen end $\mathcal{E}$ is asymptotically Schwarzschild. Later, the extra assumption was relaxed in \cite{LLU22}. On the other hand, Bartnik and Chru\'{s}ciel \cite{BC03,BC05} gave a different proof of Conjecture \ref{con:SY-LUY} via Witten's approach in the spin setting. Moreover, they proved that the ADM mass of the end $\mathcal{E}$ is zero if and only if $(M,g)$ is isometric to the Euclidean space. Recently, Cecchini and Zeidler \cite{CZ21b} addressed Conjecture \ref{con:SY-LUY} by using Callias operators beyond Witten's spinorial method. Subsequently, the cases of asymptotically hyperbolic ends and asymptotically flat initial data ends were respectively extended by Chai-Wan \cite{CW22} and Cecchini-Lesourd-Zeidler \cite{CLZ23} using similar techniques. We also note that a proof of Conjecture \ref{con:SY-LUY} for $n\leq 7$ based on the idea of density theorem and Lohkamp's compactification was found by Zhu \cite{Zhu22}.

In \cite{ABdL16}, Almaraz, Barbosa, and de Lima introduced a variation of the classical positive mass theorem for asymptotically flat manifolds carrying a non-compact boundary with three different proofs via the arguments of Schoen-Yau \cite{SY79a,SY81a}, Witten \cite{Wit81}, and Miao \cite{Mia02} respectively. Another proof of the positive mass theorem of such manifolds was given by Chai \cite{Cha18} using free boundary minimal surfaces. Batista and de Lima \cite{BdL23} recently also provided a proof based on a harmonic function technique in \cite{BKKS22}. The case of asymptotically hyperbolic spin manifolds was presented in \cite{AdL20}. Almaraz, de Lima, and Mari \cite{AdLM21} extended the results in \cite{ABdL16,AdL20} to the setting of spin initial data sets. For the asymptotically flat case, they showed that 
\begin{theorem}[{\cite{AdLM21}}]\label{thm:AdLM21-flat}
Let $(M^{n\geq 3},g,k,\mathcal{E},\Sigma)$ be an $n$-dimensional complete asymptotically flat spin initial data set with a non-compact boundary such that the interior dominant energy condition $\mu-|J|\geq 0$ and the tangential boundary dominant energy condition $H_{\Sigma} \geq |k(\eta,\cdot)^{\top}|$. Then $E_{\mathcal{E}} \geq |\widehat{P}_{\mathcal{E}}|$.
\end{theorem}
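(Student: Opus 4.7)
The plan is to adapt Witten's spinorial proof of the spacetime positive mass theorem to the manifold-with-boundary setting, following the template of \cite{AdLM21}. Let $(M,g,k,\mathcal{E},\Sigma)$ be the initial data set, and let $\mathbb{S} \to M$ be a complex spinor bundle equipped with the Dirac-Witten connection $\widehat{\nabla}_X = \nabla_X + \tfrac{1}{2} k(X,\cdot)\cdot e_0 \cdot$, where $e_0$ stands for Clifford multiplication by the hypothetical timelike normal in the doubled spin bundle. Let $\mathcal{D} = \sum e_i \cdot \widehat{\nabla}_{e_i}$ be the associated Dirac-Witten operator. The central analytic identity is the Schrödinger-Lichnerowicz-Weitzenböck formula
\begin{equation*}
\mathcal{D}^2 = \widehat{\nabla}^{\,*}\widehat{\nabla} + \tfrac{1}{2}\bigl(\mu + J \cdot e_0\cdot\bigr),
\end{equation*}
whose curvature endomorphism is non-negative (as a symmetric operator on spinors) exactly when the interior dominant energy condition $\mu \geq |J|$ holds.

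Next I would set up the mixed boundary value problem on the non-compact manifold $(M,\Sigma)$. On the non-compact boundary $\Sigma$, with inward unit normal $\eta$, I would impose a local chirality-type boundary condition of MIT-bag form adapted to the initial data, namely a pointwise projection $\mathcal{B}\psi|_\Sigma = 0$ built out of $\eta\cdot e_0\cdot$ (and possibly a sign) so that the boundary integrand arising from Green's identity,
\begin{equation*}
\int_{\Sigma} \langle \eta\cdot \widehat{\nabla}\psi + (\text{tangential Dirac}\,)\psi,\psi\rangle,
\end{equation*}
can, after the usual spinorial manipulations, be reduced to an expression of the form $\tfrac{1}{2}\int_{\Sigma}\bigl(H_{\Sigma} - \langle k(\eta,\cdot)^{\top}, \sigma_\psi\rangle\bigr)|\psi|^2$ for some tangential one-form $\sigma_\psi$ of unit length built from $\psi$. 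The tangential boundary dominant energy condition $H_\Sigma \geq |k(\eta,\cdot)^\top|$ is precisely what makes this boundary contribution non-negative. Verifying ellipticity and self-adjointness of $(\mathcal{D},\mathcal{B})$, together with the Fredholm/isomorphism property in weighted Sobolev spaces suitable for the asymptotically flat end $\mathcal{E}$, is the key functional-analytic input; this uses the spinorial Poincaré/Hardy inequality on $M$ together with the positivity of the boundary quadratic form to rule out harmonic spinors in the kernel.

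Given the isomorphism statement, I would solve, for each constant spinor $\psi_0$ on $\mathcal{E}$, the problem
\begin{equation*}
\mathcal{D}\psi = 0 \text{ on } M, \quad \mathcal{B}\psi|_\Sigma = 0, \quad \psi - \psi_0 \in H^1_{-\tau}(M;\mathbb{S}),
\end{equation*}
with a suitable weight $\tau > (n-2)/2$. Integrating $|\mathcal{D}\psi|^2 = 0$ over a large compact exhaustion $\Omega_R \subset M$ and letting $R\to\infty$, the Weitzenböck identity yields
\begin{equation*}
0 = \int_{M}\bigl(|\widehat{\nabla}\psi|^2 + \tfrac{1}{2}\langle(\mu + J\cdot e_0\cdot)\psi,\psi\rangle\bigr) + \int_{\Sigma}(\cdots) + \lim_{R\to\infty}\int_{S_R}(\cdots),
\end{equation*}
where the asymptotic boundary term at infinity is, by the standard Witten computation of \cite{PT82,Bar86}, exactly $\tfrac{1}{4}\omega_{n-1}\bigl(E_{\mathcal{E}}|\psi_0|^2 - \langle \widehat{P}_{\mathcal{E}}\cdot \psi_0,\psi_0\rangle\bigr)$. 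Combining the non-negativity of all three remaining terms (interior DEC, boundary DEC, and the non-negative gradient term) and maximizing over unit $\psi_0$ yields $E_{\mathcal{E}} \geq |\widehat{P}_{\mathcal{E}}|$.

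The main obstacle, and where the bulk of the technical work lies, is twofold. Firstly, constructing the correct local boundary projection $\mathcal{B}$ so that (i) the pair $(\mathcal{D},\mathcal{B})$ is elliptic and formally self-adjoint in the relevant weighted Sobolev scale, and (ii) the boundary integrand from Green's identity is bounded below by the scalar expression $H_\Sigma - |k(\eta,\cdot)^\top|$, requires a delicate matching of the Clifford algebraic structures on $M$ and $\Sigma$ together with the initial data $k$. Secondly, proving the solvability of the mixed boundary problem with the prescribed asymptotics is non-trivial because standard elliptic theory for manifolds with non-compact boundary must be combined with weighted estimates at the asymptotically flat end, and one must verify that the kernel of $(\mathcal{D},\mathcal{B})$ is trivial in the decaying class, which itself relies on the very boundary and bulk positivity being established.
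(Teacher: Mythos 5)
Your proposal is essentially the approach of \cite{AdLM21} that the paper quotes for this theorem, and it coincides with the machinery the paper itself develops in Section \ref{sec:pre} specialized to $\alpha=\tfrac{\pi}{2}$, $f\equiv 0$ and a single complete end: the Dirac--Witten operator on the ($\mathbb{Z}_2$-graded) spinor bundle, the Schr\"odinger--Lichnerowicz formula with curvature term $\tfrac12(\mu+J_i c(e^i)\sigma)$, a chirality/MIT-bag condition on $\Sigma$ reducing the boundary integrand to $\tfrac12(H_\Sigma-|k(\eta,\cdot)^{\top}|)|u|^2$ as in \eqref{eq:term-of-noncompact-boundary}, and the Witten asymptotic computation at infinity. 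The one point you gloss over is that the constant spinor $\psi_0$ must itself satisfy the chirality condition at infinity (cf.\ \eqref{eq:BC-Q-bar}), so the final extremization runs over this constrained family rather than over all unit spinors --- which is exactly why the conclusion is $E_{\mathcal{E}}\geq|\widehat{P}_{\mathcal{E}}|$ and not $E_{\mathcal{E}}\geq|P_{\mathcal{E}}|$, see \eqref{eq:mass1}--\eqref{eq:mass2}.
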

We refer to Section \ref{sec:pre} for precise statements of terms used above. Here the notion of completeness of an initial data set with boundary refers to metrically completeness, and from now on we will just use this short version for convenience. Moreover, Almaraz, de Lima, and Mari \cite{AdLM21} proved the rigidity statement of Theorem \ref{thm:AdLM21-flat}, that is, if $E_{\mathcal{E}}=0$ then $(M,g)$ may be isometrically embedded in the Minkowski space with second fundamental form $k$, and $\Sigma$ is totally geodesic (as a hypersurface in $M$) lies on the boundary of Minkowski half-space. 

Recently, Chai \cite{Cha23} proved the following spacetime positive mass theorem.
\begin{theorem}[{\cite{Cha23}}]
Let $(M^{n\geq 3},g,k,\mathcal{E},\Sigma)$ be an $n$-dimensional complete asymptotically flat spin initial data set with a non-compact boundary such that the interior dominant energy condition $\mu-|J|\geq 0$ and the normal boundary dominant energy condition $\theta_{\Sigma}^{\pm} \geq 0$. Then $E_{\mathcal{E}} \pm (P_{\mathcal{E}})_{n} \geq 0$.
\end{theorem}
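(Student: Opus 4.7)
The plan is to adapt Witten's spinorial argument to initial data sets with a non-compact boundary. The starting point is the Dirac-Witten operator $\widehat{\mathcal{D}}$ acting on the restriction to $M$ of the complex spinor bundle of the ambient Lorentzian development, built from the modified connection $\widehat{\nabla}_X \phi = \nabla_X \phi + \tfrac{1}{2} k(X, e_j)\, e_j \cdot e_0 \cdot \phi$. The Schr\"odinger-Lichnerowicz-Weitzenb\"ock formula
\begin{equation*}
    \widehat{\mathcal{D}}^{\,2} \;=\; \widehat{\nabla}^{\ast}\widehat{\nabla} \;+\; \tfrac{1}{2}\bigl(\mu - J \cdot e_0 \cdot \bigr)
\end{equation*}
shows that the zeroth-order term is a pointwise non-negative symmetric endomorphism once the interior dominant energy condition $\mu \geq |J|$ holds.

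Next I would impose a chirality (MIT-bag type) local boundary condition $Q^{\pm}\phi = \phi$ along $\Sigma$, where the involution $Q^{\pm}$ on spinors is built from Clifford multiplication by the outward conormal $\nu$ and the future unit normal $e_0$, with the sign adjusted to match the $\pm$ in the statement. A direct pointwise computation along $\Sigma$ then shows that, on spinors satisfying this condition, the boundary integrand arising from integration by parts of the Weitzenb\"ock formula equals a positive multiple of
\begin{equation*}
    \bigl(H_{\Sigma} \pm \operatorname{tr}_{\Sigma} k\bigr)\, |\phi|^{2} \;=\; (n-1)\,\theta^{\pm}_{\Sigma}\, |\phi|^{2},
\end{equation*}
hence pointwise non-negative under the normal boundary dominant energy condition $\theta^{\pm}_{\Sigma} \geq 0$.

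With these algebraic and curvature ingredients in place, the analytic core is the solvability of the following boundary value problem: given any constant spinor $\psi_0$ at infinity in $\mathcal{E}$, find $\phi$ on $M$ with $\widehat{\mathcal{D}}\phi = 0$, $Q^{\pm}\phi = \phi$ on $\Sigma$, and $\phi - \psi_0$ lying in a weighted Sobolev space encoding the asymptotic flatness of $\mathcal{E}$. I would establish this by showing that $\widehat{\mathcal{D}}$ with the boundary condition $Q^{\pm}$ is Fredholm of index zero between the appropriate weighted spaces, and that its kernel is trivial: any kernel element $\phi$ satisfies the integrated Weitzenb\"ock identity with zero left-hand side, which forces $\widehat{\nabla}\phi = 0$ and then $\phi = 0$ by the asymptotic decay. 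Applying integration by parts globally, passing to the limit of large coordinate hemispheres in $\mathcal{E}$, and invoking the usual Witten-Parker-Taubes computation of the boundary term at infinity then yields
\begin{equation*}
    4\omega_{n-1}\bigl(E_{\mathcal{E}}\,|\psi_0|^{2} + \langle \psi_0,\, (P_{\mathcal{E}})_{j}\, e_j \cdot e_0 \cdot \psi_0 \rangle\bigr) \;\geq\; 0,
\end{equation*}
and choosing $\psi_0$ to be a unit eigenspinor of $e_n \cdot e_0 \cdot$ gives $E_{\mathcal{E}} \pm (P_{\mathcal{E}})_n \geq 0$.

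The main obstacle will be the analysis of the mixed boundary value problem on a manifold with a non-compact boundary. One must verify that $Q^{\pm}$ defines a Lopatinski-Shapiro elliptic boundary condition for $\widehat{\mathcal{D}}$ along all of $\Sigma$, set up the correct weighted Sobolev framework for the Fredholm and regularity theory, and control the analysis along the non-compact corner where $\Sigma$ meets infinity in $\mathcal{E}$. The last point also governs the asymptotic expansion of $\phi$ needed to extract $E_{\mathcal{E}} \pm (P_{\mathcal{E}})_n$ from the boundary term, and it requires that $\Sigma$ be asymptotic to a coordinate hyperplane with sufficient decay, as encoded in the definition of asymptotic flatness for manifolds with non-compact boundary.
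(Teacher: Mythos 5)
Your proposal is correct in outline, and its spin-geometric core coincides with what this paper (and its source \cite{Cha23,AdLM21}) uses: the Dirac--Witten operator, the Schr\"odinger--Lichnerowicz formula whose zeroth-order term is controlled by $\mu-|J|$, a chirality boundary condition along $\Sigma$ built from Clifford multiplication by the conormal and the timelike direction (the $\alpha=0$ case of the operator $Q$ in \eqref{eq:defnQ}), the resulting boundary integrand $\tfrac12\theta_\Sigma^{\pm}|\phi|^2$, and the Witten--Parker--Taubes limit at infinity, where the eigenspinor condition $\overline{Q}u_0=\pm u_0$ both is forced by the asymptotic boundary condition and kills the tangential momentum pairing, leaving $E_{\mathcal{E}}\pm(P_{\mathcal{E}})_n$. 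Where you diverge is in the analytic framework for producing the harmonic spinor: you propose the classical weighted-Sobolev Fredholm route (index zero plus kernel vanishing via the integrated Weitzenb\"ock identity), which is essentially the original argument of \cite{AdLM21,Cha23} and is entirely adequate for the complete, single-end statement at hand. The paper instead reaches this theorem as the $\alpha=0$ specialization of Theorem \ref{thm:chairesultwithbends}, solving a mixed boundary value problem for the Dirac--Witten operator with a Callias potential via the Grosse--Nakad/B\"ar--Ballmann theory (Proposition \ref{pro:existence-lemma} and Theorem \ref{thm:appendix-thm}); the Callias potential is superfluous for your setting but is exactly what lets the paper shield arbitrary ends and prove the quantitative shielding theorem. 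Two small points to tidy: the identity $(H_{\Sigma}\pm\operatorname{tr}_{\Sigma}k)|\phi|^2=(n-1)\theta_{\Sigma}^{\pm}|\phi|^2$ is off by the normalization of the mean curvature (with the paper's convention $\theta_\Sigma^{\pm}=H_\Sigma\pm\operatorname{tr}_\Sigma k$, the boundary integrand is $\tfrac12\theta_\Sigma^{\pm}|\phi|^2$) — this is cosmetic and does not affect the sign argument; and, as you anticipate, the genuine technical burden is the corner where $\Sigma$ meets infinity, both for the Fredholm theory and for extracting the extra term $-\int_{S^{n-2}_r}g_{\alpha n}\vartheta^{\alpha}$ in the energy \eqref{defn:energy}, which must be handled as in \cite{ABdL16,AdLM21}.
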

Chai \cite{Cha23} also generalized the tangential and normal boundary dominant energy condition to the tilted boundary dominant energy condition and proved the non-negativity of a new mass type invariant in general.
\begin{theorem}[{\cite{Cha23}}]
Let $(M^{n\geq 3},g,k,\mathcal{E},\Sigma)$ be an $n$-dimensional complete asymptotically flat spin initial data set with a non-compact boundary such that the interior dominant energy condition $\mu-|J|\geq 0$ and the tilted boundary dominant energy condition $H_{\Sigma} \pm \cos\alpha\, {\rm tr}_{\Sigma}k \geq \sin\alpha\, |k(\eta, \cdot)^{\top}|$ for some $\alpha\in [0,\frac{\pi}{2}]$. Then $E_{\mathcal{E}} \pm \cos\alpha\, (P_{\mathcal{E}})_n \geq \sin \alpha\, |\widehat{P}_{\mathcal{E}}|$.
\end{theorem}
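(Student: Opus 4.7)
The plan is to follow Witten's spinorial approach, adapted to an initial data set, in the spirit of Almaraz--de Lima--Mari and Chai for the boundary component and of Cecchini--Zeidler and Cecchini--Lesourd--Zeidler for the arbitrary-ends component. The main object is the hypersurface Dirac--Witten operator $\mathcal{D}=\sum_i e_i\cdot\nabla^{DW}_{e_i}$, augmented by a Callias potential $\Psi=f\,e_0\cdot$ whose support avoids a neighbourhood of $\mathcal{E}$. The Weitzenb\"ock--Schr\"odinger--Lichnerowicz formula for $\mathcal{D}_f:=\mathcal{D}+\Psi$ produces a bulk zeroth-order term whose non-negativity is governed by $\mu-|J|\ge 0$, plus contributions quadratic in $f$ and in $df$; for $f$ chosen to vanish near $\mathcal{E}$ and to dominate the indefinite Weitzenb\"ock terms uniformly on the other ends, the resulting mixed boundary value problem on $M$ is Fredholm in weighted Sobolev spaces and admits, for any prescribed constant spinor $\psi_0$ at infinity, a unique solution asymptotic to $\psi_0$ in $\mathcal{E}$ and controlled by the Callias decay elsewhere.

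\textbf{Boundary condition and integral identity.} On $\Sigma$ I would impose $\chi_\alpha\psi=\psi$, where $\chi_\alpha$ is the self-adjoint involution obtained by rotating the ``normal'' boundary chirality (which yields $E_\mathcal{E}\pm(P_\mathcal{E})_n\ge 0$ at $\alpha=0$) towards the ``tangential'' one (which yields Theorem \ref{thm:AdLM21-flat} at $\alpha=\pi/2$) by angle $\alpha$ in the plane spanned by $e_0$ and the outward unit normal $\eta$; the two signs $\pm$ in the statement correspond to the two orientations of that rotation. One checks that $\chi_\alpha$ is self-adjoint and strongly elliptic in the B\"ar--Ballmann sense, so that Green's formula applied to $\|\mathcal{D}_f\psi\|^2$ over a large compact exhaustion decomposes cleanly. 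After the Clifford bookkeeping on the $\chi_\alpha$-eigenspace, the $\Sigma$-integrand reduces to
\[
\tfrac12\bigl(H_\Sigma\pm\cos\alpha\,\mathrm{tr}_\Sigma k\bigr)|\psi|^2-\tfrac12\sin\alpha\,\mathrm{Re}\,\bigl\langle k(\eta,\cdot)^\top\!\cdot\psi,\,\psi\bigr\rangle,
\]
which is non-negative by the tilted BDEC combined with the Clifford estimate $|\langle v\cdot\psi,\psi\rangle|\le|v|\,|\psi|^2$ applied to $v=k(\eta,\cdot)^\top$. The Callias contribution is non-negative by construction, and the asymptotic term in $\mathcal{E}$, by the classical Parker--Taubes--Bartnik asymptotic evaluation on the limit spinor $\psi_0$, converges to a positive multiple of $\bigl(E_\mathcal{E}\pm\cos\alpha\,(P_\mathcal{E})_n\bigr)|\psi_0|^2-\sin\alpha\,\mathrm{Re}\,\langle\widehat P_\mathcal{E}\cdot\psi_0,\psi_0\rangle$. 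Varying $\psi_0$ over the unit spin sphere yields the advertised inequality.

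\textbf{Main obstacle.} I expect the principal difficulty to be the simultaneous compatibility, on a manifold that carries both a non-compact boundary and topologically arbitrary additional ends, of: (a) the Callias potential $\Psi=f\,e_0\cdot$, whose cutoff function $f$ must vanish near $\mathcal{E}$ so as not to perturb the ADM limit, yet dominate the indefinite Weitzenb\"ock contributions uniformly on the other ends; (b) the tilted boundary projection $\chi_\alpha$, whose Clifford-algebraic structure must be compatible with the factor $e_0\cdot$ appearing in $\Psi$ in order to preserve self-adjointness of the boundary problem; and (c) the weighted Sobolev existence theory, which must combine the interior and $\Sigma$-boundary DEC terms with the Callias term into a coercivity estimate strong enough to produce a solution asymptotic to any prescribed $\psi_0$. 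Once this analytic foundation is established, the Clifford bookkeeping at $\Sigma$ and the ADM asymptotic analysis are essentially standard and mirror the arguments of \cite{AdLM21,Cha23}.
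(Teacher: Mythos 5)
Your proposal follows essentially the same route as the paper: the Dirac--Witten operator augmented by a Callias potential $f\sigma$ supported away from $\mathcal{E}$, the tilted chirality boundary condition on $\Sigma$ interpolating between the normal ($\alpha=0$) and tangential ($\alpha=\pi/2$) conditions (the paper's operator $Q=\cos\alpha\,\sigma c(\eta^{\flat})+\sqrt{-1}\sin\alpha\,c(\eta^{\flat})$), the Weitzenb\"ock identity in which the interior DEC and the tilted boundary DEC control the bulk and $\Sigma$-terms, and the evaluation of the asymptotic boundary term on a constant spinor to extract $E_{\mathcal{E}}\pm\cos\alpha\,(P_{\mathcal{E}})_n-\sin\alpha\,|\widehat{P}_{\mathcal{E}}|$. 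The only cosmetic difference is that the paper packages the final step as a contrapositive argument with a sequence of blowing-up potentials (needed for the arbitrary-ends strengthening), and fixes a specific $u_0$ and sign of $\alpha$ rather than varying $\psi_0$, but the analytic content is the same.
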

In this paper, we would like to establish some general results of the spacetime positive mass theorem based on the ideas of Chai \cite{Cha23} and Cecchini-Lesourd-Zeidler \cite{CLZ23} after introducing some technical modifications. 

We now state the first main result of this paper, a spacetime positive mass theorem for asymptotically flat initial data sets with arbitrary ends and a non-compact boundary under the tilted boundary dominant energy condition. The technical terms are defined in Section \ref{sec:pre}.
\begin{theorem}\label{thm:chairesultwithbends}
Let $(M^{n\geq 3},g,k,\mathcal{E},\Sigma)$ be an $n$-dimensional complete asymptotically flat spin initial data set with \textbf{arbitrary ends} and a non-compact boundary such that the interior dominant energy condition $\mu-|J|\geq 0$ and the tilted boundary dominant energy condition $H_{\Sigma} \pm \cos\alpha\, {\rm tr}_{\Sigma}k \geq \sin\alpha\, |k(\eta, \cdot)^{\top}|$ for some $\alpha\in [0,\frac{\pi}{2}]$. Then $E_{\mathcal{E}} \pm \cos\alpha\, (P_{\mathcal{E}})_n \geq \sin \alpha\, |\widehat{P}_{\mathcal{E}}|$.
\end{theorem}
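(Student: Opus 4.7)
The strategy is to combine two spinorial ingredients: the Dirac--Witten operator endowed with a Callias potential, used by Cecchini--Lesourd--Zeidler~\cite{CLZ23} to isolate a single asymptotically flat end among arbitrarily many, and the $\alpha$-tilted chiral boundary condition introduced by Chai~\cite{Cha23} to handle the tilted boundary dominant energy condition along $\Sigma$. I would work on the hypersurface spinor bundle $\mathscr{S}$ of $(M,g)$ built from the spacetime Clifford algebra (so that a timelike Clifford multiplication by $e_0$ is available in addition to the spatial ones) and tensor it with a $\mathbb{Z}_2$-graded auxiliary bundle $V=V^+\oplus V^-$ carrying an odd, parallel, self-adjoint endomorphism $\chi$ that anticommutes with Clifford multiplication. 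The modified operator is
\[
\mathcal{D} := \slashed{D}_k + f\,\chi,
\]
where $f\in C^\infty(M)$ vanishes on a neighbourhood of the distinguished end $\mathcal{E}$, equals a large positive constant outside a compact region, and has $|\mathrm{d}f|$ concentrated in a thin transition shell $U$. Along $\Sigma$ I impose the tilted mixed local boundary condition $B_\alpha\psi|_\Sigma=0$, with
\[
B_\alpha := \tfrac12\bigl(\mathrm{Id}-Q_\alpha\bigr),\qquad Q_\alpha := \pm\cos\alpha\,e_0\cdot\eta\cdot +\sin\alpha\,\eta\cdot ,
\]
reducing to Chai's normal projection at $\alpha=0$ and to the tangential (MIT-bag-type) projection at $\alpha=\pi/2$; I arrange $\chi$ to commute with $Q_\alpha$ so that $(\mathcal{D},B_\alpha)$ remains a symmetric boundary value problem.

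\textbf{Weitzenb\"ock estimate.}
The Dirac--Witten Weitzenb\"ock identity yields
\[
\mathcal{D}^2\psi = \nabla^*\nabla\psi + \tfrac12\bigl(\mu - J\cdot e_0\cdot\bigr)\psi + f^2\psi + (\nabla f)\cdot\chi\,\psi.
\]
Integrating $|\mathcal{D}\psi|^2-|\nabla\psi|^2$ over $M$ against a section with $B_\alpha\psi|_\Sigma=0$ produces three favourable contributions plus one boundary integral at infinity of $\mathcal{E}$. The interior DEC makes $\tfrac12(\mu-J\cdot e_0\cdot)$ nonnegative as an operator. Chai's boundary computation, carried out with $B_\alpha$ in place of his tangential or normal projections, gives
\[
\mathrm{Re}\langle\eta\cdot\slashed{D}_k\psi,\psi\rangle_\Sigma \ge \tfrac12\bigl(H_\Sigma\pm\cos\alpha\,\mathrm{tr}_\Sigma k - \sin\alpha\,|k(\eta,\cdot)^\top|\bigr)|\psi|^2,
\]
which is nonnegative by the tilted boundary DEC. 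Finally, for $f$ chosen sufficiently large on its support, the Callias term $f^2-|\nabla f|$ is strictly positive outside $U$ and only needs to absorb $|\nabla f|$ on the single shell $U$, as in~\cite{CZ21b,CLZ23}.

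\textbf{Fredholm problem and extraction of the mass.}
Given a constant spinor $\psi_0$ on the Euclidean model of $\mathcal{E}$, let $\eta_\mathcal{E}\psi_0$ be a cutoff extension to $M$ and seek $\phi$ in suitable weighted Sobolev spaces solving
\[
\mathcal{D}\phi = -\mathcal{D}(\eta_\mathcal{E}\psi_0),\qquad B_\alpha\phi|_\Sigma = 0.
\]
Solvability follows from a Fredholm analysis for $(\mathcal{D},B_\alpha)$: the boundary \emph{a priori} estimate of Chai and Almaraz--de~Lima--Mari~\cite{Cha23,AdLM21} controls $\phi$ near $\Sigma$, while the Callias coercivity of Cecchini--Zeidler~\cite{CZ21b,CLZ23} controls $\phi$ on the arbitrary ends. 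The spinor $\widetilde\psi:=\eta_\mathcal{E}\psi_0+\phi$ is asymptotic to $\psi_0$ on $\mathcal{E}$, decays exponentially on $M\setminus\mathcal{E}$, and satisfies $\mathcal{D}\widetilde\psi=0$. Substituting $\widetilde\psi$ into the integrated Weitzenb\"ock identity and letting the exterior sphere in $\mathcal{E}$ tend to infinity sends the three nonnegative bulk/boundary terms and $|\nabla\widetilde\psi|^2$ to the positive side, leaving the ADM surface integral
\[
4\omega_{n-1}\bigl\langle(E_\mathcal{E}\pm\cos\alpha\,(P_\mathcal{E})_n - \sin\alpha\,\widehat{P}_\mathcal{E}\cdot)\psi_0,\psi_0\bigr\rangle \ge 0.
\]
Optimising over $\psi_0$ in the relevant irreducible Clifford representation, exactly as in~\cite{Cha23,AdLM21}, yields the inequality asserted in the theorem.

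\textbf{Main obstacle.}
The principal difficulty is designing the mixed boundary value problem so that all three ingredients cooperate simultaneously: the Callias endomorphism $\chi$ must commute with $Q_\alpha$ in the correct graded sense for $B_\alpha$ to be self-adjoint with respect to $\mathcal{D}$, and the weighted Sobolev framework must accommodate both the non-compact boundary $\Sigma$ crossing the arbitrary ends and the asymptotic geometry of $\mathcal{E}$. Once the Fredholm theory and the boundary Weitzenb\"ock identity are in place in that functional framework, the remainder is a careful fusion of Chai's tilted boundary computation with the Cecchini--Lesourd--Zeidler localisation trick.
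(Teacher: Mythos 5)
Your proposal is correct in substance and runs on the same core machinery as the paper: the Dirac--Witten operator twisted by a Callias potential acting on the $\mathbb{Z}_2$-graded bundle $\mathcal{S}_X\oplus\mathcal{S}_X$, Chai's tilted chirality condition along $\Sigma$ (the paper's $Q$ in \eqref{eq:defnQ} is your $Q_\alpha$, with the factor $\sqrt{-1}$ on the $\sin\alpha$ term needed to make it a self-adjoint involution), the integrated Lichnerowicz identity \eqref{eq:CLZ23-spectral-estimate}, the boundary computation \eqref{eq:term-of-noncompact-boundary}, and the choice of asymptotic constant section realizing $\sin|\alpha|\,|\widehat{P}_{\mathcal{E}}|$ as in \eqref{eq:mass2}. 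Where you genuinely differ is the architecture of the localisation/existence step. You follow the Cecchini--Zeidler route: one fixed potential $f$ vanishing near $\mathcal{E}$ and equal to a large constant at infinity of the arbitrary ends, a direct Fredholm argument on the complete manifold, and the mass read off from a single harmonic spinor. The paper instead proves the contrapositive (Theorem \ref{thm:flat--gener-PTM-arends}): it truncates $M$ at finite distance $\varrho$, imposes a second chirality condition $\chi$ on the resulting compact boundary $\partial M_\varrho\setminus\Sigma$, runs a sequence of cotangent-type potentials $f_{\varrho,j}$ blowing up there, normalises $\max_{\Omega_\varrho}|u_{\varrho,j}|=1$, and extracts a contradiction by a weak-compactness argument. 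Both routes work; yours is the shorter path to the qualitative inequality, while the paper's truncation-and-blow-up scheme is precisely what also yields the quantitative shielding theorem (Theorem \ref{thm:PMT-with-dimension-zero-submanifolds}) with the explicit threshold \eqref{eq:shieldDN}. One point you should make explicit: $\mathrm{supp}(f)$ must be disjoint from \emph{all} of $\Sigma$, not merely from its asymptotic part inside $\mathcal{E}$, since otherwise the term $\tfrac{n-1}{n}\int_\Sigma f\langle c(\eta^\flat)\sigma u,u\rangle\,dA$ survives in \eqref{eq:mass-spectral-estimate} and, by \eqref{eq:cal-2}, contributes $\mp\cos\alpha\int_\Sigma f|u|^2\,dA$ with no favourable sign; this is why the hypothesis $\mathrm{supp}(f)\cap\Sigma=\emptyset$ is carried through Propositions \ref{pro:spectral-estimate} and \ref{pro:existence-lemma}.
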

As a direct consequence of Theorem \ref{thm:chairesultwithbends}, we have
\begin{corollary}
Let $(M^{n\geq 3},g,k,\mathcal{E},\Sigma)$ be an $n$-dimensional complete asymptotically flat spin initial data set with \textbf{arbitrary ends} and a non-compact boundary such that the interior dominant energy condition $\mu-|J|\geq 0$. Then the following holds:
\begin{itemize}
\item[(1)] 
if the tangential boundary dominant energy condition $H_{\Sigma} \geq |k(\eta,\cdot)^{\top}|$, then $E_{\mathcal{E}} \geq |\widehat{P}_{\mathcal{E}}|$,
\item[(2)]
if the normal boundary dominant energy condition $\theta_{\Sigma}^{\pm} \geq 0$, then $E_{\mathcal{E}} \pm (P_{\mathcal{E}})_{n} \geq 0$.
\end{itemize}
\end{corollary}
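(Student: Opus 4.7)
The plan is to derive the corollary by specializing the tilt angle $\alpha$ in Theorem \ref{thm:chairesultwithbends} to its two extreme values. Since both the hypothesis (the tilted boundary dominant energy condition) and the conclusion are continuous in $\alpha$, the endpoints $\alpha=0$ and $\alpha = \pi/2$ exactly recover the normal and tangential versions respectively.

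For part (1), I would plug in $\alpha = \pi/2$. Then $\cos\alpha = 0$ and $\sin\alpha = 1$, so the tilted boundary DEC $H_{\Sigma} \pm \cos\alpha\, \mathrm{tr}_{\Sigma}k \geq \sin\alpha\, |k(\eta,\cdot)^{\top}|$ collapses to the tangential boundary DEC $H_{\Sigma} \geq |k(\eta,\cdot)^{\top}|$, and the inequality $E_{\mathcal{E}} \pm \cos\alpha\,(P_{\mathcal{E}})_n \geq \sin\alpha\, |\widehat{P}_{\mathcal{E}}|$ becomes $E_{\mathcal{E}} \geq |\widehat{P}_{\mathcal{E}}|$. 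Theorem \ref{thm:chairesultwithbends} then gives the claim directly.

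For part (2), I would set $\alpha = 0$, so $\cos\alpha = 1$ and $\sin\alpha = 0$. The tilted boundary DEC becomes $H_{\Sigma} \pm \mathrm{tr}_{\Sigma}k \geq 0$, which (after verifying from Section \ref{sec:pre} that $\theta_{\Sigma}^{\pm} = H_{\Sigma} \pm \mathrm{tr}_{\Sigma}k$ in our sign convention for the outward normal $\eta$) is precisely the normal boundary DEC $\theta_{\Sigma}^{\pm} \geq 0$. The corresponding conclusion reduces to $E_{\mathcal{E}} \pm (P_{\mathcal{E}})_n \geq 0$, again by Theorem \ref{thm:chairesultwithbends}.

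There is no genuine obstacle: the corollary is a formal consequence of the main theorem, and the only care needed is bookkeeping of the sign conventions relating $\theta_{\Sigma}^{\pm}$ to $H_{\Sigma}$ and $\mathrm{tr}_{\Sigma}k$, which is a matter of consulting the definitions fixed in Section \ref{sec:pre}. The $\pm$ sign in part (2) propagates directly from the $\pm$ in Theorem \ref{thm:chairesultwithbends}, so both inequalities $E_{\mathcal{E}} + (P_{\mathcal{E}})_n \geq 0$ and $E_{\mathcal{E}} - (P_{\mathcal{E}})_n \geq 0$ are obtained simultaneously under the respective assumptions on $\theta_{\Sigma}^{+}$ or $\theta_{\Sigma}^{-}$.
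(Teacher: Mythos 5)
Your proposal is correct and matches the paper, which presents the corollary as an immediate consequence of Theorem \ref{thm:chairesultwithbends} obtained by setting $\alpha=\pi/2$ (tangential case) and $\alpha=0$ (normal case, using $\theta_{\Sigma}^{\pm}=H_{\Sigma}\pm{\rm tr}_{\Sigma}k$ from the definitions in Section \ref{sec:pre}). No further comment is needed.
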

\begin{remark}
We would like to point the reader to the paper \cite{Liu+} for a proof of this result in dimension three using a harmonic level set approach.
\end{remark}
The non-spin case of Theorem \ref{thm:chairesultwithbends} is still unknown, but we also think that as in the spin case should hold. So we propose the following open conjecture.
\begin{conjecture}\label{conj:flat}
Theorem \ref{thm:chairesultwithbends} holds even if $M$ is not spin.
\end{conjecture}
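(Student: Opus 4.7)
Without the spin assumption Witten's argument is unavailable, so my plan is to combine three purely geometric techniques in the low-dimension range $3\leq n\leq 7$: the Schoen--Yau Jang equation reduction of initial data sets to the Riemannian setting, a capillary/free-boundary variant to cope with the non-compact boundary $\Sigma$ and the intermediate tilt angle $\alpha$, and the $\mu$-bubble/MOTS machinery of Lesourd--Unger--Yau and Cecchini--Zeidler to handle the arbitrary additional ends. The restriction $n\leq 7$ comes from the regularity theory of area-minimising hypersurfaces; the dimensions $n\geq 8$ would require a capillary extension of the Schoen--Yau or Lohkamp higher-dimensional compactification and should be treated as a separate open problem.

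\textbf{Step 1 -- capillary Jang reduction.} First solve Jang's equation on an exhaustion of $\mathcal{E}$ with the oblique boundary condition $\langle\nu_{\mathrm{graph}},\eta\rangle=\pm\cos\alpha$ along $\Sigma$, which recovers the Dirichlet blow-up treatment of \cite{Cha23} as $\alpha\to 0$ and the Neumann treatment of \cite{AdLM21} as $\alpha\to\pi/2$. A Schoen--Yau blow-up analysis, suitably modified for this capillary condition, should give a solution $f$ outside a singular set consisting of stable capillary MOTS meeting $\Sigma$ at angle $\alpha$. On the complement the graph metric $\tilde g=g+df\otimes df$ has non-negative scalar curvature in the distributional sense (interior DEC), and a direct computation identifies the tilted boundary dominant energy condition with non-negativity of the mean curvature of the graph of $\Sigma$; the graph of $\mathcal{E}$ remains asymptotically flat with ADM mass equal to $E_{\mathcal{E}}\pm\cos\alpha\,(P_{\mathcal{E}})_n-\sin\alpha\,|\widehat{P}_{\mathcal{E}}|$.

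\textbf{Step 2 -- capillary $\mu$-bubbles with arbitrary ends.} The problem is then reduced to a Riemannian asymptotically flat manifold of non-negative scalar curvature, with arbitrary additional ends and a non-compact boundary of non-negative mean curvature, for which we must show non-negativity of the ADM mass at $\mathcal{E}$. Assuming a counterexample, I would run a capillary version of the $\mu$-bubble argument of Cecchini--Zeidler and Lesourd--Unger--Yau by minimising
\[
\mathcal{A}_h(\Omega)=\mathcal{H}^{n-1}(\partial^{*}\Omega\cap\mathring M)-\cos\alpha\,\mathcal{H}^{n-1}(\partial^{*}\Omega\cap\Sigma)-\int_{\Omega} h\,d\mathcal{H}^n
\]
with a weight $h$ engineered from the negative-mass hypothesis, producing a stable capillary hypersurface of contact angle $\alpha$ that separates $\mathcal{E}$ from the other ends. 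Iterated dimensional descent (conformal reduction inside the capillary minimiser, as in Schoen--Yau and \cite{LUY21,LLU22}) brings us to $n=3$, where the harmonic-level-set approach referenced in the remark (\cite{Liu+}) closes the argument. Arbitrary extra ends are accommodated by a Zhu-style density step and Lohkamp-style compactification performed prior to the Jang step, so that those ends are confined to a region of strictly positive scalar curvature.

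\textbf{Main obstacle.} The decisive difficulty is the capillary Jang problem for $\alpha\in(0,\pi/2)$. The two boundary extremes are in the literature, but the intermediate oblique condition is not, and producing (i) an existence/capillary-blow-up dichotomy at a non-compact $\Sigma$, (ii) regularity of $f$ up to $\Sigma$ under the oblique condition, and (iii) an asymptotic expansion of $f$ at infinity compatible with the tilted ADM identity will require genuinely new estimates. A secondary but non-trivial issue is the regularity theory of capillary $\mu$-bubble minimisers up to $\Sigma$ as $n$ approaches $7$: the free-boundary singular set theory must be extended from the perpendicular to the oblique-contact case, which is not entirely standard.
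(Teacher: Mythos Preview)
The statement you are addressing is labelled \emph{Conjecture} in the paper, and the paper explicitly says ``The non-spin case of Theorem~\ref{thm:chairesultwithbends} is still unknown, but we also think that as in the spin case should hold. So we propose the following open conjecture.'' There is no proof in the paper to compare against; the authors do not claim any argument for the non-spin case.

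What you have written is therefore not a proof but a research outline for an open problem, and you are candid about this: you restrict to $3\le n\le 7$, you identify the capillary Jang equation with oblique contact angle $\alpha\in(0,\pi/2)$ along a non-compact $\Sigma$ as the decisive missing ingredient, and you flag the regularity of capillary $\mu$-bubbles near dimension $7$ as a secondary gap. Both assessments are fair. The oblique Jang boundary problem you describe---existence/blow-up dichotomy, boundary regularity, and an asymptotic expansion yielding the tilted mass identity $E_{\mathcal{E}}\pm\cos\alpha\,(P_{\mathcal{E}})_n-\sin\alpha\,|\widehat{P}_{\mathcal{E}}|$---is not in the literature, and neither is the capillary $\mu$-bubble descent at general $\alpha$. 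So your proposal is a plausible program, but it is several papers away from a proof even in low dimensions, and it would not settle the conjecture as stated (all $n\ge 3$) in any case.
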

In fact, with a little more care, we prove the strengthening of Theorem \ref{thm:chairesultwithbends}. 
\begin{theorem}\label{thm:flat--gener-PTM-arends}
Let $(\mathcal{E}^{n\geq 3},g,k)$ be an $n$-dimensional asymptotically flat initial data end with a non-compact boundary $\Sigma$ such that $E_{\mathcal{E}} \pm \cos\alpha\, (P_{\mathcal{E}})_n < \sin \alpha\, |\widehat{P}_{\mathcal{E}}|$ for some $\alpha\in [0,\frac{\pi}{2}]$. 
If $(M,g,k)$ is an $n$-dimensional initial data set contains $(\mathcal{E},g,k)$ as an open subset, then there exists an open neighborhood $\mathcal{U}$ around $\mathcal{E}$ in $M$ such that at least one of the following conditions must be violated:
\begin{itemize}
\item[(1)] $\overline{\mathcal{U}}$ is complete,
\item[(2)] $\mu-|J|\geq 0$ on $\mathcal{U}$,
\item[(3)] $H_{\Sigma}\pm \cos\alpha\, {\rm tr}_{\Sigma}k \geq \sin\alpha\, |k(\eta, \cdot)^{\top}|$,  
\item[(4)] $\mathcal{U}$ is spin.
\end{itemize}
\end{theorem}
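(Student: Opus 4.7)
The plan is to argue by contrapositive: suppose that some open neighborhood $\mathcal{U}$ of $\mathcal{E}$ in $M$ satisfies all of conditions (1)--(4) simultaneously, and deduce the reverse mass inequality $E_\mathcal{E} \pm \cos\alpha\, (P_\mathcal{E})_n \geq \sin\alpha\, |\widehat{P}_\mathcal{E}|$, which directly contradicts the standing hypothesis on $\mathcal{E}$.

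The substance of the argument is the observation that, under (1)--(4), the closure $\overline{\mathcal{U}}$ together with the restrictions of $g$ and $k$ is itself a valid input for Theorem \ref{thm:chairesultwithbends}. Specifically, $\overline{\mathcal{U}}$ is an $n$-dimensional complete (by (1)) spin (by (4)) asymptotically flat initial data set with non-compact boundary $\Sigma \cap \overline{\mathcal{U}}$ (inherited from $\Sigma$), in which $(\mathcal{E}, g, k)$ plays the role of the distinguished asymptotically flat end, while any remaining topology and geometry of $\overline{\mathcal{U}}$ is absorbed into the "arbitrary ends" regime. Condition (2) supplies the interior dominant energy condition on $\overline{\mathcal{U}}$, and condition (3) is exactly the tilted boundary dominant energy condition on $\Sigma \cap \overline{\mathcal{U}}$. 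Applying Theorem \ref{thm:chairesultwithbends} to the tuple $(\overline{\mathcal{U}}, g, k, \mathcal{E}, \Sigma \cap \overline{\mathcal{U}})$ then produces the required mass inequality, giving the contradiction.

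The main (and essentially the only) obstacle is verifying that $\overline{\mathcal{U}}$ really is a manifold-with-boundary of the type required by Theorem \ref{thm:chairesultwithbends}: one must rule out any spurious finite-distance boundary component at $\partial_M \mathcal{U} \setminus \Sigma$, since such a component would be a genuine boundary of $\overline{\mathcal{U}}$ on which neither the interior DEC nor a boundary DEC is posited. This is precisely what condition (1) buys us: completeness of $\overline{\mathcal{U}}$ as a metric space forbids finite-distance cut-offs, so $\Sigma \cap \overline{\mathcal{U}}$ is the only effective boundary, while any infinite-distance ambient complications are harmlessly absorbed by the arbitrary-ends formalism of Theorem \ref{thm:chairesultwithbends}. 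Once this check is made precise, the rest of the deduction is immediate, and Theorem \ref{thm:flat--gener-PTM-arends} follows as a reformulation of Theorem \ref{thm:chairesultwithbends} localized around the end $\mathcal{E}$.
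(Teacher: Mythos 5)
Your proposed reduction has a genuine gap, and it occurs exactly at the point you flag as ``the main (and essentially the only) obstacle.'' Completeness of $\overline{\mathcal{U}}$ does \emph{not} rule out a finite-distance boundary component $\partial\overline{\mathcal{U}}\setminus\Sigma$: a manifold with boundary can be metrically complete while having a large compact boundary (e.g.\ a closed metric ball, or $\overline{U}_0$ in Theorem \ref{thm:PMT-with-dimension-zero-submanifolds}, which is explicitly assumed to be ``a complete manifold with boundary'' and still carries the compact boundary $\partial\overline{U}_0\setminus\Sigma$). Since $M$ is an arbitrary initial data set — possibly incomplete, non-spin, or violating the dominant energy condition far from $\mathcal{E}$ — any neighborhood $\mathcal{U}$ of $\mathcal{E}$ on which (1)--(4) can plausibly hold is a proper subset, and truncating $M$ at finite distance creates a genuine compact boundary, not an ``arbitrary end.'' On that cut boundary no curvature or null-expansion condition is hypothesized, so $(\overline{\mathcal{U}},g,k,\mathcal{E},\Sigma\cap\overline{\mathcal{U}})$ is not an admissible input for Theorem \ref{thm:chairesultwithbends}, whose underlying manifold has $\Sigma$ as its only boundary. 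Handling this extra compact boundary with no assumption on it is the entire content of the theorem.

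The paper's proof addresses precisely this: it takes nested submanifolds $M_1\subset M_{\varrho}$ with ${\rm dist}_g(\partial M_1\setminus\Sigma,\partial M_{\varrho}\setminus\Sigma)>\varrho$, builds a sequence of Lipschitz Callias potentials $f_{\varrho,j}$ blowing up at $\partial M_{\varrho}\setminus\Sigma$ so that the boundary integrand $\Psi^{f_{\varrho,j}}_{\partial M_{\varrho}\setminus\Sigma}=\tfrac12\theta^{-}+\tfrac{n-1}{n}f_{\varrho,j}$ becomes nonnegative there regardless of $\theta^{-}$, solves the mixed boundary value problem of Proposition \ref{pro:existence-lemma} with the chirality conditions $\chi u=u$ and $Qu=\pm u$, and passes to the limit $j\to\infty$ in the spectral estimate \eqref{eq:spectral-estimate} to reach a contradiction with the strict mass inequality. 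There is also a circularity in your plan within this paper's logical architecture: Theorem \ref{thm:chairesultwithbends} is itself deduced from Theorem \ref{thm:flat--gener-PTM-arends} (its proof reads ``This directly follows from Theorem \ref{thm:flat--gener-PTM-arends}''), so it cannot be used as the engine for proving the latter.
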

We may use the same strategy to prove another main theorem of our paper for (unnecessarily complete) asymptotically flat initial data sets $(M,g,k)$ provided the tilted boundary dominant energy condition holds. 
In a sense, the occurrence of a shield of dominant energy (i.e., a subset of $M$ on which the dominant energy scalar $\mu-|J|$ has a positive lower bound) is a compromise of the incompleteness of $(M,g)$.
In the following result, the shield of dominant energy is a neck region on which the length of the neck is intimately related to the dominant energy scalar.
\begin{theorem}[Quantitative shielding theorem]\label{thm:PMT-with-dimension-zero-submanifolds}
Let $(M^{n \geq 3}, g, k)$ be an $n$-dimensional asymptotically flat initial data set with non-compact boundary, not assumed to be complete. Let $U_0, U_1$ be neighborhoods of an asymptotically flat end $\mathcal{E}$ with a non-compact boundary $\Sigma$ such that $\overline{U}_1\subset U_0$ and the closure $\overline{U_0\setminus \mathcal{E}}$ is compact. Suppose that $U_0$ is spin, and that $\overline{U}_0$ in $(M,g)$ is a complete manifold with boundary. Moreover, we assume that
\begin{itemize}
\item[(1)] the interior dominant energy condition $\mu-|J|\geq 0$ on $U_0$,
\item[(2)] the tilted boundary dominant energy condition
\begin{equation}
H_{\Sigma}\pm \cos\alpha\, {\rm tr}_{\Sigma}k \geq \sin\alpha\, |k(\eta, \cdot)^{\top}|
\end{equation}
on $\Sigma$ for some $\alpha\in [0,\frac{\pi}{2}]$,
\item[(3)] $\mu-|J|\geq \mathcal{Q}>0$ on $U_0\setminus \overline{U}_1$,
\item[(4)] for any fixed $d_0>0$, ${\rm dist}_g(\partial \overline{U}_1\setminus\Sigma, \partial \overline{U}_0\setminus\Sigma)> \sqrt{\frac{d_0}{\mathcal{Q}}}$.
\end{itemize}
Then $E_{\mathcal{E}} \pm \cos\alpha\, (P_{\mathcal{E}})_n \geq \sin \alpha\, |\widehat{P}_{\mathcal{E}}|$.
\end{theorem}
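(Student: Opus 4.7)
The plan is to graft the Callias–Dirac-Witten technique of Cecchini-Lesourd-Zeidler \cite{CLZ23} onto the tilted chiral boundary framework of Chai \cite{Cha23}. The role played by global completeness in the standard spin positive mass theorem will be taken over by a Callias potential supported on the shield $U_0\setminus \overline{U}_1$, calibrated so that the lower bound $\mathcal{Q}$ absorbs all terms involving derivatives of the cutoff, with the transition-length bound $(4)$ providing precisely the room needed.

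Concretely, I would first fix a smooth function $\psi:\overline{U}_0\to \mathbb{R}_{\geq 0}$ with $\psi\equiv 0$ on a neighborhood of $\overline{U}_1$, $\psi\equiv \Psi_0$ near $\partial\overline{U}_0\setminus \Sigma$, and $|d\psi|\leq C\Psi_0/L$ in the annular region, where $L={\rm dist}_g(\partial\overline{U}_1\setminus\Sigma,\partial\overline{U}_0\setminus\Sigma)>\sqrt{d_0/\mathcal{Q}}$, and choose $\Psi_0\asymp \sqrt{d_0\mathcal{Q}}$ so that $\Psi_0/L\leq \mathcal{Q}/2$. I would then work with the Dirac-Witten operator $\mathfrak{D}$ twisted by the Callias potential $\psi\mathscr{E}$, where $\mathscr{E}$ is a parallel self-adjoint involution of the spinor bundle anticommuting with $\mathfrak{D}$. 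On $\Sigma$ one imposes Chai's tilted chiral boundary projection determined by the angle $\alpha$, and on $\partial\overline{U}_0\setminus \Sigma$ the Callias-type boundary projection obtained from the spectral decomposition of $\psi\mathscr{E}$, which is well-defined there since $\psi\equiv \Psi_0>0$. The Weitzenb\"ock-Lichnerowicz identity for $\mathfrak{D}_\psi := \mathfrak{D} + \psi\mathscr{E}$ then reads schematically
\[
|\mathfrak{D}_\psi \phi|^2 = |\widehat{\nabla}\phi|^2 + \tfrac{1}{2}(\mu-|J|)|\phi|^2 + \psi^2|\phi|^2 + \langle c(d\psi)\mathscr{E}\phi,\phi\rangle + \text{div terms},
\]
so pointwise positivity of the bulk integrand reduces to $\mu-|J|\geq 0$ on $U_1$ (interior DEC) and to $\tfrac{1}{2}\mathcal{Q}+\psi^2\geq |d\psi|$ on the shield, which is exactly guaranteed by hypothesis $(4)$ after the above calibration.

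With the bulk integrand non-negative and (via the tilted boundary DEC paired with Chai's chiral projection on $\Sigma$, and via the spectral projection on $\partial\overline{U}_0\setminus\Sigma$) non-negative boundary contributions, the argument concludes along standard Witten lines: I would solve the mixed BVP $\mathfrak{D}_\psi \phi = 0$ with a prescribed constant-spinor asymptote $\phi_0$ on $\mathcal{E}$ using Fredholm theory for Callias operators on the complete manifold $\overline{U}_0$ with boundary, and then read off the mass inequality as the limit of the Witten boundary integral on large coordinate spheres in $\mathcal{E}$. The hard part will be the analytic setup of this mixed BVP along the corner $\partial\overline{U}_0\cap \Sigma$: one must verify that the tilted chiral projection and the Callias projection are jointly compatible and yield a well-posed, elliptic, self-adjoint problem there, and that the cokernel vanishes so that solvability with the prescribed asymptote is unobstructed. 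Since $\psi$ and $d\psi$ vanish near $\mathcal{E}$, the asymptotic analysis at infinity is governed entirely by $\mathfrak{D}$ and reproduces the end computation of \cite{Cha23}, so the new technical point is genuinely the corner analysis combined with the quantitative absorption estimate on the shield.
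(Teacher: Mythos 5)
Your overall architecture --- a Dirac--Witten operator with a Callias potential supported on the shield, Chai's tilted chiral condition on $\Sigma$, a Lichnerowicz-type identity, and a mixed boundary value problem --- is exactly the paper's. But there is a genuine gap in your treatment of the compact cut boundary $\partial\overline{U}_0\setminus\Sigma$. Under any chirality-type condition there, the boundary integrand produced by the integration by parts is (up to constants) $\tfrac12\theta^{-}_{\partial\overline{U}_0\setminus\Sigma}+\tfrac{n-1}{n}\psi$, and the theorem makes \emph{no} hypothesis on the null expansion of that hypersurface: it is an arbitrary cut of $M$, and $\theta^{-}$ can be as negative as it likes. To make that term non-negative you need $\psi\gtrsim|\theta^{-}|$ on $\partial\overline{U}_0\setminus\Sigma$, whereas your calibration $\Psi_0\asymp\sqrt{d_0\mathcal{Q}}$ is dictated solely by the bulk absorption and bears no relation to $\theta^{-}$. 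Worse, with a transition profile satisfying $|d\psi|\le C\Psi_0/L$ the bulk inequality forces $\Psi_0\le \mathcal{Q}L/(2C)$ (the constraint bites at the inner edge of the annulus, where $\psi\approx 0$ but $|d\psi|\approx\Psi_0/L$), so you cannot simply enlarge $\Psi_0$ to dominate $\theta^{-}$: a single bounded, roughly linear potential cannot do both jobs. In addition, ``the boundary projection obtained from the spectral decomposition of $\psi\mathscr{E}$'' is not a well-defined elliptic boundary condition as stated ($\psi\mathscr{E}$ is zeroth order), and whatever condition you impose must be checked to actually produce the $\theta^{-}$ term in a sign-controllable form.

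The paper resolves precisely this point by using a \emph{sequence} of cotangent-shaped Lipschitz potentials $f_j(p)=\tfrac{\pi}{2\omega}\cot\bigl(\tfrac{\pi}{2\omega}x(p)+\tfrac1j\bigr)$ with $x(p)={\rm dist}_g(p,\partial\overline{U}_0\setminus\Sigma)$, which blow up on $\partial\overline{U}_0\setminus\Sigma$ as $j\to\infty$ while satisfying the exact Riccati identity $f_j^2-|df_j|=-\pi^2/(4\omega^2)$ on the transition region; hypothesis (4) (i.e.\ $\omega>\sqrt{d_0/\mathcal{Q}}$) is what lets $\mathcal{Q}$ absorb this constant deficit, and the blow-up both makes $\tfrac12\theta^{-}+\tfrac{n-1}{n}f_j\ge 0$ for large $j$ and forces the trace of the limiting harmonic spinor to vanish on the cut boundary. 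The argument is then run by contradiction, with a normalization $\max_\Omega|u_j|=1$ and a weak-compactness passage to the limit, rather than by directly solving one BVP and reading off the mass. If you replace your linear $\psi$ by such a family (or at least by a cotangent-type profile of arbitrarily large height preserving the same lower bound on $f^2-|df|$), your outline becomes essentially the paper's proof. Your concern about the corner $\partial\overline{U}_0\cap\Sigma$ is legitimate in general but is sidestepped in the paper by arranging the compact cut boundary and ${\rm supp}(f_j)$ to be disjoint from $\Sigma$.
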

\begin{remark}
It should be noted that Theorem \ref{thm:PMT-with-dimension-zero-submanifolds} allows for the possibilities of incompleteness of $(M,g)$ and negativity of the dominant energy scalar, so long as this incompleteness and negativity of the dominant energy scalar are ``shielded'' from $U_0$. Moreover, we note that the hypotheses (3) and (4) of Theorem \ref{thm:PMT-with-dimension-zero-submanifolds} are equivalent to 
\begin{equation}\label{eq:shieldDN}
\mu-|J| > \frac{d_0}{ {\rm dist}_g^2(\partial \overline{U}_1\setminus\Sigma,\partial \overline{U}_0\setminus\Sigma)} \quad \text{on}\ U_0\setminus \overline{U}_1.
\end{equation}
It will be apparent from the proof of our quantitative shielding theorem that we can also accommodate the case when $(M,g,k)$ is a usual asymptotically flat spin initial data set. Therefore, in a way, the time-symmetric (i.e., $k\equiv 0$) case of inequality \eqref{eq:shieldDN} improves and sharpens the largeness assumptions about scalar curvature given in \cite[(1.3)]{LUY21} and \cite[(1.1)]{LLU22}. They proved analogs of our quantitative shielding theorem for asymptotically flat manifolds of dimensions $\leq 7$ via $\mu$-bubbles, except for the mass formula relating to the spectral estimates of Dirac-Witten operators with Callias potential, which has no dimensional constraint.
\begin{remark}
In contrast with the dominant energy shield \cite[Definition~1.13]{CLZ23} in the spin setting, the assumptions (3) and (4) in Theorem \ref{thm:PMT-with-dimension-zero-submanifolds} guarantee that we can construct a sequence of relatively easy and expectable Callias potential functions to relax the (outward) inward null expansion assumption of the compact boundary of the underlying initial data sets.
\end{remark}
\end{remark}
We emphasize that a key step in our approach is to find a solution to the mixed boundary value problem for the Dirac-Witten operator with a Callias potential on a complete connected asymptotically flat spin initial data set with a non-compact boundary and compact boundary (see Proposition \ref{pro:existence-lemma}). 
In order to solve such boundary value problems, a generalization with the mixed boundary conditions of the particular case of a theorem of Grosse and Nakad \cite{GN14} is enough to guarantee the existence of their solutions, cf. Theorem \ref{thm:appendix-thm}.
In the use of these solutions, we can obtain a sequence of spacetime harmonic spinors (i.e., the elements in the kernel of Dirac-Witten operators with a sequence of Callias potentials) for spinorial proofs of our main theorems, which are all asymptotic to a constant section of the relative Dirac bundle (see Subsection \ref{subsec:Dirac-Witten}).

This paper is organized as follows. In Section \ref{sec:pre}, we introduce some backgrounds and facts of technical preparations. In Section \ref{sec:flat}, we prove Theorem \ref{thm:flat--gener-PTM-arends} and Theorem \ref{thm:PMT-with-dimension-zero-submanifolds}. An essential mixed boundary value problem associated with the Dirac-Witten operator with a Callias potential is included in appendix \ref{sec:appendix}.

\section*{Acknowledgement}
\vspace{-.1in}
The author would like to sincerely thank Bo Liu (ECNU, Shanghai)  and Zhenlei Zhang (CNU, Beijing) for their constant encouragement and valuable discussions concerning an earlier version of this manuscript. The author is also very grateful to Xueyuan Wan (CQUT, Chongqing) for his helpful discussions and interest in this work. Additional thanks go to Pengshuai Shi (BIT, Beijing) for communicating the paper \cite{BB12} and Xiaoxiang Chai (POSTECH, Pohang) for the explanation of the case $\alpha=0$ of Theorem~\ref{thm:chairesultwithbends} in his preprint \cite{Cha23}.

%%%%%%%%%%%%%%%%%%%%%%%%%%%%%%%%%%%%%%%%%%%%%%%%%%%%%%%%%%%%%%%%%%%%%%%%%%%%%%%%%%%%%%%%%%%%%%%%%%%%%%%%%%%%%
%%%%%%%%%%%%%%%%%%%%%%%%%%%%%%%%%%%%%%%%%%%%%%%%%%%%%%%%%%%%%%%%%%%%%%%%%%%%%%%%%%%%%%%%%%%%%%%%%%%%%%%%%%%%%
%%%%%%%%%%%%%%%%%%%%%%%%%%%%%%%%%%%%%%%%%%%%%%%%%%%%%%%%%%%%%%%%%%%%%%%%%%%%%%%%%%%%%%%%%%%%%%%%%%%%%%%%%%%%%
%%%%%%%%%%%%%%%%%%%%%%%%%%%%%%%%%%%%%%%%%%%%%%%%%%%%%%%%%%%%%%%%%%%%%%%%%%%%%%%%%%%%%%%%%%%%%%%%%%%%%%%%%%%%%
\section{Preliminaries}\label{sec:pre}
In this section, we utilize the techniques of the Dirac-Witten operator with a Callias potential to study the ADM energy-momentum of asymptotically flat initial data sets with arbitrary ends and a non-compact boundary.
Such Callias-type operators were first used in \cite{Cec20,CZ21a,Zei20,Zei22} within the context of the time-symmetric version of three problems proposed by Gromov \cite{Gro18,Gro19}: the long neck problem, the band width inequality, and the width estimates of geodesic collar neighborhood of boundary. Applications to the spectral and non-time-symmetric cases of these problems were recently handled by Hirsch, Kazaras, Khuri, and Zhang \cite{HKKZ23} and by the author \cite{Liu23a,Liu23b}.
%%%%%%%%%%%%%%%%%%%%%%%%%%%%%%%%%%%%%%%%%%%%%%%%%%%%%%%%%%%%%%%%%%%%%%%%%%%%%%%%%%%%%%%%%%%%%%%%%%%%%%%%%%%%%
%%%%%%%%%%%%%%%%%%%%%%%%%%%%%%%%%%%%%%%%%%%%%%%%%%%%%%%%%%%%%%%%%%%%%%%%%%%%%%%%%%%%%%%%%%%%%%%%%%%%%%%%%%%%%
\subsection{Asymptotically flat initial data sets with arbitrary ends and a non-compact boundary}
Let us start by adapting well-known definitions, see e.g., \cite{Lee19}.
\begin{definition}
An \textit{initial data set} $(M, g, k)$ is a Riemannian manifold $(M,g)$ with a symmetric $(0,2)$-tensor $k$. The \textit{energy density} and \textit{momentum density} are respectively given by 
\begin{equation}
   \mu = \frac{1}{2} ({\rm scal}_g  + ({\rm tr}_g k)^2 - |k|_g^2),\quad J= {\rm div}_g k -{\rm d}({\rm tr}_g k),
\end{equation} 
where ${\rm scal}_g$ is the scalar curvature of $g$.
\end{definition}
\begin{definition}
We say that $(M,g,k)$ satisfies the \textit{interior dominant energy condition} if
\begin{equation}
\mu - |J| \geq 0
\end{equation}
everywhere along $M$.
\end{definition}
If $\partial M\neq \emptyset$, let $\eta$ be the outward unit normal of $\partial M$ in $M$, then the \textit{mean curvature} of $\partial M$ with respect to $\eta$ is defined as
\begin{equation}
    H_{\partial M} = {\rm tr}_{\partial M}(\nabla \eta)= \sum_{i=1}^{n-1}\langle e_i,\nabla_{e_i}\eta \rangle .
\end{equation}
where $\nabla \eta$ is the shape operator and $e_1,\cdots, e_{n-1}$ is a local orthonormal frame on $\partial M$.
\begin{definition}
Given an initial data set $(M,g,k)$ potentially with boundary $\partial M$, the \textit{outward (inward) null expansion} of a two-sided hypersurface $F$ in $M$ is defined by
\begin{equation}
\theta_{F}^{\pm} = H_{F} \pm {\rm tr}_{F}k.
\end{equation}
\end{definition}
\begin{definition}
We say that $(M,g,k)$ satisfies the \textit{normal boundary dominant energy condition} if
\begin{equation}\label{eq:nBDEC}
\theta_{\partial M}^{\pm} \geq 0
\end{equation}
everywhere along $\partial M$.
\end{definition}
In {\cite{AdLM21}}, Almaraz, de Lima, and Mari coped with initial data sets with boundary by introducing a so-called tangential boundary dominant energy condition.
\begin{definition}
We say that $(M,g,k)$ satisfies the \textit{tangential boundary dominant energy condition} if
\begin{equation}
H_{\partial M} \geq |k(\eta,\cdot)^{\top}|
\end{equation}
everywhere along $\partial M$, where $k(\eta,\cdot)^{\top}$ is the component of the $1$-form $k(\eta,\cdot)$ tangential to $\partial M$. 
\end{definition}
The following tilted boundary dominant energy condition in Chai \cite{Cha23} generalizes the tangential and normal boundary dominant energy conditions.
\begin{definition}
We say that $(M,g,k)$ satisfies the \textit{tilted boundary dominant energy condition} if
\begin{equation}\label{eq:defn-tilted}
H_{\partial M} \pm \cos\alpha\, {\rm tr}_{\partial M} k \geq \sin\alpha\, |k(\eta,\cdot)^{\top}|
\end{equation}
everywhere along $\partial M$, where $\alpha\in [0,\frac{\pi}{2}]$ is a constant angle and $k(\eta,\cdot)^{\top}$ is the component of the $1$-form $k(\eta,\cdot)$ tangential to $\partial M$.
\end{definition}
With the spirit to \cite{AdLM21,LUY21}, we may consider the following concept of an asymptotically flat initial data set with arbitrary ends and a non-compact boundary.
\begin{definition}
An $n$-dimensional initial data set $(M,g,k,\mathcal{E},\Sigma)$ with a non-compact boundary $\Sigma$ contained in a distinguished asymptotically flat initial data end $\mathcal{E}\subseteq M$ is called an \textit{asymptotically flat initial data set with arbitrary ends and a non-compact boundary} if 
% $(g,k)$ is locally $C^{2,\alpha}\times C^{1,\alpha}$ for some $0<\alpha<1$, 
$\mu$ and $J$ are integrable in $\mathcal{E}$, $H_{\Sigma}$ and $|k(\eta,\cdot)^{\top}|$ are integrable on $\Sigma$, and there exist $r_0>0$ and a diffeomorphism
\begin{equation}
\Phi:\mathcal{E} \overset{\cong}{\to} \mathbb{R}^n_{+,r_0}=\{x\in \mathbb{R}^n_{+}; r > r_0\}
\end{equation}
such that as $r \to \infty$, we have
% \begin{equation}
% |g_{ij}-\delta_{ij}|+ |r| |\partial g_{ij}| + |r|^2 |\partial^2 g_{ij}| + |r| |k| + |r|^2 |\partial k| = O(r^{-\tau})
% \end{equation}
\begin{equation}
|g_{ij}-\delta_{ij}|_{\delta}+ r |\partial g_{ij}|_{\delta} + r^2 |\partial^2 g_{ij}|_{\delta} = O(r^{-\tau}), \quad  |k_{ij}|_{\delta} + r |\partial k_{ij}|_{\delta} = O(r^{-\tau-1})
\end{equation}
for all $1\leq i,j\leq n$ and $\tau>\frac{n-2}{2}$, where $x=(x_1,\cdots,x_n)$ is the coordinate chart induced by $\Phi$, $r=|x|$, $\mathbb{R}^n_{+}$ is the closed Euclidean half-space with the standard flat metric $\delta$, and we have identified $g$ and $k$ with their pull-backs under $\Phi^{-1}$ for simplicity of notation. Here the phrase ``arbitrary ends'' means that one does not need to impose additional requirements on those ends other than $\mathcal{E}$.  
\end{definition}
Under these asymptotic expansions, we may assign to $(M,g,k,\mathcal{E},\Sigma)$ an ADM energy-momentum type asymptotic invariant as follows.
\begin{definition}[{\cite{AdLM21}}]
The \textit{ADM energy-momentum vector} $(E_{\mathcal{E}},P_{\mathcal{E}})$ of a distinguished asymptotically flat $\mathcal{E}$ with a non-compact boundary $\Sigma$ is defined as
\begin{equation}\label{defn:energy}
E_{\mathcal{E}} =\lim_{r\to \infty} \left\{ \int_{S^{n-1}_{r,+}} (g_{ij,j}-g_{jj,i})\nu^i dS^{n-1}_{r,+} - \int_{S^{n-2}_{r}} g_{\alpha n}\vartheta^{\alpha} dS^{n-2}_{r} \right\},
\end{equation}
and
\begin{equation}\label{defn:momentum}
(P_{\mathcal{E}})_{i} = \lim_{r\to \infty} 2 \int_{S^{n-1}_{r,+}} \pi_{ij} \nu^j dS^{n-1}_{r,+},
\end{equation}
where $S^{n-1}_{r,+} \subset \mathcal{E}$ is a large coordinate hemisphere of radius $r$ with outward unit normal $\nu$, and $\vartheta=\left.\nu\right|_{S^{n-2}_{r}}$ is the outward unit co-normal to $S^{n-2}_{r}=\partial S^{n-1}_{r,+}$, viewed as the boundary of the bounded region $\Sigma_{r}\subset \Sigma$ (see Figure \ref{figure:AF-arbitrary-ends-noncompact-boundary}), $\pi=k-({\rm tr}_{g}k)g$, $dS^{n-1}_{r,+}$ and $dS^{n-2}_{r}$ denote respectively the volume element on $S^{n-1}_{r,+}$ and $S^{n-2}_{r}$ with respect to the background Euclidean metric.
\end{definition}
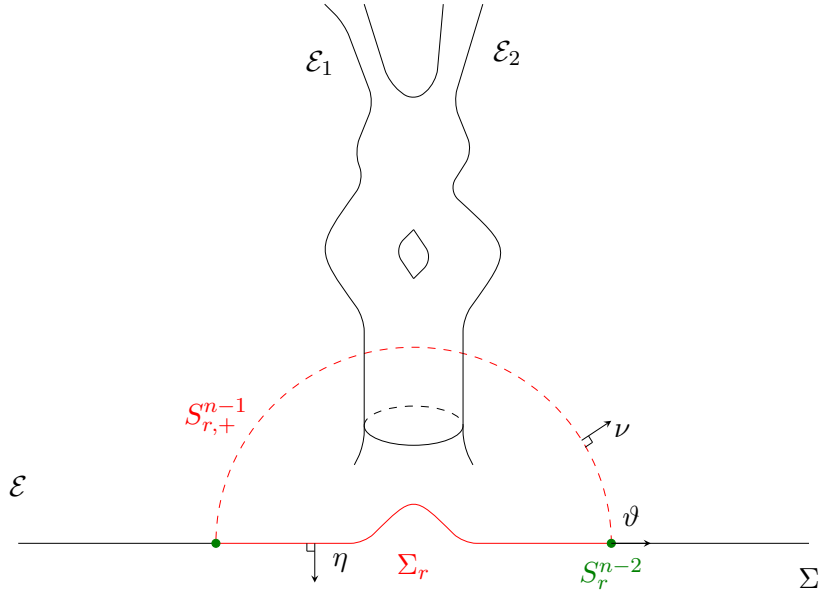
\begin{figure}[H]
   \centering
\begin{tikzpicture}[scale=1.3]
%%%--sphere:S^{n-1}_r----------------------------------
\draw[dashed,red,line width=0.2pt] (2,0) arc [start angle=0, end angle=180, x radius=2, y radius=2] node[above=1.3cm]{$S^{n-1}_{r,+}$};
%%%--bottem-line-----------------------------------------
\draw[rounded corners=5,line width=0.2pt] (-4,0) node[above=0.5cm]{$\mathcal{E}$}--(-2,0);
\draw[rounded corners=5,line width=0.2pt] (4,0) node[below=.2cm]{$\Sigma$} --(2,0);
\draw[rounded corners=5,red,fill=white,line width=0.2pt] (-2,0)--(-0.5,0).. controls (0,0.5) ..node[below=.5cm]{$\Sigma_r$} (0.5,0)--(2,0);
%%%--curves--------------------------------------------
\draw[rounded corners=5,line width=0.2pt] (-0.6,0.8)--(-0.5,1)--(-0.5,2.3).. controls (-1,3) ..(-0.5,3.7)--(-0.6,4)--(-0.4,4.5)--node[left=.2cm]{$\mathcal{E}_1$} (-0.7,5.3)--(-0.9,5.5);
\draw[rounded corners=5,line width=0.2pt] (0.6,0.8)--(0.5,1)--(0.5,2.3).. controls (1,3) ..(0.35,3.6)--(0.6,4)--(0.4,4.5)--node[right=.2cm]{$\mathcal{E}_2$} (0.7,5.5);
%%-------top-curve--------------------------------------
\draw[rounded corners=5,line width=0.2pt] (-0.5,5.5)--(-0.25,4.7)--(0,4.5)--(0.23,4.7)--(0.3,5.5);
%%-------genus-curves-----------------------------------
\draw[rounded corners=5,line width=0.2pt] (0,3.2)--(-0.2,3)--(0,2.7);
\draw[rounded corners=5,line width=0.2pt] (0,3.2)--(0.2,2.9)--(0,2.7);
%%%--circle--------------------------------------------
\draw[dashed,line width=0.2pt] (-0.5,1.2) arc [start angle=180, end angle=0, x radius=0.5, y radius=0.2];
\draw[line width=0.2pt] (0.5,1.2) arc [start angle=0, end angle=-180, x radius=0.5, y radius=0.2];
%%%--point:S^{n-2}_r-----------------------------------
\node[label=270:\textcolor{green!50!black}{$S^{n-2}_r$},green!50!black,circle, draw, fill,minimum size=3pt,inner sep=0pt] at (2,0) {};
\node[green!50!black,circle, draw, fill,minimum size=3pt,inner sep=0pt] at (-2,0) {};
%%%--vector:vartheta-----------------------------
\draw[-stealth,line width=0.2pt] (2,0)--node[above=0.1cm]{$\vartheta$}(2.4,0);
%%%--vector:nu-----------------------------------
\draw[-stealth,line width=0.2pt] (1.7,1.05)--node[right=0.1cm]{$\nu$}(2,1.25);
\draw[-,line width=0.2pt] (1.735,0.98)--(1.81,1.03);
\draw[-,line width=0.2pt] (1.81,1.03)--(1.77,1.095);
%%%--vector:eta-----------------------------------
\draw[-stealth,line width=0.2pt] (-1,0)--node[right=0.1cm]{$\eta$}(-1,-0.4);
\draw[-,line width=0.2pt] (-1.08,0)--(-1.08,-0.08);
\draw[-,line width=0.2pt] (-1.08,-0.08)--(-1,-0.08);
\end{tikzpicture}
\caption{An asymptotically flat initial data set with arbitrary ends $\mathcal{E}_1,\mathcal{E}_2$ and a non-compact boundary $\Sigma$ contained in a distinguished asymptotically flat initial data end $\mathcal{E}$.} \label{figure:AF-arbitrary-ends-noncompact-boundary}
\end{figure}
As in \cite{Cha23}, we denote $\widehat{P}_{\mathcal{E}}=((P_{\mathcal{E}})_1,\cdots, (P_{\mathcal{E}})_{n-1})$ and we conclude $(P_{\mathcal{E}})_{n}$ in the ADM momentum. This is different from \cite{AdLM21}. Note that $(E_{\mathcal{E}},P_{\mathcal{E}})$ is well defined, that is, the limits on the right-hand side of \eqref{defn:energy} and \eqref{defn:momentum} exist and their values do not depend on the choice of the induced coordinate chart on the distinguished asymptotically flat initial data end $\mathcal{E}$, cf. \cite[Corollary~3.4]{AdLM21}.

%%%%%%%%%%%%%%%%%%%%%%%%%%%%%%%%%%%%%%%%%%%%%%%%%%%%%%%%%%%%%%%%%%%%%%%%%%%%%%%%%%%%%%%%%%%%%%%%%%%%%%%%%%%%%
%%%%%%%%%%%%%%%%%%%%%%%%%%%%%%%%%%%%%%%%%%%%%%%%%%%%%%%%%%%%%%%%%%%%%%%%%%%%%%%%%%%%%%%%%%%%%%%%%%%%%%%%%%%%%
\subsection{Dirac-Witten operators with Callias potential}\label{subsec:Dirac-Witten}
In this subsection, we recall some basic facts about the Dirac-Witten operator with a Callias potential, cf. \cite{CZ21a}.

Let $(X,g,k)$ be a spin initial data set possibly with boundary. Let $\mathcal{S}_{X} \to X$ be the complex spinor bundle on $X$. Then $S:=\mathcal{S}_{X}\oplus\mathcal{S}_{X}$ becomes a $\mathbb{Z}_2$-graded Dirac bundle with the induced connection $\nabla=\nabla_{\mathcal{S}_{X}}\oplus \nabla_{\mathcal{S}_{X}}$ and the Clifford multiplication
\begin{equation}
     c(\xi)=\begin{pmatrix}
       0 & c_{\mathcal{S}}(\xi) \\
       c_{\mathcal{S}}(\xi) & 0
     \end{pmatrix},
\end{equation}
where $c_{\mathcal{S}}$ and $\nabla_{\mathcal{S}_{X}}$ are respectively the Clifford multiplication and connection on $\mathcal{S}_{X}$. Actually, $S$ is a relative Dirac bundle in the sense of Cecchini-Zeilder \cite{CZ21a} with the involution
\begin{equation}
     \sigma=\begin{pmatrix}
     0&-\sqrt{-1}\\
     \sqrt{-1} &0
     \end{pmatrix}.
\end{equation}
Let $\slashed{D}:C^{\infty}(X,\mathcal{S}_{X})\to C^{\infty}(X,\mathcal{S}_{X})$ is the spinor Dirac operator on $(X,g)$, where $C^\infty(X,\mathcal{S}_{X})$ denotes the space of all smooth sections of $\mathcal{S}_{X}$ over $X$. Then the Dirac operator on $S$ is given by
\begin{equation}
\mathcal{D}=\begin{pmatrix}
0&\slashed{D}\\
\slashed{D}&0
\end{pmatrix}.
\end{equation}
We define a new connection on $S$ by 
\begin{equation}
     \widetilde{\nabla}_{e_i} = \nabla_{e_i} + \frac{1}{2} k_{ij} c(e^j)\sigma,
\end{equation}
then the associated \textit{Dirac-Witten operator} is defined as
\begin{equation}
     \widetilde{\mathcal{D}} = c(e^i)\widetilde{\nabla}_{e_i}=\mathcal{D}-\frac{{\rm tr}_{g}k}{2}\sigma.
\end{equation}
For a Lipschitz function $f$ with compact support on $X$, we define the modified connection on $S$ by
\begin{equation}
    \widetilde{\nabla}_{e_i}^{f} = \nabla_{e_i} + \frac{1}{2} k_{ij} c(e^j)\sigma - \frac{f}{n}c(e^i)\sigma,
\end{equation}
then the associated \textit{Dirac-Witten operator with a Callias potential} (cf. \cite{CLZ23}) is defined as
\begin{equation}
     \widetilde{\mathcal{D}}^{f} = c(e^i)\widetilde{\nabla}_{e_i}^{f}=\mathcal{D}-\frac{{\rm tr}_{g}k}{2}\sigma + f\sigma =\widetilde{\mathcal{D}} + f\sigma.
\end{equation}
By direct calculations, we find the following basic fact (see e.g., \cite[Proposition~3.5]{CLZ23}).
\begin{lemma}
Let $X$ be a spin initial data set with boundary and let $\eta$ be the outward unit normal of $\partial X$. Let $C_c^{\infty}(X,S)$ denote the space of all smooth sections with compact support of $S$ over $X$. Then for any $u\in C_c^{\infty}(X,S)$, we have 
\begin{equation}\label{eq:CLZ23-spectral-estimate}
\begin{aligned}
\int_{X} |\widetilde{\mathcal{D}}^{f}u|^2 dV = & \int_{X} |\widetilde{\nabla}^{f}u|^2 dV  + \frac{1}{2} \int_{X} \langle (\mu + J_{i}c(e^i)\sigma ) u, u\rangle dV  \\
& + \frac{n-1}{n} \int_{X} \langle (f^2 +c({\rm d}f)\sigma) u, u\rangle dV \\
& - \int_{\partial X}\langle c(\eta^{\flat})\widetilde{\mathcal{D}} u + \widetilde{\nabla}_{\eta} u+ \frac{n-1}{n}f c(\eta^{\flat})\sigma u, u\rangle dA. 
\end{aligned}
\end{equation}
\end{lemma}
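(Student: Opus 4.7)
The plan is to follow the classical Weitzenböck--Schrödinger approach, first handling the $f=0$ case via the Dirac--Witten Bochner identity and then adding an explicit divergence correction to accommodate the Callias potential. The starting point is Witten's pointwise identity
\[
\widetilde{\mathcal{D}}^2 = \widetilde{\nabla}^*\widetilde{\nabla} + \tfrac{1}{2}(\mu + J_i c(e^i)\sigma),
\]
which follows from the Lichnerowicz formula $\mathcal{D}^2 = \nabla^*\nabla + \tfrac{1}{4}\mathrm{scal}_g$ together with the contributions of the $(0,2)$-tensor $k$ appearing in $\widetilde{\nabla}$: the combination of $\mathrm{scal}_g$, $({\rm tr}_g k)^2$, and $|k|^2$ recombines into $2\mu$, while the cross-commutator terms produce $J_i c(e^i)\sigma$; this is well documented in \cite{PT82,CLZ23}. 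Recasting the identity in divergence form via the $1$-form $\omega(X) := \langle \widetilde{\nabla}_X u + c(X^\flat)\widetilde{\mathcal{D}} u, u\rangle$, a direct computation gives $\mathrm{div}(\omega^\sharp) = |\widetilde{\nabla} u|^2 - |\widetilde{\mathcal{D}} u|^2 + \tfrac{1}{2}\langle(\mu + J_i c(e^i)\sigma)u, u\rangle$, and Stokes' theorem yields the $f=0$ version of the lemma.

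To extend to general $f$, I would first expand the two norms pointwise. Using $\sigma^2 = I$, $\sigma^* = \sigma$, the skew-adjointness $c(e_i)^* = -c(e_i)$, the anticommutation $\{\sigma, c(e_i)\} = 0$ (immediate from the matrix form of $\sigma$), and the identity $\sum_i c(e^i)\sigma\,\widetilde{\nabla}_{e_i} u = -\sigma\widetilde{\mathcal{D}} u$, one obtains
\[
|\widetilde{\mathcal{D}}^f u|^2 - |\widetilde{\nabla}^f u|^2 = (|\widetilde{\mathcal{D}} u|^2 - |\widetilde{\nabla} u|^2) + \tfrac{2(n-1)}{n} f\,\mathrm{Re}\langle \widetilde{\mathcal{D}} u, \sigma u\rangle + \tfrac{n-1}{n} f^2|u|^2.
\]
The first-order cross term $\tfrac{2(n-1)}{n} f\,\mathrm{Re}\langle \widetilde{\mathcal{D}} u, \sigma u\rangle$ cannot be absorbed into a pointwise zeroth-order expression, and its treatment is the key step.

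To handle it, I would introduce the corrected $1$-form $\omega^f(X) := \omega(X) + \tfrac{n-1}{n} f\langle c(X^\flat)\sigma u, u\rangle$, chosen precisely so that $\omega^f(\eta)$ reproduces the boundary integrand in the statement. A direct calculation using the Leibniz rule, parallelness of $\sigma$ and $c$, and the contraction identity above yields
\[
\mathrm{div}\bigl((\omega^f - \omega)^\sharp\bigr) = \tfrac{n-1}{n}\langle c(\mathrm{d}f)\sigma u, u\rangle - \tfrac{2(n-1)}{n} f\,\mathrm{Re}\langle \widetilde{\mathcal{D}} u, \sigma u\rangle.
\]
Adding this to $\mathrm{div}(\omega^\sharp)$, integrating via Stokes' theorem, and comparing with the pointwise expansion above, the first-order cross terms in $f\,\mathrm{Re}\langle\widetilde{\mathcal{D}} u, \sigma u\rangle$ cancel exactly, while the remaining zeroth-order terms reassemble into the interior expression $\tfrac{n-1}{n}\langle(f^2 + c(\mathrm{d}f)\sigma) u, u\rangle$ and the boundary term claimed in the lemma. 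The main technical obstacle is precisely this bookkeeping: choosing the correction $1$-form so that the naive first-order remainder $\tfrac{2f}{n}\sigma\widetilde{\mathcal{D}} u$ (which appears when one tries to compare $(\widetilde{\mathcal{D}}^f)^2$ with $(\widetilde{\nabla}^f)^*\widetilde{\nabla}^f$ pointwise) is absorbed into a boundary contribution rather than a spurious interior one.
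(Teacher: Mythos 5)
Your overall strategy (the $f=0$ Dirac--Witten Lichnerowicz identity plus a divergence correction for the Callias potential) is the standard and correct one, and your pointwise expansion of $|\widetilde{\mathcal{D}}^f u|^2-|\widetilde{\nabla}^f u|^2$ is right. The gap is in your displayed divergence identity for $\omega^f-\omega$. To compute $\mathrm{div}\bigl((\omega^f-\omega)^\sharp\bigr)$ you must apply the Leibniz rule with the metric spin connection $\nabla$, because $\widetilde{\nabla}_{e_i}=\nabla_{e_i}+\tfrac12 k_{ij}c(e^j)\sigma$ is \emph{not} compatible with the Hermitian metric: the correction $\tfrac12 k_{ij}c(e^j)\sigma$ is self-adjoint, not skew-adjoint. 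The contraction that then appears is $\sum_i c(e^i)\sigma\nabla_{e_i}u=-\sigma\mathcal{D}u$, not $-\sigma\widetilde{\mathcal{D}}u$, so the correct identity is
\begin{equation*}
\mathrm{div}\bigl((\omega^f-\omega)^\sharp\bigr)=\tfrac{n-1}{n}\langle c(\mathrm{d}f)\sigma u,u\rangle-\tfrac{2(n-1)}{n}\,f\,\mathrm{Re}\langle\mathcal{D}u,\sigma u\rangle .
\end{equation*}
Since $\mathrm{Re}\langle\widetilde{\mathcal{D}}u,\sigma u\rangle=\mathrm{Re}\langle\mathcal{D}u,\sigma u\rangle-\tfrac12\,\mathrm{tr}_g(k)\,|u|^2$, the first-order cross terms do \emph{not} cancel exactly; after integration a residual interior term
\begin{equation*}
-\frac{n-1}{n}\int_X f\,\mathrm{tr}_g(k)\,|u|^2\,dV
\end{equation*}
survives on the right-hand side. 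This term is genuinely nonzero: take $g=\delta$, $k=\lambda\delta$ with $\lambda$ a nonzero constant, $u$ a compactly supported multiple of a constant section, and $f\geq 0$ not identically zero on $\mathrm{supp}(u)$; then both sides of the stated identity can be evaluated explicitly and differ by exactly $-(n-1)\lambda\int f|u|^2$. Nothing in the hypotheses forces $\mathrm{tr}_g k$ to vanish on $\mathrm{supp}(f)$.

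Consequently your argument, once the divergence identity is corrected, proves the identity with the additional interior term $-\tfrac{n-1}{n}f\,\mathrm{tr}_g(k)|u|^2$ inserted alongside $f^2+c(\mathrm{d}f)\sigma$ --- which is the form of the Schr\"odinger--Lichnerowicz formula in the cited reference \cite{CLZ23}. In other words, the formula as displayed in the lemma omits this term, and your proof only closes because the substitution of $\widetilde{\mathcal{D}}$ for $\mathcal{D}$ in the divergence step compensates the omission. You should carry the extra term through explicitly; note that downstream this forces the replacement of $\Phi^f=\tfrac{n}{2(n-1)}(\mu-|J|)+f^2-|\mathrm{d}f|$ by a quantity controlling $f^2-|\mathrm{d}f|-f\,\mathrm{tr}_g(k)$ (e.g.\ via a bound involving $|f|\,|\mathrm{tr}_g(k)|$), unless one additionally assumes $\mathrm{tr}_g(k)=0$ on $\mathrm{supp}(f)$.
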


%%%%%%%%%%%%%%%%%%%%%%%%%%%%%%%%%%%%%%%%%%%%%%%%%%%%%%%%%%%%%%%%%%%%%%%%%%%%%%%%%%%%%%%%%%%%%%%%%%%%%%%%%%%%%
%%%%%%%%%%%%%%%%%%%%%%%%%%%%%%%%%%%%%%%%%%%%%%%%%%%%%%%%%%%%%%%%%%%%%%%%%%%%%%%%%%%%%%%%%%%%%%%%%%%%%%%%%%%%%
\subsection{ADM energy-momentum formula}
In this subsection, we use Proposition \ref{pro:existence-lemma} to show the relation between the ADM energy-momentum and spectral estimates of the Dirac-Witten operator with a Callias potential. Such ADM energy-momentum formula constitutes an essential ingredient of the proofs of our main theorems.

Let $(X,g,k)$ be a complete connected asymptotically flat spin initial data set with a non-compact boundary $\Sigma$ and compact boundary $\partial X\setminus \Sigma$. Assume that $M$ only has an asymptotically flat initial data end $\mathcal{E}$ with a non-compact boundary $\Sigma$. Let $\eta$ be the outward unit normal of $\partial X$. 
Let $f:X\to \mathbb{R}$ be a Lipschitz Callias potential function with compact support, which is disjointed with $\Sigma$. 
In order to obtain a spectral estimate inequality, our general strategy is to approximate $X$ by a family of compact connected regions $M_r$ with boundary, that is, $M_r$ becomes $X$ when $r\to \infty$. Then we apply \eqref{eq:CLZ23-spectral-estimate} to each of these approximations $M_r$, and finally we get the desired inequality by taking the limit.
Note that the boundary $\partial M_r=S^{n-1}_{r,+}\cup \Sigma_{r}\cup (\partial M_{r} \setminus (S^{n-1}_{r,+} \cup \Sigma_{r}))$, where $\Sigma_r=\partial M_r\cap \Sigma$, $S^{n-1}_{r,+}$ denotes the large coordinate hemisphere in the asymptotically flat initial data end $\mathcal{E}$, and $\partial M_{r} \setminus (S^{n-1}_{r,+} \cup \Sigma_{r})$ denotes the part of $\partial M_r$ independent of $r$ (see Figure \ref{fig:regionMr}). 
For any $u\in C^\infty(X,S)$, from \eqref{eq:CLZ23-spectral-estimate} we have
\begin{figure}
\begin{center}
\begin{tikzpicture}[scale=1]
\coordinate (A) at (0,3);
\coordinate (B) at (0,-3);
\coordinate (C) at (0,1.3);
\coordinate (D) at (0,-1.3);
%%---curves-point-----
\coordinate (E) at (0.5,0.5);
\coordinate (F) at (4.3,2.5);
\coordinate (G) at (0.5,-0.5);
\coordinate (H) at (2.9,-2.5);
\coordinate (I) at (5.05,2.5);
\coordinate (J) at (6.8,0.13);
\coordinate (K) at (3.7,-2.5);
\coordinate (L) at (6.75,-0.63);
%%%---curves------
\draw[dashed,red,fill=blue!5,line width=0.3pt] (C) arc [start angle=90, end angle=-90, x radius=1.3, y radius=1.3];
%%%---curves-------
\draw[line width=0.2pt,fill=blue!5] (E) to [bend right=40] (F) to [bend left=23] (I) to [bend right=40] (J) to [bend left=23] (L) to [bend right=23] (K) to [bend left=23] (H) to [bend right=23] (G);
\draw[dashed,red,line width=0.3pt] (C) arc [start angle=90, end angle=-90, x radius=1.3, y radius=1.3];
\draw[rounded corners=5,line width=0.2pt] (A) to (C);
\draw[rounded corners=5,line width=0.2pt] (B) to (D);
\draw[rounded corners=5,red,fill=white,line width=0.3pt] (C) to (0,0.6) to (0.25,0.4) to (0,0.2) to (0,-0.5) to (0.45,-0.55) to (0,-0.7) to (D);
%%---unit-vectors---
\draw[-stealth,line width=0.2pt] (0,1) to (-0.4,1);
\draw[line width=0.2pt] (-0.1,1) to (-0.1,0.9) to (0,0.9);
\draw[-stealth,line width=0.2pt] (D) to (0,-1.7);
\draw[-stealth,line width=0.2pt] (0.7,-1.1) to (0.86,-1.43);
\draw[line width=0.2pt] (0.628,-1.14) to (0.66,-1.21) to (0.74,-1.17);
%%%---circles----
%%%---right--circle--
\draw[rounded corners=5,blue,line width=0.3pt] (L) to [bend right=23] (J);
\draw[rounded corners=5,blue,line width=0.3pt] (L) to [bend left=23] (J);
%%%---bottem--circle--
\draw[rounded corners=5,blue,line width=0.3pt] (H) to [bend right=23] (K);
\draw[rounded corners=5,blue,line width=0.3pt] (H) to [bend left=23] (K);
%%%---top--circle--
\draw[rounded corners=5,blue,line width=0.3pt] (F) to [bend right=23] (I);
\draw[rounded corners=5,blue,line width=0.3pt] (F) to [bend left=23] (I);
%%%---genus------
\coordinate (M) at (4,0.5);
\coordinate (N) at (4.3,-0.5);
\coordinate (O) at (4.13,0.47);
\coordinate (P) at (4.35,-0.45);
\draw[fill=white,rounded corners=5,line width=0.2pt] (M) to [bend left=70] (N);
\draw[fill=white,rounded corners=5,line width=0.2pt] (O) to [bend right=60] (P);
%%%---labels-----
\node at (1.2,1.4) {\(\textcolor{red}{S^{n-1}_{r,+}}\)};
\node at (-0.7,-3) {\(\Sigma\)};
\node at (1,2.8) {\(\mathcal{E}\)};
\node at (-0.7,-1.3) {\textcolor{green!50!black}{\(S^{n-2}_{r}\)}};
\node at (8.7,-0.3)  {\(\textcolor{blue}{\partial M_{r}\setminus (S^{n-1}_{r,+}\cup \Sigma_r)}\)};
\node at (3,-0.2)  {\(M_r\)};
\node at (5.5,-1.7) {\(X\)};
\node at (-0.6,0)  {\(\textcolor{red}{\Sigma_r}\)};
\node at (-0.7,1)  {\(\eta\)};
\node at (0.2,-1.7)  {\(\vartheta\)};
\node at (1.05,-1.5) {\(\nu\)};
\node[circle,fill=green!50!black,minimum size=3pt,inner sep=0pt] at (C) {};
\node[circle,fill=green!50!black,minimum size=3pt,inner sep=0pt] at (D) {};
\end{tikzpicture}
\caption{The region $M_r$ in $X$.}\label{fig:regionMr}
\end{center}
\end{figure}
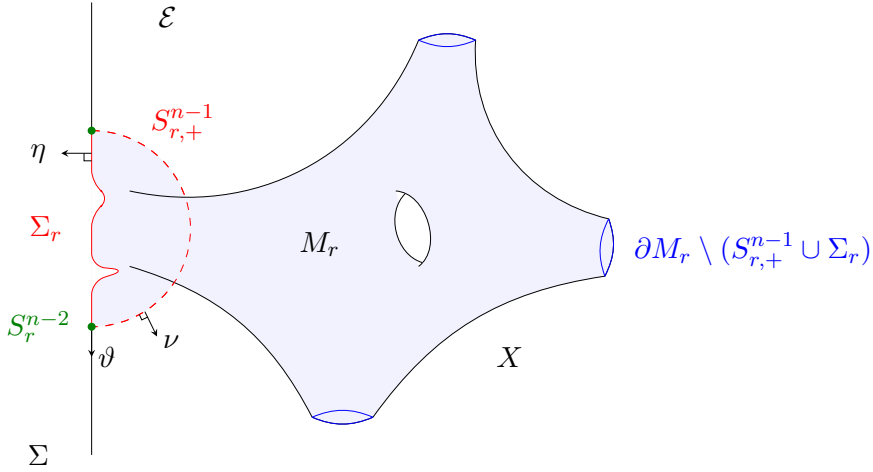
\begin{equation}\label{eq:mass-spectral-estimate}
\begin{aligned}
\int_{M_r} |\widetilde{\mathcal{D}}^{f} & u|^2 dV =\int_{M_r} |\widetilde{\nabla}^{f}u|^{2} dV  + \frac{1}{2} \int_{M_r} \langle (\mu + J_{i}c(e^i)\sigma ) u, u\rangle dV \\
&+ \frac{n-1}{n} \int_{M_r} \langle (f^2 +c({\rm d}f)\sigma) u, u\rangle dV \\
&- \int_{\partial M_r\setminus S^{n-1}_{r,+} }\langle c(\eta^{\flat})\widetilde{\mathcal{D}} u + \widetilde{\nabla}_{\eta} u+ \frac{n-1}{n}f c(\eta^{\flat})\sigma u, u\rangle dA \\
&-\int_{S^{n-1}_{r,+}}\langle c(\nu^{\flat})\widetilde{\mathcal{D}} u + \widetilde{\nabla}_{\nu} u, u\rangle dA - \frac{n-1}{n} \underbrace{ \int_{S^{n-1}_{r,+} }\langle f c(\nu^{\flat})\sigma u, u\rangle dA }_{= 0 \, \text{for}\, r\gg 1},
\end{aligned}
\end{equation}
where $\nu$ is the unit normal of $S^{n-1}_{r,+}$ pointing to the infinity.

Consider the Dirac bundle $\left.S\right|_{\partial M_r\setminus S^{n-1}_{r,+}}$ with the Clifford multiplication and connection
\begin{equation}
     c^{\partial}(e^i)=-c(e^i)c(\eta^{\flat}),\quad \nabla^{\partial}_{e_i}=\nabla_{e_i}-\frac{1}{2}c^{\partial}(\nabla_{e_i}\eta^{\flat}).
\end{equation}
The corresponding boundary Dirac operator 
\begin{equation}
     \mathcal{A}: C^{\infty}(\partial M_r\setminus S^{n-1}_{r,+}, \left.S\right|_{\partial M_r\setminus S^{n-1}_{r,+}}) \to  C^{\infty}(\partial M_r\setminus S^{n-1}_{r,+}, \left.S\right|_{\partial M_r\setminus S^{n-1}_{r,+}}) 
\end{equation}
is given by
\begin{equation}
\mathcal{A}:= \sum_{i=1}^{n-1} c^{\partial}(e^i) \nabla^{\partial}_{e_i}.
\end{equation}
Then we have
\begin{equation}\label{eq:boundary-Dirac-identity}
\begin{aligned}
\mathcal{A} & = \frac{1}{2} H_{\partial M_r\setminus S^{n-1}_{r,+} } + c(\eta^{\flat}) \mathcal{D} + \nabla_{\eta}\\
&= \frac{1}{2} H_{\partial M_r\setminus S^{n-1}_{r,+}} + c(\eta^{\flat})(\widetilde{\mathcal{D}} + \frac{{\rm tr}_{g} k}{2}\sigma ) + \widetilde{\nabla}_{\eta} - \frac{1}{2} k_{j\eta} c(e^j) \sigma \\
&= \frac{1}{2} H_{\partial M_r\setminus S^{n-1}_{r,+}}  + \frac{{\rm tr}_{g} k}{2}c(\eta^{\flat}) \sigma - \frac{1}{2} k_{\eta\eta}c(\eta^{\flat})\sigma - \frac{1}{2} k_{a\eta} c(e^a)\sigma \\
&\quad  + c(\eta^{\flat})\widetilde{\mathcal{D}}+\widetilde{\nabla}_{\eta}\\
&= \frac{1}{2} H_{\partial M_r\setminus S^{n-1}_{r,+}}  + \frac{{\rm tr}_{\partial M_r\setminus S^{n-1}_{r,+}} k}{2}c(\eta^{\flat}) \sigma - \frac{1}{2} k_{a\eta} c(e^a)\sigma + c(\eta^{\flat})\widetilde{\mathcal{D}}+\widetilde{\nabla}_{\eta}.
\end{aligned}
\end{equation}
\begin{definition}
We say that the endomorphism $\chi: \left. S\right|_{\partial M_{r} \setminus ( S^{n-1}_{r,+} \cup \Sigma_{r})} \to \left. S\right|_{\partial M_{r} \setminus ( S^{n-1}_{r,+} \cup \Sigma_{r})}$ given by
\begin{equation}
\chi u=-c(\eta^{\flat}) \sigma u
\end{equation}
is the \textit{chirality operator} on $\left.S\right|_{\partial M_{r} \setminus ( S^{n-1}_{r,+} \cup \Sigma_{r})}$. 
\end{definition}
By assuming the boundary condition
\begin{equation}\label{eq:BC-chi}
\chi \left.u\right|_{\partial M_{r} \setminus ( S^{n-1}_{r,+} \cup \Sigma_{r})}= \left.u\right|_{\partial M_{r} \setminus ( S^{n-1}_{r,+} \cup \Sigma_{r})},
\end{equation}
we obtain
\begin{equation}\label{eq:chi-1}
\langle \left. u\right|_{\partial M_{r} \setminus ( S^{n-1}_{r,+} \cup \Sigma_{r})}, \mathcal{A} \left. u\right|_{\partial M_{r} \setminus ( S^{n-1}_{r,+} \cup \Sigma_{r})} \rangle = 0,
\end{equation}
\begin{equation}\label{eq:chi-2}
\langle \left.u\right|_{\partial M_{r} \setminus ( S^{n-1}_{r,+} \cup \Sigma_{r})}, c(e^a) \sigma \left. u\right|_{\partial M_{r} \setminus ( S^{n-1}_{r,+} \cup \Sigma_{r})} \rangle = 0.
\end{equation}
Thus by \eqref{eq:nBDEC}, \eqref{eq:boundary-Dirac-identity}, \eqref{eq:chi-1}, and \eqref{eq:chi-2} under the assumption \eqref{eq:BC-chi} we have
\begin{equation}\label{eq:term-of-compact-boundary}
\begin{aligned}
& - \int_{\partial M_{r} \setminus ( S^{n-1}_{r,+} \cup \Sigma_{r})} \langle c(\eta^{\flat})\widetilde{\mathcal{D}}u + \widetilde{\nabla}_{\eta} u, u\rangle dA \\
= & \int_{\partial M_{r} \setminus ( S^{n-1}_{r,+} \cup \Sigma_{r})} \frac{1}{2} \theta_{\partial M_{r} \setminus ( S^{n-1}_{r,+} \cup \Sigma_{r})}^{-} |u|^2 dA.
\end{aligned}
\end{equation}
We now shall deal with the boundary condition for non-compact boundary $\Sigma$.
\begin{definition}[{\cite{AdL22,Cha23}}]
Let $\alpha\in [-\frac{\pi}{2},\frac{\pi}{2}]$. We say that the endomorphism $Q:\left.S\right|_{\Sigma} \to \left.S\right|_{\Sigma}$ given by
\begin{equation}\label{eq:defnQ}
Qu=\cos\alpha\, \sigma c(\eta^{\flat}) u + \sqrt{-1}\sin\alpha\, c(\eta^{\flat}) u
\end{equation}
is the \textit{chirality operator} on $\left.S\right|_{\Sigma}$.
\end{definition}
Note that the angle $\alpha\in [-\frac{\pi}{2},\frac{\pi}{2}]$ in \eqref{eq:defnQ} but $\alpha\in [0,\frac{\pi}{2}]$ in \eqref{eq:defn-tilted} since we need a choice on the sign of $\alpha$ later.
In particular, let $\overline{\mathcal{S}}$ be the standard spinor bundle over $(\mathbb{R}^n_{+},\delta,0)$. For the $\mathbb{Z}_2$-graded Dirac bundle $S=\overline{\mathcal{S}}\oplus \overline{\mathcal{S}}$, we define the chirality operator acting on $C^{\infty}(\partial \mathbb{R}^n_{+}, \left.S\right|_{\partial \mathbb{R}^n_{+}})$ by
\begin{equation}\label{eq:defn-Q-bar}
\overline{Q}u_0 =\cos\alpha\, \sigma c({\rm d}x^n)u_0 + \sqrt{-1} \sin \alpha\, c({\rm d}x^n) u_0
\end{equation}
for any smooth section $u_0$, where ${\rm d}x^n$ is the dual $1$-form of the outward unit normal of $\partial \mathbb{R}^n_{+}$. Then $\left.S\right|_{\partial \mathbb{R}^n_{+}}$ decomposes into $\pm 1$-eigenbundles in which the section $u_0$ satisfies respectively
\begin{equation}\label{eq:BC-Q-bar}
\overline{Q}u_0=\pm u_0.
\end{equation}
To proceed, we shall calculate the boundary term $\langle c(\eta^{\flat})\widetilde{\mathcal{D}}+ \widetilde{\nabla}_{\eta} u, u\rangle$ on $\Sigma$ under the boundary condition
\begin{equation}\label{eq:BC-Sigma}
Q \left.u\right|_{\Sigma}=\pm \left.u\right|_{\Sigma}.
\end{equation}
For a smooth section $u$ satisfying \eqref{eq:BC-Sigma} on $\Sigma$, by \cite[Lemma~3.5]{Cha23} we have
% \begin{equation}\label{eq:cal-1}
% \langle \sqrt{-1} c(\eta^{\flat}) u, u\rangle = \pm \sin \alpha\, |u|^2,
% \end{equation}
\begin{equation}\label{eq:cal-2}
\langle c(\eta^{\flat})\sigma \left.u\right|_{\Sigma}, \left.u\right|_{\Sigma}\rangle = \mp \cos \alpha\, \big|\left.u\right|_{\Sigma} \big|^2,
\end{equation}
\begin{equation}\label{eq:cal-3}
\langle c(e^{a})\sigma \left.u\right|_{\Sigma}, \left.u\right|_{\Sigma}\rangle = \mp \langle \sqrt{-1}\sin \alpha\, c(e^{a}) c(\eta^{\flat}) \sigma \left.u\right|_{\Sigma}, \left.u\right|_{\Sigma} \rangle.
\end{equation}
From \cite[(4.27)]{CH03}, we see that $\langle \mathcal{A} \left.u\right|_{\Sigma}, \left.u\right|_{\Sigma}\rangle =0$ on $\Sigma$. By combining \eqref{eq:cal-2}, \eqref{eq:cal-3}, and \eqref{eq:boundary-Dirac-identity}, we have (compare with \cite[Lemma~3.6]{Cha23}) 
\begin{equation}\label{eq:term-of-noncompact-boundary}
\begin{aligned}
&- \int_{\Sigma_r} \langle c(\eta^{\flat})\widetilde{\mathcal{D}}u + \widetilde{\nabla}_{\eta} u, u\rangle dA \\
=& \int_{\Sigma_r} \langle \big(\frac{1}{2}H_{\Sigma_r} + \frac{{\rm tr}_{\Sigma_r}k}{2}c(\eta^{\flat})\sigma-\frac{1}{2}k_{a\eta}c(e^{a})\sigma-\mathcal{A} \big) u,u  \rangle dA\\
=& \frac{1}{2} \int_{\Sigma_r} (H_{\Sigma_r}  \mp \cos\alpha\, {\rm tr}_{\Sigma_r}k) |u|^2 \pm  \sin \alpha\, \langle \sqrt{-1} k_{a\eta} c(e^{a}) c(\eta^{\flat}) \sigma u, u \rangle dA \\
\geq & \frac{1}{2} \int_{\Sigma_r} \Big( H_{\Sigma_r}\mp \cos\alpha\, {\rm tr}_{\Sigma_r}k -\sin\alpha\, (\sum_{a=1}^{n-1} k_{a\eta}^2)^{\frac{1}{2}} \Big) |u|^2 dA\\ 
\geq & \frac{1}{2} \int_{\Sigma_r} \left( H_{\Sigma_r}\mp \cos\alpha\, {\rm tr}_{\Sigma_r}k -\sin\alpha\, |k(\eta,\cdot)^{\top}| \right)|u|^2 dA.
\end{aligned}
\end{equation}
If $u$ is asymptotic to a constant section $u_0$ (compare \cite[Definition~2.4]{CZ21b} for precise definition) subject to \eqref{eq:BC-Q-bar}, the argument of \cite[Section~4]{Cha23} shows that
\begin{equation}\label{eq:mass1}
\begin{aligned}
&\int_{S^{n-1}_{r,+}}\langle c(\nu^{\flat})\widetilde{\mathcal{D}} u + \widetilde{\nabla}_{\nu} u, u\rangle dA \\
\to & \frac{1}{4}(E_{\mathcal{E}} \pm \cos\alpha\, (P_{\mathcal{E}})_n ) |u|^2 \pm \sin\alpha\, (P_{\mathcal{E}})_a \langle \sqrt{-1} c({\rm d}x^a) c({\rm d}x^n) \sigma u_0, u_0 \rangle_{\delta} 
\end{aligned}
\end{equation}
when $r\to \infty$. Furthermore, there exists a choice of $u_0$ and the sign of $\alpha$ such that the following identity holds (cf. \cite[Lemma~4.2]{Cha23}):
\begin{equation}\label{eq:mass2}
\mp \sin\alpha\, (P_{\mathcal{E}})_a\langle  \sqrt{-1} c({\rm d}x^a) c({\rm d}x^n) \sigma u_0, u_0 \rangle_{\delta} = \sin |\alpha|\, |\widehat{P}_{\mathcal{E}}|\, |u_0|_{\delta}^2.
\end{equation}
Denote by $C^{\infty}(X,S;Q,\chi)$ the space of all smooth sections of $S$ over $X$ that satisfy the boundary conditions \eqref{eq:BC-chi} and \eqref{eq:BC-Sigma}.
To summarize, putting \eqref{eq:term-of-compact-boundary}, \eqref{eq:term-of-noncompact-boundary}, \eqref{eq:mass1}, and \eqref{eq:mass2} into \eqref{eq:mass-spectral-estimate}, we obtain the following estimates.
For any $u\in C^{\infty}(X,S;Q,\chi)$,  
\begin{equation}\label{eq:Mr-spectral-estimate}
\begin{aligned}
\int_{M_r} &|\widetilde{\mathcal{D}}^{f} u|^2 dV \geq \int_{M_r} |\widetilde{\nabla}^{f}u|^{2} dV \\
& + \frac{n-1}{n} \int_{M_r} \left( \frac{n}{2(n-1)} (\mu - |J|) + f^2 - |{\rm d}f|  \right) |u|^2 dV \\
& + \int_{ \partial M_{r} \setminus ( S^{n-1}_{r,+} \cup \Sigma_{r}) }  \left(\frac{1}{2} \theta_{\partial M_{r} \setminus ( S^{n-1}_{r,+} \cup \Sigma_{r})}^{-} + \frac{n-1}{n}f \right) |u|^2 dA \\ 
& + \frac{1}{2} \int_{\Sigma_r} \left( H_{\Sigma_r}\mp \cos\alpha\, {\rm tr}_{\Sigma_r}k -\sin\alpha\, |k(\eta,\cdot)^{\top}| \right)|u|^2 dA\\
& - \frac{n-1}{n} \underbrace{ \int_{\Sigma_r} \langle f c(\eta^{\flat})\sigma u,u  \rangle dA}_{= 0} - \underbrace{\int_{S^{n-1}_{r,+}}\langle c(\nu^{\flat})\widetilde{\mathcal{D}} u + \widetilde{\nabla}_{\nu} u, u\rangle dA }_{ \to \frac{1}{4} (E_{\mathcal{E}} \pm \cos \alpha\, (P_{\mathcal{E}})_n -\sin |\alpha|\, |\widehat{P}_{\mathcal{E}}|) |u_0|_{\delta}^2,\, \text{as}\, r\to \infty } \\
&  - \frac{n-1}{n} \underbrace{ \int_{S^{n-1}_{r,+} }\langle f c(\nu^{\flat})\sigma u, u\rangle dA }_{ = 0 \, \text{for}\, r\gg 1}.
\end{aligned}
\end{equation}
We denote
\begin{equation}\label{eq:notation-Phi-Psi-Theta}
\begin{split}
     &\Phi^f =\frac{n}{2(n-1)}(\mu-|J|)+ f^2-|{\rm d}f|,\\
     &\Psi_{\partial X \setminus \Sigma }^f =\frac{1}{2} \theta_{\partial X \setminus \Sigma}^{-} +\frac{n-1}{n}f, \\
    &\Theta_{\Sigma} = \frac{1}{2} (H_{\Sigma}\mp \cos\alpha\, {\rm tr}_{\Sigma}k -\sin\alpha\, |k(\eta,\cdot)^{\top}|).
\end{split}
\end{equation}
Therefore, we are now in a position to characterize the ADM energy-momentum formula closely related to the spectral estimates of the Dirac-Witten operator with a Callias potential in the following Proposition \ref{pro:spectral-estimate} by letting $r\to \infty$ in \eqref{eq:Mr-spectral-estimate}.
\begin{proposition}\label{pro:spectral-estimate}
Let $(X,g,k)$ be a complete connected asymptotically flat spin initial data set with a non-compact boundary $\Sigma$ and compact boundary $\partial X\setminus \Sigma$. Assume that $M$ only has an asymptotically flat initial data end $\mathcal{E}$ with a non-compact boundary $\Sigma$. Let $\eta$ be the outward unit normal of $\partial X$. Let $f:X\to \mathbb{R}$ be a Lipschitz potential function with compact support such that ${\rm supp}(f)\cap \Sigma=\emptyset$. Let $u\in C^{\infty}(X,S;Q,\chi)$ which is asymptotic to a constant section $u_0$ satisfying \eqref{eq:BC-Q-bar} in the end $\mathcal{E}$. Then
\begin{equation}\label{eq:spectral-estimate}
\begin{aligned}
\|\widetilde{\mathcal{D}}^{f}u\|_{L^2(X,S)}^2  \geq & \|\widetilde{\nabla}^{f}u\|_{L^2(X,S)}^{2} + \frac{n-1}{n} \int_{X} \Phi^f |u|^2 dV + \int_{\partial X\setminus \Sigma} \Psi_{\partial X\setminus\Sigma}^f |u|^2 dA \\
& + \int_{\Sigma} \Theta_{\Sigma} |u|^2 dA - \frac{1}{4} (E_{\mathcal{E}} \pm \cos \alpha\, (P_{\mathcal{E}})_n -\sin |\alpha|\, |\widehat{P}_{\mathcal{E}}|) |u_0|_{\delta}^2.
\end{aligned}
\end{equation}
\end{proposition}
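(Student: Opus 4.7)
The plan is to pass to the limit $r\to\infty$ in the already-established inequality \eqref{eq:Mr-spectral-estimate}. The desired estimate \eqref{eq:spectral-estimate} will follow once termwise convergence is justified, with the decay hypotheses of asymptotic flatness, the boundedness of $u$, and the compact support of $f$ doing essentially all the work.

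First I would isolate the terms that simplify or stabilize for free. Since ${\rm supp}(f)\cap\Sigma=\emptyset$, the term $\int_{\Sigma_r}\langle fc(\eta^\flat)\sigma u,u\rangle\,dA$ vanishes for every $r$; and since ${\rm supp}(f)$ is compact, $\int_{S^{n-1}_{r,+}}\langle fc(\nu^\flat)\sigma u,u\rangle\,dA$ vanishes for all $r$ large enough that $S^{n-1}_{r,+}$ lies outside ${\rm supp}(f)$. The portion $\partial M_r\setminus(S^{n-1}_{r,+}\cup\Sigma_r)$ stabilizes for large $r$ to the compact $r$-independent set $\partial X\setminus\Sigma$, so its contribution is already $\int_{\partial X\setminus\Sigma}\Psi^f_{\partial X\setminus\Sigma}|u|^2\,dA$ in the notation of \eqref{eq:notation-Phi-Psi-Theta}.

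Next I would verify the remaining convergences by dominated convergence. For $\int_{M_r}|\widetilde{\mathcal{D}}^fu|^2\,dV$ and $\int_{M_r}|\widetilde{\nabla}^fu|^2\,dV$: outside ${\rm supp}(f)$ they reduce to $|\widetilde{\mathcal{D}}u|^2$ and $|\widetilde{\nabla}u|^2$, whose integrability on the end $\mathcal{E}$ follows from the decay $u-u_0=O(r^{-\tau})$ combined with the asymptotic flatness rates on $g$ and $k$ (making the integrands $O(r^{-2\tau-2})$, integrable since $\tau>(n-2)/2$). Integrability of $\Phi^f|u|^2$ reduces to integrability of $\mu$ and $J$ on $\mathcal{E}$, part of the standing hypothesis, combined with the boundedness of $u$; integrability of $\Theta_\Sigma|u|^2$ on $\Sigma$ reduces to the assumed integrability of $H_\Sigma$ and $|k(\eta,\cdot)^\top|$ on $\Sigma$.

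The delicate step, where the genuine work is concentrated, is passing to the limit in the spherical boundary integral $\int_{S^{n-1}_{r,+}}\langle c(\nu^\flat)\widetilde{\mathcal{D}}u+\widetilde{\nabla}_\nu u,u\rangle\,dA$, which produces the ADM energy-momentum contribution. Here I would invoke the asymptotic computation of \cite[Section~4]{Cha23} summarized in \eqref{eq:mass1}--\eqref{eq:mass2}: expand $\widetilde{\mathcal{D}}$ and $\widetilde{\nabla}$ in the asymptotic coordinates, extract the leading Euclidean piece from the constant section $u_0$, use $\tau>(n-2)/2$ to absorb all remainder terms into $o(1)$ as $r\to\infty$, and finally select $u_0$ and the sign of $\alpha$ so that the cross-term with $(P_\mathcal{E})_a$ collapses to $-\sin|\alpha|\,|\widehat{P}_\mathcal{E}|\,|u_0|_\delta^2$ via \eqref{eq:mass2}. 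Combining all of the above limits in \eqref{eq:Mr-spectral-estimate} gives exactly \eqref{eq:spectral-estimate}.
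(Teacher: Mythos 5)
Your proposal is correct and follows essentially the same route as the paper: the paper obtains Proposition \ref{pro:spectral-estimate} precisely by letting $r\to\infty$ in \eqref{eq:Mr-spectral-estimate}, with the compact support of $f$ killing the two $f$-boundary terms, the $r$-independent piece $\partial M_r\setminus(S^{n-1}_{r,+}\cup\Sigma_r)$ stabilizing to $\partial X\setminus\Sigma$, and the hemisphere integral converging to the ADM contribution via \eqref{eq:mass1}--\eqref{eq:mass2} from \cite{Cha23}. Your added justification of termwise convergence (dominated convergence using the decay of $u-u_0$, integrability of $\mu$, $J$, $H_\Sigma$, and $|k(\eta,\cdot)^{\top}|$) is a more explicit account of what the paper leaves implicit, but it is not a different argument.
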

Another key ingredient in the sequel is the following proposition about a mixed boundary value problem for the Dirac-Witten operator with a Callias potential, compare with \cite[Proposition~4.7]{AdL20}, \cite[Proposition~6.2]{CW22} and \cite[Proposition~4.3]{CLZ23}.
\begin{proposition}\label{pro:existence-lemma}
Let $(X,g,k)$ be a complete connected asymptotically flat spin initial data set with a non-compact boundary $\Sigma$ and compact boundary $\partial X\setminus \Sigma$. Assume that $X$ only has an asymptotically flat initial data end $\mathcal{E}$ with a non-compact boundary $\Sigma$.
Assume that there exists a Lipschitz Callias potential function $f:X\to\mathbb{R}$ with compact support such that ${\rm supp}(f)\cap \Sigma=\emptyset$, $\Psi_{\partial X\setminus \Sigma}^f \geq 0$ on $\partial X\setminus \Sigma$, and $\Theta_{\Sigma} \geq 0$ on $\Sigma$, and write $\Phi^{f}= \Phi^{f}_{+} - \Phi^{f}_{-}$ in $X$ with $\Phi^{f}_{\pm}\geq 0$, where $\Psi_{\partial X\setminus \Sigma}^f$, $\Theta_{\Sigma}$, and $\Phi^{f}$ are defined as in \eqref{eq:notation-Phi-Psi-Theta}.
Let $X_1$ be a connected codimension zero submanifold that contains the end $\mathcal{E}$ and has a non-compact boundary $\Sigma$ and compact boundary $\partial X_1\setminus \Sigma$.
Suppose that ${\rm supp}(\Phi^{f}_{-})\subseteq X\setminus X_1$ and that the following holds:
\begin{itemize}
\item[(i)] $\Phi^{f}_{-}$ is small enough in $X\setminus X_1$,
\item[(ii)] $f^2/n^2 \leq \frac{w_0}{4}$ in $Y$, where $w_0$ is a weight constant of Poincar\'{e} inequality of $\widetilde{\nabla}$ as in \cite[Definition~8.2]{BC05} and $Y\subset X\setminus X_1$ is a compact set.
\end{itemize}
For any $u\in C^{\infty}(X,S)$ such that $\widetilde{\mathcal{D}}^{f} u\in L^2(X,S)$, 
there exists an unique $v\in {\rm H}_{{\rm loc}}^1(X,S;Q,\chi)$ solving the boundary value problem
\begin{equation}\label{eq:BVP}
\begin{cases}
\widetilde{\mathcal{D}}^{f} v = - \widetilde{\mathcal{D}}^{f} u \quad &\text{in}\ X,\\
Q \left.v\right|_{\Sigma} = \pm \left. v\right|_{\Sigma} \quad &\text{on}\ \Sigma,\\
\chi \left.v\right|_{\partial X\setminus \Sigma} = \left. v\right|_{\partial X\setminus \Sigma} \quad &\text{on}\ \partial X\setminus \Sigma,
\end{cases}
\end{equation}
where ${\rm H}_{{\rm loc}}^1(X,S;Q,\chi)$ denotes the Sobolev ${\rm H}^1$-space of locally sequare-integrable sections of $S$ over $X$ that satisfy the boundary conditions \eqref{eq:BC-chi} and \eqref{eq:BC-Sigma}.
\end{proposition}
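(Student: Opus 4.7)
The strategy follows the adaptation of the Bartnik--Chru\'sciel scheme used in \cite{CLZ23,CW22,BC05} to the mixed boundary setting where the $\chi$-condition is imposed on the compact component $\partial X\setminus\Sigma$ and the $Q$-condition on the non-compact boundary $\Sigma$. My plan is to exhaust $X$ by compact subdomains, solve the boundary value problem on each via the mixed-boundary Grosse--Nakad-type result of Theorem \ref{thm:appendix-thm}, derive uniform $L^2$ bounds via Proposition \ref{pro:spectral-estimate}, and extract a weak limit. Concretely, choose an increasing exhaustion $X_1\subset X_2\subset\cdots$ of $X$ by connected codimension-zero submanifolds with smoothed corners, where each $X_r$ is obtained by cutting off the distinguished end $\mathcal{E}$ by a large coordinate hemisphere $S^{n-1}_{r,+}$. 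Then $\partial X_r$ decomposes into $\partial X\setminus\Sigma$, the truncated non-compact boundary piece $\Sigma_r=\Sigma\cap X_r$, and the artificial hypersurface $S^{n-1}_{r,+}$. Since $\mathrm{supp}(f)$ is compact and disjoint from $\Sigma$, we may take $r$ large enough that $f\equiv 0$ on $S^{n-1}_{r,+}$, so the $\chi$-condition there is compatible with $\widetilde{\mathcal D}$; impose $\chi$ on $\partial X\setminus\Sigma$ and on $S^{n-1}_{r,+}$, and $Q$ on $\Sigma_r$. Theorem \ref{thm:appendix-thm} then produces a unique $v_r$ on $X_r$ with $\widetilde{\mathcal D}^f v_r=-\widetilde{\mathcal D}^f u$ realizing these boundary data.

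The next step is a uniform $L^2$ bound on $\{v_r\}$. Applied to $v_r$ on $X_r$, the asymptotic constant at infinity vanishes, and the spectral estimate of Proposition \ref{pro:spectral-estimate} (its natural compact-domain analogue, where the $S^{n-1}_{r,+}$ boundary contribution vanishes after the $\chi$-projection) yields
\begin{equation*}
\|\widetilde{\mathcal D}^{f} u\|_{L^2(X_r,S)}^2 \geq \|\widetilde{\nabla}^{f} v_r\|_{L^2(X_r,S)}^2 + \tfrac{n-1}{n}\!\int_{X_r}\!\Phi^{f}|v_r|^2\,dV + \!\int_{\partial X\setminus\Sigma}\!\Psi^{f}_{\partial X\setminus\Sigma}|v_r|^2\,dA + \!\int_{\Sigma_r}\!\Theta_{\Sigma}|v_r|^2\,dA.
\end{equation*}
The last two integrals are non-negative by hypothesis. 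Split $\Phi^{f}=\Phi^{f}_{+}-\Phi^{f}_{-}$: the negative part vanishes on $X_1$ by assumption, so on $X_r\setminus X_1$ one needs to absorb $\tfrac{n-1}{n}\int\Phi^{f}_{-}|v_r|^2$ into $\|\widetilde{\nabla}^{f} v_r\|^2$. Outside $\mathrm{supp}(f)$ we have $\widetilde{\nabla}^{f}=\widetilde{\nabla}$, and the weighted Poincar\'e inequality for $\widetilde{\nabla}$ with weight $w_0$ from \cite[Definition~8.2]{BC05} applies on the asymptotically flat end; hypothesis (i) provides the smallness needed to absorb $\Phi^{f}_{-}$ outside $Y$, while hypothesis (ii), $f^2/n^2\leq w_0/4$ on $Y$, lets us compare $\widetilde{\nabla}^{f}$ with $\widetilde{\nabla}$ on the compact set where $f$ is concentrated. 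The net result is a uniform coercivity estimate $\|v_r\|_{L^2(X_r,S)}\leq C\|\widetilde{\mathcal D}^{f} u\|_{L^2(X,S)}$ with $C$ independent of $r$.

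Interior and boundary elliptic regularity for the Callias-type operator $\widetilde{\mathcal D}^{f}$ (with the self-adjoint mixed boundary conditions) upgrades this to a uniform $H^1_{\mathrm{loc}}$ bound on $\{v_r\}$, so a diagonal subsequence converges weakly in $H^1_{\mathrm{loc}}(X,S)$ to some $v\in H^1_{\mathrm{loc}}(X,S;Q,\chi)$; the boundary conditions \eqref{eq:BC-chi} and \eqref{eq:BC-Sigma} persist under weak trace convergence, and $\widetilde{\mathcal D}^{f} v=-\widetilde{\mathcal D}^{f} u$ holds in the distributional sense, hence classically by regularity. Uniqueness is handled by the same estimate: if $w\in H^1_{\mathrm{loc}}(X,S;Q,\chi)$ solves the homogeneous problem with vanishing asymptotic constant, a cutoff-function approximation combined with Proposition \ref{pro:spectral-estimate} and the absorption argument above forces $\widetilde{\nabla}^{f} w=0$ and $\int\Phi^{f}_{+}|w|^2=0$, whence $w\equiv 0$.

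I expect the main obstacle to be confirming that the Grosse--Nakad-type existence theorem genuinely extends to the \emph{mixed} $\chi/Q$ boundary conditions, since one must check that the composite boundary operator is self-adjoint elliptic and that the corresponding Fredholm/trace theory works uniformly in the exhaustion parameter $r$; this is precisely the content of Theorem \ref{thm:appendix-thm} in the appendix. A secondary delicate point is the quantitative absorption of $\Phi^{f}_{-}$ and the potential term $f^2$ into the Poincar\'e-type coercivity of $\widetilde{\nabla}^{f}$, where the precise value of the weight $w_0$ and the hypotheses (i)--(ii) are all used in a balanced way.
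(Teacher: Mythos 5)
Your scheme (exhaust $X$ by compact domains $X_r$, solve the mixed boundary value problem on each via Theorem \ref{thm:appendix-thm}, and extract a weak $H^1_{\rm loc}$ limit) is a Bartnik--Chru\'sciel-style route that is genuinely different from the paper's. The paper does \emph{not} exhaust: it applies Theorem \ref{thm:appendix-thm} once, directly on the noncompact $X$, by (a) checking that $B_{\pm Q}$ and $B_{\pm\chi}$ are self-adjoint elliptic boundary conditions via Green's formula, (b) proving $\ker\widetilde{\mathcal{D}}^{f}_{B_Q,B_\chi}=0$ through the coercivity chain \eqref{eq:BVP-coercive} (the spectral estimate \eqref{eq:spectral-estimate} plus the weighted Poincar\'e inequality and hypotheses (i)--(ii) force $u\equiv 0$ on $Y$ and $\widetilde{\nabla}^{f}u\equiv 0$, whence $u\equiv 0$ by the unique continuation statement of \cite[Remark~A.2]{CLZ23}), and (c) verifying coercivity at infinity of $\widetilde{\mathcal{D}}^{f}$ outside a compact set containing ${\rm supp}(f)$, so that the closed range theorem gives existence and uniqueness for every right-hand side. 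You correctly identified that the content of Theorem \ref{thm:appendix-thm} is the mixed $\chi/Q$ elliptic boundary theory, but you deploy it on the wrong objects.

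As written, your version has concrete gaps. First, applying Theorem \ref{thm:appendix-thm} on each compact $X_r$ requires the right-hand side to lie in $(\ker(\widetilde{\mathcal{D}}^{f}_{B_1,B_2})^{*})^{\bot}$ \emph{on $X_r$}; you never verify that the (co)kernel there is trivial, which would require rerunning the vanishing-kernel argument on each $X_r$ (also, the $S^{n-1}_{r,+}$ boundary term under the $\chi$-condition does not vanish --- it equals $\tfrac12\int\theta^{-}_{S^{n-1}_{r,+}}|v_r|^2\,dA$, which is merely non-negative for large $r$ in the asymptotically flat end). Second, the asserted uniform bound $\|v_r\|_{L^2(X_r,S)}\le C\|\widetilde{\mathcal{D}}^{f}u\|_{L^2(X,S)}$ does not follow from the argument you give: the Poincar\'e inequality of \cite[Definition~8.2]{BC05} is weighted, with weight decaying at infinity, so the absorption of $\Phi^{f}_{-}$ and $f^2/n^2$ only controls $\|\widetilde{\nabla}^{f}v_r\|^2_{L^2}$ together with $\|v_r\|^2_{L^2(Y,S)}$ (exactly as in \eqref{eq:BVP-coercive}); one must then pass from this to $H^1_{\rm loc}$ bounds on arbitrary compacta before extracting a limit. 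Third, in the uniqueness step, $\int\Phi^{f}_{+}|w|^2=0$ does not imply $w\equiv 0$, since $\Phi^{f}_{+}$ may vanish on open sets; the correct conclusion uses $w|_{Y}=0$ together with $\widetilde{\nabla}^{f}w\equiv 0$ and the fact that a $\widetilde{\nabla}^{f}$-parallel section vanishing on an open set vanishes identically.
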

\begin{proof}
We will prove this lemma by applying Theorem \ref{thm:appendix-thm} directly in the following.
First, note that $\chi$ and $Q$ are all self-adjoint involutions, so we may consider the projections
\begin{equation}
 P_{\pm Q}:=\frac{1}{2}({\rm Id}_{\left.S\right|_{\Sigma}} \pm Q): {\rm H}^{\frac{1}{2}}_{\rm loc}(\Sigma, \left.S\right|_{\Sigma}) \to B_{\pm Q}:={\rm H}^{\frac{1}{2}}_{\rm loc}(\Sigma,\left.S\right|_{\Sigma}^{\pm Q})
\end{equation} 
onto the $\pm 1$-eigenbundle $\left.S\right|_{\Sigma}^{\pm Q}$ of $Q$, and the projections 
\begin{equation}
 P_{\pm \chi}:=\frac{1}{2}({\rm Id}_{\left.S\right|_{\partial X\setminus \Sigma}} \pm \chi): {\rm H}^{\frac{1}{2}}(\partial X\setminus \Sigma, \left.S\right|_{\partial X\setminus \Sigma}) \to B_{\pm \chi}:={\rm H}^{\frac{1}{2}}(\partial X\setminus \Sigma, \left.S\right|_{\partial X\setminus \Sigma}^{\pm \chi})
\end{equation} 
onto the $\pm 1$-eigenbundle $\left.S\right|_{\partial X\setminus \Sigma}^{\pm \chi}$ of $\chi$.
Thus $({\rm Id_{\left.S\right|_{\Sigma}}}-P_{\pm Q})\mathscr{R}_1(v)=0$ if and only if $Q \left.v\right|_{\Sigma}=\pm \left.v\right|_{\Sigma}$, and $({\rm Id}_{\left. S\right|_{\partial X\setminus \Sigma}}-P_{+\chi})\mathscr{R}_2(v)=0$ if and only if $\chi \left.v\right|_{\partial X\setminus \Sigma}= \left.v\right|_{\partial X\setminus \Sigma}$. 
We remark that $B_{\pm Q}$ are nice elliptic boundary conditions for non-compact boundary $\Sigma$ (cf. \cite{GN14,BS18,AdL20}), and we note that $B_{\pm \chi}$ are also elliptic boundary conditions for compact boundary $\partial X\setminus \Sigma$ in the sense of B\"{a}r-Ballmann \cite{BB12,BB16}.
By the Green's formula, we have
\begin{equation}
\int_{X}\langle \widetilde{\mathcal{D}}^{f} u_1,u_2\rangle dV - \int_{X}\langle u_1, \widetilde{\mathcal{D}}^{f} u_2 \rangle dV = \int_{\partial X}\langle c(\eta^{\flat})u_1,u_2\rangle dA.
\end{equation}
for any $u_1,u_2\in {\rm H}_{\rm loc}^1(X,S)$. Thus we have $\langle c(\eta^{\flat}) u_1,u_2\rangle=0$ on $\partial X$ if $u_1,u_2\in {\rm H}^1_{\rm loc}(X,S;Q,\chi)$.
Since $B_{\pm \chi}$ and $B_{\pm Q}$ are self-adjoint boundary conditions, $\widetilde{\mathcal{D}}^{f}: {\rm dom}(\widetilde{\mathcal{D}}^{f})\subset {\rm H}_{\rm loc}^1(X,S;Q,\chi)$ $\to L^2(X,S)$ is essentially self-adjoint (cf. \cite{GN14,BB16,BS18,AdL20}).

Next, if $u \in {\rm dom}(\widetilde{\mathcal{D}}^f) \subset  {\rm H}^1_{\rm loc}(X,S;Q,\chi)$ with $\widetilde{\mathcal{D}}^{f}u=0$, by \eqref{eq:spectral-estimate} in Proposition \ref{pro:spectral-estimate} and noting that the last term in \eqref{eq:spectral-estimate} vanishes, we find  
\begin{equation}\label{eq:BVP-coercive}
\begin{aligned}
\|\widetilde{\mathcal{D}}^{f}u\|_{L^2(X,S)}^2 \geq & \|\widetilde{\nabla}^{f}u\|_{L^2(X,S)}^{2} + \frac{n-1}{n} \int_{X} \Phi^{f} |u|^2 dV \\
& +  \int_{\partial X \setminus \Sigma} \Psi_{\partial X\setminus \Sigma}^{f} |u|^2 dA +  \int_{\Sigma} \Theta_{\Sigma} |u|^2 dA \\
\geq & \|\widetilde{\nabla}^{f}u\|_{L^2(X,S)}^{2} - \int_{X} \Phi^{f}_{-} |u|^2 dV\\
\geq & \|\widetilde{\nabla} u\|_{L^2(Y,S)}^{2} - \left\|\tfrac{f}{n} u\right\|_{L^2(Y,S)}^2 - \int_{X\setminus X_1} \Phi^{f}_{-} |u|^2 dV\\
\geq & \frac{2}{3}\|\widetilde{\nabla} u\|_{L^2(Y,S)}^{2} - \left\|\tfrac{f}{n} u\right\|_{L^2(Y,S)}^2 + \frac{1}{3}\|\widetilde{\nabla} u\|_{L^2(Y,S)}^{2} - \int_{X\setminus X_1} \Phi^{f}_{-} |u|^2 dV\\
\geq & \frac{2}{3} \int_{Y} (w_0- \frac{f^2}{n^2})|u|^2 dV + \frac{1}{3} \int_{Y} w_0 |u|^2 dV - \int_{X\setminus X_1} \Phi^{f}_{-} |u|^2 dV \\
\geq &  \frac{2}{3} \int_{Y} (w_0-\frac{w_0}{4}) |u|^2 dV = \frac{w_0}{2}\|u\|_{L^2(Y,S)}^2,
\end{aligned}
\end{equation}
where the last second line of \eqref{eq:BVP-coercive} holds due to a weighted Poincar\'{e} inequality $\|\widetilde{\nabla} u\|_{L^2(Y,S)}^2$ $\geq w_0 \|u\|_{L^2(Y,S)}^2$ for a weight constant $w_0>0$, and the inequality in the last line of \eqref{eq:BVP-coercive} holds since $\Phi^{f}_{-}$ is small enough in $X\setminus X_1$ and we may assume that the compact subset $Y\subset X\setminus X_1$ satisfies ${\rm supp}(u)\cap Y\neq \emptyset$ and
\begin{equation}
\frac{1}{3} \int_{Y} w_0 |u|^2 dV \geq \int_{X\setminus X_1} \Phi^{f}_{-} |u|^2 dV,
\end{equation}
otherwise the term $\int_{X\setminus X_1}\Phi^{f}_{-}|u|^2dV$ in the last second line must vanish.
From \eqref{eq:BVP-coercive}, $u$ vanishes on $Y$. Moreover, from the first two lines in the previous estimate we have $\|\widetilde{\nabla}^{f}u\|_{L^2(X,S)}=0$. Theorefore, $\widetilde{\nabla}^{f}u=0$ on all of $X$. By \cite[Remark~A.2]{CLZ23}, $u=0$ everywhere and hence $\ker \widetilde{\mathcal{D}}^{f}=0$.

Finally, we let $K$ be a compact subset of $M$ such that ${\rm supp}(f)\subset M\setminus K$ and denote $C_{cc}^\infty(X,S)$ the space of all smooth sections of $S$ with compact support contained in the interior of X.
For any $u\in C_{cc}^\infty(X,S)$ with compact support $K_0$ in $X\setminus K$, we have
\begin{equation}
\begin{aligned}
\|\widetilde{\mathcal{D}}^{f}u\|_{L^2(X,S)}^{2} \geq & \int_{X}\left|\mathcal{D}u- \frac{{\rm tr}_{g}k}{2} \sigma u\right|^2 dV \\
=& \|\mathcal{D}u\|_{L^2(X,S)}^2 + \frac{({\rm tr}_{g}k)^2}{4} \|u\|_{L^2(X,S)}^2 \\
 & - \frac{{\rm tr}_{g}k}{2} \int_{X} (\langle\mathcal{D}u,\sigma u \rangle + \langle \sigma u, \mathcal{D} u \rangle ) dV \\
=& \|\mathcal{D}u\|_{L^2(K_0,S)}^2 + \frac{({\rm tr}_{g}k)^2}{4} \|u\|_{L^2(K_0,S)}^2 \\
 & + \frac{{\rm tr}_{g}k}{2} \int_{\partial X} \langle u, c(\nu^{\flat}) \sigma u \rangle   \\
\geq & C \|u\|_{L^2(K_0,S)}^2
\end{aligned}
\end{equation}
for some constant $C=\min_{K_0}\frac{({\rm tr}_{g}k)^2}{4}$, where the last inequality holds because $u$ vanishes on $\partial X$. Thus $\widetilde{\mathcal{D}}^{f}$ is coercive at infinity, and then $\widetilde{\mathcal{D}}^{f}$ is $(B_{Q},B_{\chi})$-coercive at infinity from Remark \ref{rem:appendix-remark-2}. 
By using Theorem \ref{thm:appendix-thm}, we find an unique $v\in {\rm H}_{{\rm loc}}^1(X,S;Q,\chi)$ satisfying \eqref{eq:BVP}. Hence our assertion follows.
\end{proof}

%%%%%%%%%%%%%%%%%%%%%%%%%%%%%%%%%%%%%%%%%%%%%%%%%%%%%%%%%%%%%%%%%%%%%%%%%%%%%%%%%%%%%%%%%%%%%%%%%%%%%%%%%%%%%
%%%%%%%%%%%%%%%%%%%%%%%%%%%%%%%%%%%%%%%%%%%%%%%%%%%%%%%%%%%%%%%%%%%%%%%%%%%%%%%%%%%%%%%%%%%%%%%%%%%%%%%%%%%%%
%%%%%%%%%%%%%%%%%%%%%%%%%%%%%%%%%%%%%%%%%%%%%%%%%%%%%%%%%%%%%%%%%%%%%%%%%%%%%%%%%%%%%%%%%%%%%%%%%%%%%%%%%%%%%
%%%%%%%%%%%%%%%%%%%%%%%%%%%%%%%%%%%%%%%%%%%%%%%%%%%%%%%%%%%%%%%%%%%%%%%%%%%%%%%%%%%%%%%%%%%%%%%%%%%%%%%%%%%%%
\section{Proofs of Theorem \ref{thm:flat--gener-PTM-arends} and Theorem \ref{thm:PMT-with-dimension-zero-submanifolds} }\label{sec:flat}
After establishing the ADM energy-momentum formula resting on spectral estimates and solving the mixed boundary value problem for the Dirac-Witten operator with a Callias potential in Section \ref{sec:pre}, we are now ready to prove Theorem \ref{thm:flat--gener-PTM-arends} and Theorem \ref{thm:PMT-with-dimension-zero-submanifolds}.

%%%%%%%%%%%%%%%%%%%%%%%%%%%%%%%%%%%%%%%%%%%%%%%%%%%%%%%%%%%%%%%%%%%%%%%%%%%%%%%%%%%%%%%%%%%%%%%%%%%%%%%%%%%%%
%%%%%%%%%%%%%%%%%%%%%%%%%%%%%%%%%%%%%%%%%%%%%%%%%%%%%%%%%%%%%%%%%%%%%%%%%%%%%%%%%%%%%%%%%%%%%%%%%%%%%%%%%%%%%
\subsection{Tilted spacetime positive mass theorem with arbitrary ends}
In this subsection, we prove Theorem \ref{thm:flat--gener-PTM-arends}.
\begin{proof}[Proof of Theorem \ref{thm:flat--gener-PTM-arends}]
We will prove the contrapositive statement of Theorem \ref{thm:flat--gener-PTM-arends}. Given a distance $\varrho>0$, we assume that there exist connected codimension zero submanifolds $M_1\subset M_{\varrho} \subseteq M$ such that we have the following (see Figure \ref{fig:Quantitative-shilding})
\begin{itemize}
     \item[(i)] ${\rm dist}_g(\partial M_1\setminus\Sigma, \partial M_{\varrho}\setminus\Sigma)>\varrho$,
     \item[(ii)] $M_{\varrho}\setminus M_1$ is relatively compact,
     \item[(iii)] $M_1$ contains an asymptotically flat initial data end $\mathcal{E}$ with a non-compact boundary $\Sigma$,
     \item[(iv)] $(M_{\varrho},g)$ is a complete spin manifold such that $\mu-|J|\geq 0$ holds on $M_{\varrho}$. 
\end{itemize}
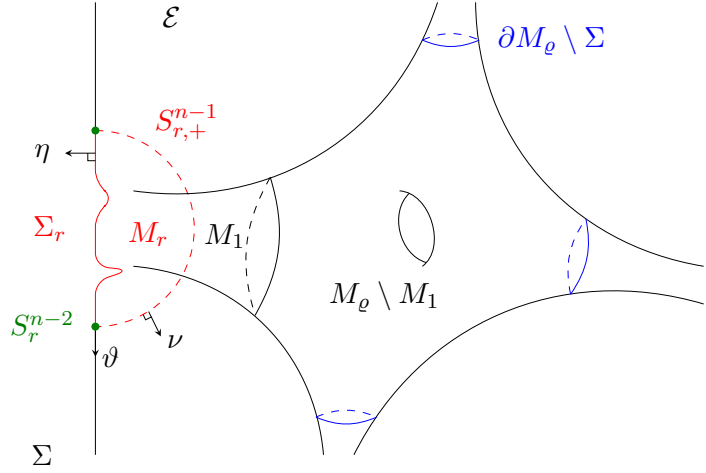
\begin{figure}
\begin{center}
\begin{tikzpicture}[scale=1]
\coordinate (A) at (0,3);
\coordinate (B) at (0,-3);
\coordinate (C) at (0,1.3);
\coordinate (D) at (0,-1.3);
%%---curves-point-----
\coordinate (E) at (0.5,0.5);
\coordinate (F) at (4.5,3);
\coordinate (G) at (0.5,-0.5);
\coordinate (H) at (3,-3);
\coordinate (I) at (5,3);
\coordinate (J) at (8,-0.5);
\coordinate (K) at (3.4,-3);
\coordinate (L) at (8,-1);
%%%---genus------
\coordinate (M) at (4,0.5);
\coordinate (N) at (4.3,-0.5);
\coordinate (O) at (4.13,0.47);
\coordinate (P) at (4.35,-0.45);
\draw[rounded corners=5,line width=0.2pt] (M) to [bend left=70] (N);
\draw[rounded corners=5,line width=0.2pt] (O) to [bend right=60] (P);
%%%---curves------
\draw[rounded corners=5,line width=0.2pt] (A) to (C);
\draw[rounded corners=5,line width=0.2pt] (B) to (D);
\draw[rounded corners=5,red,fill=white,line width=0.2pt] (C) to (0,0.6) to (0.25,0.4) to (0,0.2) to (0,-0.5) to (0.45,-0.55) to (0,-0.7) to (D);
\draw[dashed,red,line width=0.2pt] (C) arc [start angle=90, end angle=-90, x radius=1.3, y radius=1.3];
\draw[rounded corners=5,line width=0.2pt] (E) to [bend right=40] (F);
\draw[rounded corners=5,line width=0.2pt] (G) to [bend left=40]  (H);
\draw[rounded corners=5,line width=0.2pt] (I) to [bend right=40] (J);
\draw[rounded corners=5,line width=0.2pt] (K) to [bend left=40]  (L);
%%---unit-vectors---
\draw[-stealth,line width=0.2pt] (0,1) to (-0.4,1);
\draw[line width=0.2pt] (-0.1,1) to (-0.1,0.9) to (0,0.9);
\draw[-stealth,line width=0.2pt] (D) to (0,-1.7);
\draw[-stealth,line width=0.2pt] (0.7,-1.1) to (0.86,-1.43);
\draw[line width=0.2pt] (0.628,-1.14) to (0.66,-1.21) to (0.74,-1.17);
%%%---circles----
%%---left--circle--
\draw[dashed,rounded corners=5,line width=0.2pt] (2.3,0.68) to [bend right=23] (2.1,-1.15);
\draw[rounded corners=5,line width=0.2pt] (2.3,0.68) to [bend left=23] (2.1,-1.15);
%%---right--circle--
\draw[rounded corners=5,blue,line width=0.2pt] (6.25,-0.87) to [bend right=23] (6.45,0.13);
\draw[dashed,rounded corners=5,blue,line width=0.2pt] (6.25,-0.87) to [bend left=23] (6.45,0.13);
%%---below--circle--
\draw[rounded corners=5,blue,line width=0.2pt] (2.9,-2.5) to [bend right=23] (3.7,-2.5);
\draw[dashed,rounded corners=5,blue,line width=0.2pt] (2.9,-2.5) to [bend left=23] (3.7,-2.5);
%%---above--circle--
\draw[rounded corners=5,blue,line width=0.2pt] (4.3,2.5) to [bend right=23] (5.05,2.5);
\draw[dashed,rounded corners=5,blue,line width=0.2pt] (4.3,2.5) to [bend left=23] (5.05,2.5);
%%%---labels-----
\node at (1.2,1.4) {\(\textcolor{red}{S^{n-1}_{r,+}}\)};
\node at (-0.7,-3) {\(\Sigma\)};
\node at (1,2.8) {\(\mathcal{E}\)};
\node at (-0.7,-1.3) {\textcolor{green!50!black}{\(S^{n-2}_{r}\)}};
\node at (1.7,-0.1) {\(M_1\)};
\node at (3.8,-0.9) {\(M_{\varrho}\setminus M_1\)};
\node at (6,2.5)  {\(\textcolor{blue}{\partial M_{\varrho}\setminus \Sigma}\)};
\node at (0.7,-0.07)  {\(\textcolor{red}{M_r}\)};
\node at (-0.6,0)  {\(\textcolor{red}{\Sigma_r}\)};
\node at (-0.7,1)  {\(\eta\)};
\node at (0.2,-1.7)  {\(\vartheta\)};
\node at (1.05,-1.5) {\(\nu\)};
\node[circle,fill=green!50!black,minimum size=3pt,inner sep=0pt] at (C) {};
\node[circle,fill=green!50!black,minimum size=3pt,inner sep=0pt] at (D) {};
\end{tikzpicture}
\caption{$\mathcal{E}\subset M_1\subset M_{\varrho}\subseteq M$.}\label{fig:Quantitative-shilding}
\end{center}
\end{figure}
From (ii), since $\overline{M_{\varrho}\setminus M_1}$ is compact, $\mathcal{E}$ is the only asymptotically flat end with a non-compact boundary $\Sigma$ in $M_1$. From (i), there exists an $\varepsilon>0$ such that
\begin{equation}
   {\rm dist}_g(\partial M_1\setminus\Sigma, \partial M_{\varrho}\setminus\Sigma)\geq \omega:=\varrho + \varepsilon.  
\end{equation}
Let $x(p)={\rm dist}_g(p,\partial M_{\varrho}\setminus\Sigma)$. Then we construct a sequence of bounded Lipschitz functions $f_{\varrho,j}$ on $M_{\varrho}$ as in the process of proof of \cite[Theorem~1.4]{Liu23a}. For any $j$, let
\begin{equation}
f_{\varrho,j}(p)=\begin{cases}
\frac{\pi}{2\omega}\cot\left(\frac{\pi}{2\omega}x(p)+\frac{1}{j}\right) \quad &\text{if}\, x(p)\leq \frac{2\omega}{\pi} \left(\frac{\pi}{2}-\frac{1}{j}\right), \\
0 \quad &\text{otherwise.}
\end{cases}
\end{equation}
Observe that $f_{\varrho,j}\to \infty$ on $\partial M_{\varrho} \setminus\Sigma$ as $j\to \infty$. Moreover, note that $f_{\varrho,j}=0$ on $M_1$. We fix a compact subset $\Omega_{\varrho}$ in the interior of $M_{\varrho}$, such that for all sufficiently large $j$ we have
\begin{subequations}
\begin{align}
\frac{3}{2}f_{\varrho,j}^2-|{\rm d} f_{\varrho,j}|\geq 1\quad &\text{on}\  M_{\varrho} \setminus\Omega_{\varrho},\label{eq:characteristic-1}\\
f_{\varrho,j}^2-|{\rm d} f_{\varrho,j}|+\frac{\pi^2}{4 {\varrho}^2}\geq C_{\varepsilon,\varrho}\quad &\text{on}\  M_{\varrho}, \label{eq:characteristic-2}
\end{align}
\end{subequations}
where $C_{\varepsilon,\varrho}>0$ depends on $\varepsilon$ and $\varrho$. Fix a section $\varphi\in C^{\infty}(M_1,S)$ which is asymptotic to a non-zero constant section $u_0$ in $\mathcal{E}$ satisfying \eqref{eq:BC-Q-bar} and such that ${\rm supp}(\varphi)\subseteq M_1$ in advance. Since ${\rm supp}(\varphi)\cap {\rm supp}(f_{\varrho,j})=\emptyset$ for any $j$, we have
\begin{equation}
    \widetilde{\mathcal{D}}^{f_{\varrho,j}}\varphi=\widetilde{\mathcal{D}}\varphi \in L^2(M_1,S)\subset L^2(M_{\varrho},S).
\end{equation}
By the interior and tilted boundary dominant energy conditions from the hypotheses of Theorem \ref{thm:flat--gener-PTM-arends}, and construction of $f_{\varrho,j}$, we have 
\begin{equation}
\Psi_{\partial M_{\varrho}\setminus \Sigma}^{f_{\varrho,j}}\geq 0 \quad  \text{on}\quad \partial M_{\varrho} \setminus \Sigma, 
\end{equation}
\begin{equation}
\Theta_{\Sigma} \geq 0 \quad  \text{on}\quad  \Sigma,
\end{equation}
and
\begin{equation}
\Phi^{f_{\varrho,j}} = \frac{n}{2(n-1)}(\mu-|J|)+f_{\varrho,j}^2-|{\rm d}f_{\varrho,j}| \geq 
\begin{cases} 
0 \quad & \text{in}\quad  M_1, \\
-\frac{\pi^2}{4\varrho^2} \quad & \text{in}\quad  M_{\varrho}\setminus M_1.
\end{cases}
\end{equation}
We may find a connected compact codimension zero submanifold $M_{N} \subset M_{\varrho}\setminus M_1$ such that ${\rm dist}_g(p,M_1)< \frac{2\omega}{\pi}(\frac{\pi}{4}-\frac{1}{j})$ for all $p\in M_N$.
Then we can choose a constant $\varrho_0>0$ such that for all $\varrho>\varrho_0$, we have
\begin{itemize}
\item[(i)]
$\tfrac{f_{\varrho,j}^2}{n^2} \leq \frac{\pi^2}{4n^2\omega^2}\leq \frac{w_0}{4}$ in $M_N$,
\item[(ii)]
$\Phi^{f}_{-} \leq \frac{\pi^2}{4\varrho^2}$ and hence $\Phi^{f}_{-}$ is small enough in $M_{\varrho}\setminus M_1$.
\end{itemize}
Then by Proposition \ref{pro:existence-lemma} we find a section $v_{\varrho,j}\in {\rm H}_{{\rm loc}}^1(M_{\varrho},S;Q,\chi)$ such that
\begin{equation}
\begin{cases}
\widetilde{\mathcal{D}}^{f_{\varrho,j}} v_{\varrho,j} = - \widetilde{\mathcal{D}}^{f_{\varrho,j}} \varphi \quad &\text{in}\ M_{\varrho},\\
Q(\left.v_{\varrho,j}\right|_{\Sigma}) = \pm \left. v_{\varrho,j}\right|_{\Sigma} \quad &\text{on}\ \Sigma,\\
\chi(\left.v_{\varrho,j}\right|_{\partial M_{\varrho} \setminus \Sigma}) = \left. v_{\varrho,j}\right|_{\partial M_{\varrho} \setminus \Sigma} \quad &\text{on}\ \partial M_{\varrho}\setminus \Sigma,
\end{cases}
\end{equation}
for all $j$. Set $u_{\varrho,j}:=v_{\varrho,j}+\varphi$. Then $\widetilde{\mathcal{D}}^{f_{\varrho,j}}u_{\varrho,j}=0$ everywhere, and $u_{\varrho,j}=v_{\varrho,j}$ on $M_{\varrho} \setminus {\rm supp}(\varphi)$, and $u_{\varrho,j}$ is asymptotic to $\varphi$ in $\mathcal{E}$. From \eqref{eq:spectral-estimate} in Proposition \ref{pro:spectral-estimate}, we have
\begin{equation}\label{eq:key-pre-spectral}
\begin{aligned}
&- \int_{ \partial M_{\varrho} \setminus \Sigma }  \Psi_{ \partial M_{\varrho} \setminus \Sigma }^{f_{\varrho,j}} |u_{\varrho,j}|^2 dA \\
\geq & \int_{M_{\varrho}} |\widetilde{\nabla}^{f_{\varrho,j}} u_{\varrho,j}|^2 dV + \frac{n-1}{n} \int_{M_{\varrho}} \Phi^{f_{\varrho,j}} |u_{\varrho,j}|^2 dV\\
& +  \int_{\Sigma} \Theta_{\Sigma} |u_{\varrho,j}|^2 dA - \frac{1}{4} (E_{\mathcal{E}} \pm \cos \alpha\, (P_{\mathcal{E}})_n -\sin |\alpha|\, |\widehat{P}_{\mathcal{E}}|) |u_0|_{\delta}^2.
\end{aligned}
\end{equation}
We shall inspect the limit as $j\to \infty$, so we now proceed as in \cite{HKKZ23,Liu23a,Liu23b}. First note that applying $\left.f_{\varrho,j}\right|_{\Sigma}=0$ and an integration by parts yields 
\begin{equation}\label{eq:HKKZ-boundary-identity}
\begin{aligned}
\int_{\partial M_{\varrho}\setminus \Sigma}\langle c(\eta^{\flat}) u_{\varrho,j},f_{\varrho,j} \sigma u_{\varrho,j}\rangle dA &= \int_{M_{\varrho}} (\langle \mathcal{D}u_{\varrho,j}, f_{\varrho,j}\sigma u_{\varrho,j}\rangle - \langle u_{\varrho,j}, \mathcal{D}f_{\varrho,j}\sigma u_{\varrho,j} \rangle ) dV \\
&= -\int_{M_{\varrho}}(2f_{\varrho,j}^2|u_{\varrho,j}|^2+\langle u_{\varrho,j}, c({\rm d}f_{\varrho,j})\sigma u_{\varrho,j}\rangle) dV.
\end{aligned}
\end{equation}
Moreover, by \eqref{eq:characteristic-1}, \eqref{eq:HKKZ-boundary-identity}, and the fact that $f_{\varrho,j}$ blows-up on $\partial M_{\varrho}\setminus \Sigma$, we have
\begin{equation}\label{eq:noncom-control}
\begin{aligned}
\int_{M_{\varrho}\setminus\Omega_{\varrho}} \left(\frac{1}{2}f_{\varrho,j}^2+1\right)|u_{\varrho,j}|^2 dV & \leq \int_{M_{\varrho}\setminus \Omega_{\varrho}} (2f_{\varrho,j}^2-|{\rm d}f_{\varrho,j}|)|u_{\varrho,j}|^2 dV \\
&\leq \int_{\Omega_{\varrho}} (|{\rm d}f_{\varrho,j}|-2f_{\varrho,j}^2)|u_{\varrho,j}|^2 dV.
\end{aligned}
\end{equation}
It follows from \eqref{eq:noncom-control} that $\max_{\Omega_{\varrho}}|u_{\varrho,j}|\neq 0$, and thus we may assume that $\max_{\Omega_{\varrho}}|u_{\varrho,j}|=1$ by appropriate rescaling. Using \eqref{eq:characteristic-2} and \eqref{eq:noncom-control}, we have
\begin{equation}\label{eq:bound-1}
\int_{\Omega_{\varrho}} |u_{\varrho,j}|^2 dV + \int_{M_{\varrho} \setminus\Omega_{\varrho}}\left(\frac{1}{2}f_{\varrho,j}^2+1\right)|u_{\varrho,j}|^2 dV \leq \left(\frac{ \pi^2 }{4 \varrho^2} + 1\right) |\Omega_{\varrho}|.
\end{equation}
Let
\begin{equation}
\Upsilon_{\varrho,j}=\min_{\partial M_{\varrho} \setminus\Sigma}\left(\frac{n-1}{n} f_{\varrho,j} - \frac{1}{2} |\theta^{-}_{\partial M_{\varrho} \setminus\Sigma}| \right),
\end{equation}
then $\Upsilon_{\varrho,j} \to \infty$ as $j\to \infty$. Note that
\begin{equation}\label{eq:need}
\begin{aligned}
\int_{M_{\varrho}}|\widetilde{\nabla}^{f_{\varrho,j}}u_{\varrho,j}|^2 dV & \geq \int_{M_{\varrho}} |\widetilde{\nabla} u_{\varrho,j}|^2 dV - \int_{M_{\varrho}} \left|\frac{f_{\varrho,j}}{n}u_{\varrho,j}\right|^2 dV \\
&=\int_{M_{\varrho}} |\widetilde{\nabla} u_{\varrho,j}|^2 dV - \int_{M_{\varrho}\setminus M_1} \left|\frac{f_{\varrho,j}}{n}u_{\varrho,j}\right|^2 dV.
\end{aligned}
\end{equation}
By \eqref{eq:characteristic-2}, \eqref{eq:key-pre-spectral}, and \eqref{eq:need}, we have
\begin{equation}\label{eq:bound-2}
\begin{aligned}
& \int_{M_{\varrho}} |\widetilde{\nabla} u_{\varrho,j}|^2 dV + \int_{\partial M_{\varrho}\setminus\Sigma} \Upsilon_{\varrho,j} |u_{\varrho,j}|^2 dV \\
\leq & \int_{M_{\varrho}\setminus M_1} \left|\frac{f_{\varrho,j}}{n}u_{\varrho,j}\right|^2 dV +\int_{M_{\varrho}}|\widetilde{\nabla}^{f_{\varrho,j}}u_{\varrho,j}|^2 dV + \int_{\partial M_{\varrho}\setminus\Sigma} \Psi_{\partial M_{\varrho} \setminus\Sigma}^{f_{\varrho,j}} |u_{\varrho,j}|^2 dV \\
\leq & \int_{ M_{\varrho} \setminus M_1} \left|\frac{f_{\varrho,j}}{n}u_{\varrho,j}\right|^2 dV -\frac{n-1}{n} \int_{M_{\varrho}} \Phi^{f_{\varrho,j}} |u_{\varrho,j}|^2 dV - \int_{\Sigma} \Theta_{\Sigma} |u_{\varrho,j}|^2 dA \\
& + \frac{1}{4} (E_{\mathcal{E}} \pm \cos \alpha\, (P_{\mathcal{E}})_n -\sin |\alpha|\, |\widehat{P}_{\mathcal{E}}|) |u_0|_{\delta}^2 \\
\leq & \int_{ M_{\varrho}\setminus M_1} \left|\frac{f_{\varrho,j}}{n}u_{\varrho,j}\right|^2 dV -\frac{1}{2} \int_{M_{\varrho}} (\mu-|J|) |u_{\varrho,j}|^2 dV \\
& - \frac{n-1}{n}\int_{M_{\varrho} \setminus M_1} (C_{\varepsilon,\varrho}-\frac{\pi^2}{4 \varrho^2}) |u_{\varrho,j}|^2 dV \\
\leq & \int_{ M_{\varrho}\setminus M_1} \left|\frac{f_{\varrho,j}}{n}u_{\varrho,j}\right|^2 + \frac{(n-1)\pi^2}{4n\varrho^2} |u_{\varrho,j}|^2 dV 
% -\frac{1}{2} \int_{M_{\varrho}\cap \Omega_{\varrho} } (\mu-|J|) |u_{\varrho,j}|^2 dV
\leq C_1
\end{aligned}
\end{equation}
for some constant $C_1$ independent of $j$. Then it follows from \eqref{eq:bound-1} and \eqref{eq:bound-2} that the sequence $u_{\varrho,j}$ is uniformly bounded in ${\rm H}_{{\rm loc}}^1(M_{\varrho},S)$. Thus $u_{\varrho,j}$ weakly subconverges to a function $u_{\varrho}\in {\rm H}_{{\rm loc}}^1(M_{\varrho},S)$ with strong convergence in ${\rm H}_{{\rm loc}}^s(M_{\varrho},S)$ for $s\in [\frac{1}{2},1)$. Since the trace map $\mathscr{R}_1: {\rm H}_{{\rm loc}}^{s}(M_{\varrho},S)\to {\rm H}^{s-\frac{1}{2}}_{\rm loc}(\Sigma,\left.S\right|_{\Sigma})$ and $\mathscr{R}_2: {\rm H}_{{\rm loc}}^{s}(M_{\varrho},S)\to {\rm H}^{s-\frac{1}{2}}(\partial M_{\varrho}\setminus \Sigma,\left.S\right|_{\partial M_{\varrho}\setminus \Sigma})$ are continuous (see Remark \ref{rem:appendix-remark-1}(1)), $u_{\varrho,j}$ converges subsequentially to $\mathscr{R}_1(u_{\varrho})$ in $L^2(\Sigma,\left.S\right|_{\Sigma})$ and $\mathscr{R}_2(u_{\varrho})$ in $L^2(\partial M_{\varrho}\setminus \Sigma,\left.S\right|_{\partial M_{\varrho}\setminus \Sigma})$ respectively. Moreover, $\mathscr{R}_2(u_{\varrho})=0$ on $\partial M_{\varrho}\setminus \Sigma$ since $\Upsilon_{\varrho,j}$ blows-up on $\partial M_{\varrho}\setminus \Sigma$. Then, from \eqref{eq:key-pre-spectral} we have the following estimates. For sufficiently large $\varrho>\varrho_0$,
\begin{equation}\label{eq:arbitrary-distance-spectral-estimate}
\begin{aligned}
&- \int_{ \partial M_{\varrho}\setminus \Sigma } \Psi_{ \partial M_{\varrho}\setminus \Sigma }^{f_{\varrho,j}} |u_{\varrho,j}|^2 dA \\
\geq & \int_{M_{\varrho}} |\widetilde{\nabla}^{f_{\varrho,j}} u_{\varrho,j}|^2 dV  + \frac{n-1}{n} \int_{M_{\varrho}} \Phi^{f_{\varrho,j}} |u_{\varrho,j}|^2 dV \\
& +  \int_{\Sigma} \Theta_{\Sigma} |u_{\varrho,j}|^2 dA - \frac{1}{4} (E_{\mathcal{E}} \pm \cos \alpha\, (P_{\mathcal{E}})_n -\sin |\alpha|\, |\widehat{P}_{\mathcal{E}}|) |u_0|_{\delta}^2 \\
\geq & \frac{w_0}{2} \|u_{\varrho,j}\|_{L^2(Y,S)}^2 \\
\geq & \frac{w_0}{2} \|u_{\varrho,j}\|_{L^2(Y \cap \Omega_{\varrho},S)}^2, \\
\end{aligned}
\end{equation}
where the last second inequality holds by using the same argument in \eqref{eq:BVP-coercive}.
Taking the limit as $j\to \infty$ in \eqref{eq:arbitrary-distance-spectral-estimate} while applying weak lower semi-continuity of the ${\rm H}_{{\rm loc}}^1$-norm, Fatou's lemma, and strong convergence in $L^2$, we have
\begin{equation}\label{eq:tocontr}
0\geq \frac{w_0}{2} \|u_{\varrho}\|_{L^2(Y \cap \Omega_{\varrho},S)}^2.
\end{equation}
However, the right-hand side in \eqref{eq:tocontr} is strictly positive since $\max_{\Omega_{\varrho}} |u_{\varrho}|=1$. Therefore, a contradiction is obtained, and we conclude the proof.
\end{proof}

\begin{proof}[{Proof of Theorem \ref{thm:chairesultwithbends}}]
This directly follows from Theorem \ref{thm:flat--gener-PTM-arends}.
\end{proof}

%%%%%%%%%%%%%%%%%%%%%%%%%%%%%%%%%%%%%%%%%%%%%%%%%%%%%%%%%%%%%%%%%%%%%%%%%%%%%%%%%%%%%%%%%%%%%%%%%%%%%%%%%%%%%
%%%%%%%%%%%%%%%%%%%%%%%%%%%%%%%%%%%%%%%%%%%%%%%%%%%%%%%%%%%%%%%%%%%%%%%%%%%%%%%%%%%%%%%%%%%%%%%%%%%%%%%%%%%%%
\subsection{Quantitative shielding theorem}
In this subsection, let us directly enter into the proof of Theorem \ref{thm:PMT-with-dimension-zero-submanifolds}.
\begin{proof}[Proof of Theorem \ref{thm:PMT-with-dimension-zero-submanifolds}]
The proof is similar to that of Theorem \ref{thm:flat--gener-PTM-arends}. We only point out the necessary modifications. Assume, by contradiction, that 
\begin{equation}\label{eq:contraenergy}
E_{\mathcal{E}} \pm \cos\alpha \, (P_{\mathcal{E}})_n  < \sin \alpha\, |\widehat{P}_{\mathcal{E}}|,\quad \alpha\in \left[0, \frac{\pi}{2}\right]. 
\end{equation}
Since $\overline{U_0\setminus \mathcal{E}}$ is compact, $\mathcal{E}$ is the only asymptotically flat end with a non-compact boundary $\Sigma$ in $U_0$.
From the hypothesis (4) of the theorem, there exists an $\varepsilon>0$ such that
\begin{equation}
{\rm dist}_{g}(\partial \overline{U}_1\setminus\Sigma,\partial \overline{U}_0\setminus\Sigma) \geq \omega:= \sqrt{\frac{d_0}{\mathcal{Q}}} + \varepsilon.     
\end{equation}
Let $x(p)={\rm dist}_g(p,\partial \overline{U}_0\setminus \Sigma)$. Similar to the proof of Theorem \ref{thm:flat--gener-PTM-arends}, for any $j$ we define a sequence of bounded Lipschitz functions $f_j$ on $\overline{U}_0$ in the following
\begin{equation}
f_j(p)=\begin{cases}
\frac{\pi}{2\omega}\cot\left(\frac{\pi}{2\omega}x(p)+\frac{1}{j}\right) \quad &\text{if}\, r(p)\leq \frac{2\omega}{\pi} \left(\frac{\pi}{2}-\frac{1}{j}\right), \\
0 \quad &\text{otherwise.}
\end{cases}
\end{equation} 
Observe that $f_j\to \infty$ on $\partial \overline{U}_0\setminus \Sigma$ as $j\to \infty$. Moreover, note that $f_j=0$ on $\overline{U}_1$. We fix a compact subset $\Omega\subset U_0$ such that for all sufficiently large $j$ we have
\begin{subequations}
\begin{align}
\frac{3}{2}f_j^2-|{\rm d} f_j|\geq 1\quad &\text{on}\ \overline{U}_0 \setminus\Omega,\label{eq:AE-characteristic-1}\\
f_j^2-|{\rm d} f_j|+\frac{\mathcal{Q}\pi^2}{4d_0}\geq C_{\varepsilon,n,\mathcal{Q}} \quad &\text{on}\ \overline{U}_0, \label{eq:AE-characteristic-2}
\end{align}
\end{subequations}
where $C_{\varepsilon,n,\mathcal{Q}}>0$ depends on $\varepsilon, n$, and $\mathcal{Q}$. As in the proof of Theorem \ref{thm:flat--gener-PTM-arends}, for any $j$ we obtain an asymptotically constant $u_j$ which is asymptotically to $\varphi$ such that $\widetilde{\mathcal{D}}^{f_j}u_j=0$.
By using \eqref{eq:spectral-estimate} of Proposition \ref{pro:spectral-estimate}, we have
\begin{equation}\label{eq:AE-pre-spectral}
\begin{aligned}
&- \int_{ \partial \overline{U}_{0} \setminus \Sigma }  \Psi_{ \partial \overline{U}_{0} \setminus \Sigma }^{f_j} |u_j|^2 dA \\
\geq & \int_{U_0} |\widetilde{\nabla}^{f_j} u_j|^2 dV + \frac{n-1}{n} \int_{U_0} \Phi^{f_j} |u_j|^2 dV\\
&  +  \int_{\Sigma} \Theta_{\Sigma} |u_j|^2 dA - \frac{1}{4} (E_{\mathcal{E}} \pm \cos \alpha\, (P_{\mathcal{E}})_n -\sin |\alpha|\, |\widehat{P}_{\mathcal{E}}|) |u_0|_{\delta}^2.
\end{aligned}
\end{equation}
Using the same discussion as in the proof of Theorem \ref{thm:flat--gener-PTM-arends}, we find that the sequence $u_j$ converges strongly to a function $u$ in ${\rm H}_{{\rm loc}}^s(\overline{U}_0,S)$ for $s\in [\frac{1}{2},1)$ and $\mathscr{R}_2(u)=0$ on $\partial \overline{U}_{0} \setminus \Sigma$. Then by taking into account \eqref{eq:AE-pre-spectral}, the interior and boundary dominant energy conditions in items (1) to (3) from the statements of Theorem \ref{thm:PMT-with-dimension-zero-submanifolds}, and the inequalities \eqref{eq:contraenergy}, \eqref{eq:AE-characteristic-2} yield
\begin{equation}\label{eq:AE-quantitative-neck-estimate}
\begin{aligned}
- \int_{ \partial U_{0} \setminus \Sigma }  \Psi_{ \partial \overline{U}_{0} \setminus \Sigma }^{f_j} |u_j|^2 dA \geq &\frac{n-1}{n} \int_{U_0\setminus \overline{U}_1} \left( \frac{\mathcal{Q} \pi^2 }{4 d_0} + f_j^2-|{\rm d}f_j|\right) |u_j|^2 dV \\
& + \frac{1}{2} \int_{\overline{U}_1} (\mu-|J|)|u_j|^2  dV \\
\geq & \frac{n-1}{n} \int_{(U_0\setminus \overline{U}_1)\cap \Omega } C_{\varepsilon,n,\mathcal{Q}} |u_j|^2 dV \\
& + \frac{1}{2} \int_{\overline{U}_1\cap \Omega } (\mu-|J|)|u_j|^2  dV.
\end{aligned}
\end{equation}
Furthermore, taking the limit as $j\to \infty$ in \eqref{eq:AE-quantitative-neck-estimate} with weak lower semi-continuity of the ${\rm H}_{{\rm loc}}^1$-norm, Fatou's lemma, and strong convergence in $L^2$, we obtain
\begin{equation}
\begin{aligned}
0 \geq \frac{n-1}{n} \int_{(U_0\setminus \overline{U}_1)\cap \Omega } C_{\varepsilon,n,\mathcal{Q}} |u|^2 dV +  \frac{1}{2} \int_{ \overline{U}_1\cap \Omega } (\mu-|J|)|u|^2  dV.
\end{aligned}
\end{equation}
Therefore, we obtain a contradiction since $\max_{\Omega} |u|=1$. It then follows that
\begin{equation}
E_{\mathcal{E}} \pm \cos\alpha \, (P_{\mathcal{E}})_n  \geq \sin \alpha\, |\widehat{P}_{\mathcal{E}}|, \quad \text{here}\  \alpha\in [0, \frac{\pi}{2}].
\end{equation}
This finishes the proof.
\end{proof}

%%%%%%%%%%%%%%%%%%%%%%%%%%%%%%%%%%%%%%%%%%%%%%%%%%%%%%%%%%%%%%%%%%%%%%%%%%%%%%%%%%%%%%%%%%%%%%%%%%%%%%%%%%%%%
%%%%%%%%%%%%%%%%%%%%%%%%%%%%%%%%%%%%%%%%%%%%%%%%%%%%%%%%%%%%%%%%%%%%%%%%%%%%%%%%%%%%%%%%%%%%%%%%%%%%%%%%%%%%%

\appendix

\section{Boundary value problems}\label{sec:appendix}
In this section, we discuss a mixed boundary value problem associated with the Dirac-Witten operator with a Callias potential on a complete connected asymptotically flat spin initial data set $X$ with a non-compact boundary $\Sigma$ and compact boundary $\partial X\setminus \Sigma$ (compare with \cite{BB12,GN14}). 

\begin{definition}[compare {\cite[Definition~8.2]{BB12},\cite[Definition~6.1]{GN14}}]
We say that $\widetilde{\mathcal{D}}^{f}$ is \textit{coercive at infinity} if there is a compact subset $K\subset M$ and a constant $c>0$ such that
\begin{equation}
\|\widetilde{\mathcal{D}}^{f} u\|_{L^2(X,S)} \geq c\|u\|_{L^2(X, S)}
\end{equation}
for all $u\in C_{cc}^{\infty}(M\setminus K, S)$.
\end{definition}

\begin{definition}[{\cite[Definition~4.17]{GN14}}]
We say that $\widetilde{\mathcal{D}}^{f}$ is \textit{$({\rm dom}(\widetilde{\mathcal{D}}^{f}))$-coercive at infinity} if there is a constant $c>0$ such that
\begin{equation}
\forall u \in {\rm dom}(\widetilde{\mathcal{D}}^{f}) \cap (\ker(\widetilde{\mathcal{D}}^{f}))^{\bot}: \| \widetilde{\mathcal{D}}^{f} u\|_{L^2(X,S)} \geq c\|u\|_{L^2(X,S)},
\end{equation}
where $\bot$ denotes the orthogonal complement in $L^2(X,S)$.
\end{definition}
\begin{remark}\label{rem:appendix-remark-1}
\begin{itemize}
\item[(1)] From the trace theorem (cf. \cite[Theorem~3.7]{GN14}, \cite[Facts~5.4(v)]{BB12}), for all $s\in\mathbb{R}$ with $s>\frac{1}{2}$, the two trace maps $\mathscr{R}_1:C_c^{\infty}(X,S)\to C_c^{\infty}(\Sigma,\left.S\right|_{\Sigma})$, $u\mapsto \left.u\right|_{\Sigma}$ and $\mathscr{R}_2:C_c^{\infty}(X,S)\to C^{\infty}(\partial X\setminus \Sigma,\left.S\right|_{\partial X\setminus \Sigma}), u\mapsto \left.u\right|_{\partial X\setminus \Sigma}$ can be extended to the continuous linear operators $\mathscr{R}_1:{\rm H}_{\rm loc}^{s}(X,S)\to {\rm H}^{s-\frac{1}{2}}_{\rm loc}(\Sigma,\left.S\right|_{\Sigma})$ and $\mathscr{R}_2:{\rm H}_{\rm loc}^{s}(X,S)\to {\rm H}^{s-\frac{1}{2}}(\partial X\setminus \Sigma,\left.S\right|_{\partial X\setminus \Sigma})$ respectively.
\item[(2)] 
The domain ${\rm dom}(\widetilde{\mathcal{D}}^{f})$ of a closed extension of $\widetilde{\mathcal{D}}^{f}_{cc}$ with $B_1:=\mathscr{R}_1({\rm dom}(\widetilde{\mathcal{D}}^{f}))\subset {\rm H}^{-\frac{1}{2}}_{\rm loc}(\Sigma,\left.S\right|_{\Sigma})$ and $B_2:=\mathscr{R}_2({\rm dom}(\widetilde{\mathcal{D}}^{f}))\subset {\rm H}^{-\frac{1}{2}}(\partial X\setminus \Sigma,\left.S\right|_{\partial X\setminus \Sigma})$ equals 
\begin{equation}
{\rm dom}(\widetilde{\mathcal{D}}^{f}_{B_1,B_2}):=\{u\in {\rm dom}(\widetilde{\mathcal{D}}^{f}_{\rm max})|\mathscr{R}_1 u\in B_1,\mathscr{R}_2 u\in B_2\},
\end{equation}
where $\widetilde{\mathcal{D}}^{f}_{cc}$ denotes the operator $\widetilde{\mathcal{D}}^{f}$ with domain ${\rm dom}(\widetilde{\mathcal{D}}^{f}_{cc})=C_{cc}^{\infty}(X,S)$, and $\widetilde{\mathcal{D}}^{f}_{\rm max}$ denotes the maximal extension of $\widetilde{\mathcal{D}}^{f}$.
Moreover, $B_1\subset \check{\mathscr{R}}, B_2\subset {\rm \check{H}(A)}$ are closed subspaces of $\check{\mathscr{R}}, {\rm \check{H}(A)}$ (see \cite[Lemma~4.8]{GN14}, \cite[(36)]{BB12} for their precise definitions) respectively.
Conversely, for any closed subspaces $B_1\subset \check{\mathscr{R}}$, $B_2\subset {\rm \check{H}(A)}$, the operator $\widetilde{\mathcal{D}}^{f}_{B_1,B_2}:{\rm dom}(\widetilde{\mathcal{D}}^{f}_{B_1,B_2})\to L^2(X,S)$ is a closed extension of $\widetilde{\mathcal{D}}^{f}_{cc}$, cf. \cite[Lemma~4.13]{GN14}, \cite[Proposition~7.2]{BB12}.
\end{itemize}
\end{remark}
\begin{definition}
A closed subspace $B_1$ of $\check{\mathscr{R}}$ is called a boundary condition for non-compact boundary $\Sigma$, and a closed subspace $B_2$ of ${\rm \check{H}(A)}$ is called a boundary condition for compact boundary $\partial X\setminus\Sigma$.
 \end{definition}
For simplicity, we will use the short notation --- $\widetilde{\mathcal{D}}^f$ is $(B_1,B_2)$-coercive at infinity when $\widetilde{\mathcal{D}}^{f}_{B_1,B_2}$ is $({\rm dom}(\widetilde{\mathcal{D}}^{f}_{B_1,B_2}))$-coercive at infinity. 
\begin{remark}\label{rem:appendix-remark-2}
If $\widetilde{\mathcal{D}}^f$ is coecive at infinity and $B_1\subset {\rm H}^{\frac{1}{2}}_{\rm loc}(\Sigma,\left.S\right|_{\Sigma}), B_2\subset {\rm H}^{\frac{1}{2}}(\partial X\setminus \Sigma,\left.S\right|_{\partial X\setminus \Sigma})$, then $\widetilde{\mathcal{D}}^f$ is $(B_1,B_2)$-coercive at infinity, compare \cite[Corollary~6.4]{GN14}.
\end{remark}
\begin{theorem}\label{thm:appendix-thm}  
Let $B_1,B_2$ be two boundary conditions such that $B_1\subset {\rm H}^{\frac{1}{2}}_{\rm loc}(\Sigma,\left.S\right|_{\Sigma})$ and $B_2\subset {\rm H}^{\frac{1}{2}}(\partial X\setminus \Sigma,\left.S\right|_{\partial X\setminus \Sigma})$ respectively. 
We assume that $\widetilde{\mathcal{D}}^{f}: {\rm dom}(\widetilde{\mathcal{D}}^{f}_{B_1,B_2}) \subset L^2(X,S) \to L^2(X,S)$ is $(B_1,B_2)$-coercive at infinity. Let $P_{B_1}: {\rm H}^{\frac{1}{2}}_{\rm loc}(\Sigma,\left.S\right|_{\Sigma})\to B_1$ and $P_{B_2}: {\rm H}^{\frac{1}{2}}(\partial X\setminus\Sigma,\left.S\right|_{\partial X\setminus \Sigma})\to B_2$ be two projections. Then for all $u \in L^2(X,S)$ such that $u \in (\ker(\widetilde{\mathcal{D}}^{f}_{B_1,B_2})^{*})^{\bot}$, the boundary value problem
\begin{equation}
\begin{cases}
\widetilde{\mathcal{D}}^{f}v=u \quad & \text{on}\ X,\\
({\rm Id}_{\left.S\right|_{\Sigma}}-P_{B_1})\mathscr{R}_1(v) = 0 \quad & \text{on}\ \Sigma,\\
({\rm Id}_{\left.S\right|_{\partial X\setminus \Sigma}}-P_{B_2})\mathscr{R}_2(v) = 0 \quad & \text{on}\ \partial X\setminus \Sigma
\end{cases}
\end{equation}
has a solution $v \in {\rm H}_{\rm loc}^1(X,S)$ that is unique up to elements of the kernel $\ker(\widetilde{\mathcal{D}}^{f}_{B_1,B_2})$.
\end{theorem}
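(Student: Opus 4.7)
The plan is to combine the closed-range theorem with the $(B_1,B_2)$-coercivity at infinity hypothesis, following the template of \cite[Theorem~6.5]{GN14} and \cite[Theorem~8.17]{BB12} but now in the mixed-boundary setting.

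First, by Remark \ref{rem:appendix-remark-1}(2), for the given closed subspaces $B_1,B_2$ the operator $\widetilde{\mathcal{D}}^{f}_{B_1,B_2}:{\rm dom}(\widetilde{\mathcal{D}}^{f}_{B_1,B_2}) \subset L^2(X,S) \to L^2(X,S)$ is a closed densely defined extension of $\widetilde{\mathcal{D}}^{f}_{cc}$. Since $B_1 \subset {\rm H}^{1/2}_{\rm loc}(\Sigma,\left.S\right|_\Sigma)$ and $B_2 \subset {\rm H}^{1/2}(\partial X\setminus\Sigma,\left.S\right|_{\partial X\setminus\Sigma})$, the domain is contained in ${\rm H}^{1}_{\rm loc}(X,S)$, which supplies the regularity asserted in the conclusion. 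In particular $(\widetilde{\mathcal{D}}^{f}_{B_1,B_2})^{*}$ is again a closed densely defined operator on $L^{2}(X,S)$.

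Second, I would show that ${\rm range}(\widetilde{\mathcal{D}}^{f}_{B_1,B_2})$ is closed in $L^{2}(X,S)$. By the $(B_1,B_2)$-coercivity at infinity, there is a constant $c>0$ such that $\|\widetilde{\mathcal{D}}^{f} v\|_{L^{2}(X,S)} \geq c\|v\|_{L^{2}(X,S)}$ for every $v \in {\rm dom}(\widetilde{\mathcal{D}}^{f}_{B_1,B_2}) \cap (\ker \widetilde{\mathcal{D}}^{f}_{B_1,B_2})^{\perp}$. Given any sequence $(v_{j})$ in this intersection with $\widetilde{\mathcal{D}}^{f} v_{j} \to u$ in $L^{2}(X,S)$, the coercivity estimate turns the Cauchy property of $(\widetilde{\mathcal{D}}^{f} v_{j})$ into that of $(v_{j})$, hence $v_{j} \to v$ strongly in $L^{2}(X,S)$. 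Closedness of $\widetilde{\mathcal{D}}^{f}_{B_1,B_2}$ then places $v$ in the domain with $\widetilde{\mathcal{D}}^{f} v = u$, so the range is closed.

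Third, the closed-range theorem applied to the closed densely defined operator $\widetilde{\mathcal{D}}^{f}_{B_1,B_2}$ yields the $L^{2}$-orthogonal decomposition
\begin{equation*}
L^{2}(X,S) = {\rm range}(\widetilde{\mathcal{D}}^{f}_{B_1,B_2}) \oplus \ker\bigl((\widetilde{\mathcal{D}}^{f}_{B_1,B_2})^{*}\bigr).
\end{equation*}
Hence any $u \in \bigl(\ker(\widetilde{\mathcal{D}}^{f}_{B_1,B_2})^{*}\bigr)^{\perp}$ admits a preimage $v \in {\rm dom}(\widetilde{\mathcal{D}}^{f}_{B_1,B_2}) \subset {\rm H}^{1}_{\rm loc}(X,S)$ with $\widetilde{\mathcal{D}}^{f} v = u$. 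Membership in the domain encodes precisely the boundary conditions $({\rm Id}-P_{B_1})\mathscr{R}_{1}(v) = 0$ on $\Sigma$ and $({\rm Id}-P_{B_2})\mathscr{R}_{2}(v) = 0$ on $\partial X\setminus\Sigma$, and uniqueness modulo $\ker(\widetilde{\mathcal{D}}^{f}_{B_1,B_2})$ is immediate from linearity.

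The main obstacle I anticipate is really packaged into the first step: verifying that the mixed-boundary framework indeed yields a closed operator whose $L^{2}$-adjoint and trace maps behave as the argument requires. This amounts to carefully merging the B\"ar-Ballmann calculus for the compact boundary $\partial X\setminus\Sigma$ (boundary conditions valued in $\check{\rm H}(\mathcal{A})$, Green-formula adjoint pairings, elliptic ${\rm H}^{1/2}$-regularity) with the Grosse-Nakad calculus for the non-compact boundary $\Sigma$ (boundary conditions in $\check{\mathscr{R}}$, localized trace maps, asymptotically flat end analysis); once compatibility of these two frameworks is confirmed, the functional-analytic steps above close the argument.
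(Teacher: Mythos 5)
Your proposal is correct and follows essentially the same route as the paper's own proof: domain regularity ${\rm dom}(\widetilde{\mathcal{D}}^{f}_{B_1,B_2})\subset {\rm H}^1_{\rm loc}(X,S)$ from the ${\rm H}^{1/2}$-hypotheses on $B_1,B_2$, closedness of the range from $(B_1,B_2)$-coercivity at infinity, and then the closed range theorem to identify $(\ker(\widetilde{\mathcal{D}}^{f}_{B_1,B_2})^{*})^{\bot}$ with ${\rm ran}(\widetilde{\mathcal{D}}^{f}_{B_1,B_2})$. The only difference is cosmetic: you spell out the Cauchy-sequence argument for closed range, whereas the paper simply cites \cite[Lemma~4.18]{GN14}.
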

\begin{proof}
It essentially follows the proof of Corollary~4.19 in \cite{GN14} except additional arguments about a boundary condition for compact boundary from \cite{BB12}.
From \cite[Lemma~4.6]{GN14} and \cite[Theorem~6.7(iii)]{BB12}, $B_1\subset {\rm H}^{\frac{1}{2}}_{\rm loc}(\Sigma,\left.S\right|_{\Sigma})$ and $B_2\subset {\rm H}^{\frac{1}{2}}(\partial X\setminus \Sigma,\left.S\right|_{\partial X\setminus \Sigma})$ imply ${\rm dom}(\widetilde{\mathcal{D}}^{f}_{B_1,B_2})\subset {\rm H}^1_{\rm loc}(X,S)$. 
Since $\widetilde{\mathcal{D}}^{f}$ is $(B_1,B_2)$-coercive at infinity, its range is closed \cite[Lemma~4.18]{GN14}.
By the closed range theorem, $u \in (\ker(\widetilde{\mathcal{D}}^{f}_{B_1,B_2})^*)^{\bot}={\rm ran}(\widetilde{\mathcal{D}}^{f}_{B_1,B_2})$. Therefore, there is $v\in {\rm dom}(\widetilde{\mathcal{D}}^{f}_{B_1,B_2}) \subset {\rm H}_{\rm loc}^1(X,S)$ with $\widetilde{\mathcal{D}}^{f}v=u$, $({\rm Id}_{\left.S\right|_{\Sigma}}-P_{B_1})\mathscr{R}_1(v)=0$ and $({\rm Id}_{\left.S\right|_{\partial X\setminus \Sigma}}-P_{B_2})\mathscr{R}_2(v)=0$. 
\end{proof}

\vspace{.3cm}

\bibliographystyle{alpha}
\bibliography{tilted-PMT-arxiv}

\begin{thebibliography}{BHM{\etalchar{+}}15}

\bibitem[ABdL16]{ABdL16} S. Almaraz, E. Barbosa, and L. L. de Lima. {\sl A positive mass theorem for asymptotically flat manifolds with a non-compact boundary}. Commun. Anal. Geom. 24.4 (2016), pp. 673--715. DOI: \href{https://dx.doi.org/10.4310/CAG.2016.v24.n4.a1}{10.4310/CAG.2016.v24.n4.a1}.


\bibitem[AdL20]{AdL20} S. Almaraz and L. L. de Lima. {\sl The mass of an symptotically hyperbolic manifold with a non-compact boundary}. Ann. Henri Poincar\'{e} 21.11 (2020), pp. 3727--3756. DOI: \href{https://doi.org/10.1007/s00023-020-00954-w}{10.1007/s00023-020-00954-w}.


\bibitem[AdL22]{AdL22} S. Almaraz and L. L. de Lima. {\sl Rigidity of non-compact static domains in hyperbolic space via positive mass theorems}. Preprint. 2022. \href{https://arxiv.org/abs/2206.09768}{arXiv:2206.09768 [math.DG]}.


\bibitem[AdLM21]{AdLM21} S. Almaraz, L. L. de Lima, and L. Mari. {\sl Spacetime positive mass theorems for initial data sets with noncompact boundary}. Int. Math. Res. Not. 4 (2021), pp. 2783--2841. DOI: \href{https://doi.org/10.1093/imrn/rnaa226}{10.1093/imrn/rnaa226}.


\bibitem[Bar86]{Bar86} R. Bartnik. {\sl The mass of an asymptotically flat manifold}. Comm. Pure Appl. Math. 39.5 (1986), pp. 661--693. DOI: \href{ https://doi.org/10.1002/cpa.3160390505}{10.1002/cpa.3160390505}.


\bibitem[BB12]{BB12} C. B\"{a}r and W. Ballmann. {\sl Boundary value problems for elliptic differential operators of first order}. Surv. Differ. Geom. 17, Int. Press, Boston, MA (2012), pp. 1--78. DOI: \href{https://dx.doi.org/10.4310/SDG.2012.v17.n1.a1}{10.4310/SDG.2012.v17.n1.a1}.


\bibitem[BB16]{BB16} C. B\"{a}r and W. Ballmann. {\sl Guide to elliptic boundary value problems for Dirac-type operators}. Arbeitstagung Bonn 2013. Vol. 319. Progr. Math. Birkh\"{a}user/Spinger, Cham (2016), pp. 43--80. DOI: \href{https://doi.org/10.1007/978-3-319-43648-7_3}{10.1007/978-3-319-43648-7${}_{-}$3}.


\bibitem[BC03]{BC03} R. A. Bartnik and P. T. Chru\'{s}ciel. {\sl Boundary value problems for Dirac-type equations, with applications}. Preprint. 2003. \href{https://arxiv.org/abs/math/0307278}{arXiv:math/0307278 [math.DG]}.


\bibitem[BC05]{BC05} R. A. Bartnik and P. T. Chru\'{s}ciel. {\sl Boundary value problems for Dirac-type equations}. J. Reine Angew. Math. 279 (2005), pp. 13--73. DOI: \href{https://doi.org/10.1515/crll.2005.2005.579.13}{10.1515/crll.2005.2005.579.13}.


\bibitem[BdL23]{BdL23} R. M. Batista and L. L. de Lima. {\sl A harmonic level proof of a positive mass theorem}. Preprint. 2023. \href{https://arxiv.org/abs/2306.09097}{arXiv:2306.09097 [math.DG]}.


\bibitem[BKKS22]{BKKS22} H. Bray, D. Kazaras, M. Khuri, and D. Stern. {\sl Harmonic functions and the mass of $3$-dimensional asymptotically flat Riemannian manifolds}. J. Geom. Anal. 32(6):184, 2022. DOI: \href{https://doi.org/10.1007/s12220-022-00924-0}{10.1007/s12220-022-00924-0}.


\bibitem[BS18]{BS18} M. Braverman and P. Shi. {\sl The index of a local boundary value problem for strongly Callias-type operators}. Arnold Math. J. 5.1 (2019), pp. 79--96. DOI: \href{https://doi.org/10.1007/s40598-019-00110-1}{10.1007/s40598-019-00110-1}.


\bibitem[Cec20]{Cec20} S. Cecchini. {\sl A long neck principle for Riemannian spin manifolds with positive scalar curvature}. Geom. Funct. Anal. 20.5, (2020), pp. 1183--1223. DOI: \href{https://doi.org/10.1007/s00039-020-00545-1}{10.1007/s00039-020-00545-1}.


\bibitem[CH03]{CH03} P. T. Chru\'{s}ciel and M. Herzlich. {\sl The mass of asymptotically hyperbolic Riemannian manifolds}. Pacific J. Math. 212 (2003), pp. 231--264. DOI: \href{http://dx.doi.org/10.2140/pjm.2003.212.231}{10.2140/pjm.2003.212.231}.


\bibitem[Cha18]{Cha18} X. Chai. {\sl Positive mass theorem and free boundary minimal surfaces}. Preprint. 2018. \href{https://arxiv.org/abs/1811.06254}{arXiv:1811.06254 [math.DG]}.


\bibitem[Cha23]{Cha23} X. Chai. {\sl A tilted spacetime positive mass theorem}. Preprint. 2023. \href{https://arxiv.org/abs/2304.05208}{arXiv:2304.05208 [math.DG]}.


\bibitem[CLZ23]{CLZ23} S. Cecchini, M. Lesourd, and R. Zeidler. {\sl Positive mass theorems for spin initial data sets with arbitrary ends and dominant energy shields}. Preprint. 2023. \href{https://arxiv.org/abs/2307.05277}{arXiv:2307.05277 [math.DG]}.


\bibitem[CW22]{CW22} X. Chai and X. Wan. {\sl The mass of an asymptotically hyperbolic ends and distance estimates}. J. Math. Phys. 63.12 (2022), Paper No. 122502, 18. DOI: \href{https://doi.org/10.1063/5.0121452}{10.1063/5.0121452}.


\bibitem[CZ21a]{CZ21a} S. Cecchini and R. Zeidler. {\sl Scalar and mean curvature comparison via the Dirac operator}. Preprint. 2021. \href{https://arxiv.org/abs/2103.06833}{arXiv:2103.06833 [math.DG]}. To appear in Geom. Topol.


\bibitem[CZ21b]{CZ21b} S. Cecchini and R. Zeidler. {\sl Positive mass theorems and distance estimates in the spin setting}. Preprint. 2021. \href{https://arxiv.org/abs/2108.11972}{arXiv:2108.11972 [math.DG]}. To appear in Trans. Amer. Math. Soc.


\bibitem[EHLS16]{EHLS16} M. Eichmair, L.-H. Huang, D. Lee, and R. Schoen. {\sl The spacetime positive mass theorem in dimensions less than eight}. J. Eur. Math. Soc. 18.1 (2016), pp. 83--121. DOI: \href{https://doi.org/10.4171/jems/584}{10.4171/jems/584}.


\bibitem[GN14]{GN14} N. Grosse and R. Nakad. {\sl Boundary value problems for noncompact boundaries of ${\rm Spin^c}$ manifolds and spectral estimate}. Proc. Lond. Math. Soc. 109.4 (2014), pp. 946--974. DOI: \href{https://doi.org/10.1112/plms/pdu026}{10.1112/plms/pdu026}.


\bibitem[Gro18]{Gro18} M. Gromov. {\sl Metric inequality with scalar curvature}. Geom. Funct. Anal. 28.3 (2018), pp. 645--726. DOI: \href{https://doi.org/10.1007/s00039-018-0453-z}{10.1007/s00039-018-0453-z}.


\bibitem[Gro19]{Gro19} M. Gromov. {\sl Four lectures on scalar curvature}. Preprint. 2019. \href{https://arxiv.org/abs/1908.10612}{arXiv:1908.10612 [math.DG]}.


\bibitem[HKKZ23]{HKKZ23} S. Hirsch, D. Kazaras, M. Khuri, and Y. Zhang. {\sl Spectral torical band inequalities and generalizations of the Schoen-Yau black hole existence theorem}. Preprint. 2023. \href{https://arxiv.org/abs/2301.08270}{arXiv:2301.08270 [math.DG]}.


\bibitem[Lee19]{Lee19} D. A. Lee. {\sl Geometric Relativity}. Vol.201. Graduate Studies in Mathematics. American Mathematical Society, Providence, PI, 2019. DOI: \href{https://doi.org/10.1365/s13291-022-00245-9}{10.1365/s13291-022-00245-9}.


\bibitem[Liu23a]{Liu23a} D. Liu. {\sl A note on the long neck principle and spectral inequality of geodesic collar neighborhood}. Preprint. 2023. \href{https://arxiv.org/abs/2303.15333}{arXiv:2303.15333 [math.DG]}.


\bibitem[Liu23b]{Liu23b} D. Liu. {\sl On the long neck principle and width estimates for initial data sets}. Preprint. 2023. \href{https://arxiv.org/abs/2307.07278}{arXiv:2307.07278 [math.DG]}.


\bibitem[Liu+]{Liu+} D. Liu. {\sl A level set proof of a spacetime positive mass theorem with arbitrary ends and a non-compact boundary}. Preprint, in preparation. 


\bibitem[LLU22]{LLU22} D. A. Lee, M. Lesourd, and R. Unger. {\sl  Density and positive mass theorems for incomplete manifolds}. Preprint. 2022. \href{https://arxiv.org/abs/2201.01328}{arXiv:2201.01328 [math.DG]}. 


\bibitem[Loh16]{Loh16} J. Lohkamp. {\sl The higher dimensional positive mass theorem I}. Preprint. 2016. \href{https://arxiv.org/abs/math/0608795v2}{arXiv:math/0608795v2 [math.DG]}.


\bibitem[Loh17]{Loh17} J. Lohkamp. {\sl The higher dimensional positive mass theorem II}. Preprint. 2017. \href{https://arxiv.org/abs/1612.07505v2}{arXiv:1612.07505v2 [math.DG]}. 


\bibitem[LUY20]{LUY20} M. Lesourd, R. Unger, and S. T. Yau. {\sl  Positive scalar curvature on noncompact manifolds and the Liouville Theorem}. Preprint. 2020. \href{https://arxiv.org/abs/2009.12618}{arXiv:2009.12618 [math.DG]}. 


\bibitem[LUY21]{LUY21} M. Lesourd, R. Unger, and S. T. Yau. {\sl  The positive mass theorem with arbitrary ends}. Preprint. 2021. \href{https://arxiv.org/abs/2103.02744}{arXiv:2103.02744 [math.DG]}. To appear in J. Diff. Geom. 


\bibitem[Mia02]{Mia02} P. Miao. {\sl Positive mass theorem on manifolds admitting corners along a hypersurface}. Adv. Theor. Math. Phys. 6.6 (2002), pp. 1163--1182. DOI: \href{https://dx.doi.org/10.4310/ATMP.2002.v6.n6.a4}{10.4310/ATMP.2002.v6.n6.a4}. 


\bibitem[PT82]{PT82} T. Parker and C. H. Taubes. {\sl On Witten's proof of the positive energy theorem}. Comm. Math. Phys. 84.2 (1982), pp. 223--238. DOI: \href{https://doi.org/10.1007/BF01208569}{10.1007/BF01208569}.


\bibitem[Sch89]{Sch89} R. Schoen. {\sl Variational theory for the total scalar curvature functional for Riemannian metrics and related topics}. Topics in Calculus of Variations (Montecatini Terme, 1987), Lecture Notes in Math. 1365, Springer, Berlin, 1989, pp. 120--154. DOI: \href{https://doi.org/10.1007/BFb00891}{10.1007/BFb00891}.


\bibitem[SY79a]{SY79a} R. Schoen and S. T. Yau. {\sl On the proof of the positive mass conjecture in general relativity}. Comm. Math. Phys. 65.1 (1979), pp. 45--76. DOI: \href{https://doi.org/10.1007/BF01940959}{10.1007/BF01940959}.


\bibitem[SY79b]{SY79b} R. Schoen and S. T. Yau. {\sl Complete manifolds with nonnegative scalar curvature and the positive action conjecture in general relativity}. Proc. Nat. Acad. Sci. U.S.A. 76.3 (1979), pp. 1024--1025. DOI: \href{https://doi.org/10.1073/pnas.76.3.1024}{10.1073/pnas.76.3.1024}.


\bibitem[SY81a]{SY81a} R. Schoen and S. T. Yau. {\sl The energy and the linear momentum of space-times in general relativity}. Comm. Math. Phys. 79.1 (1981), pp. 47--51. DOI: \href{https://doi.org/10.1007/BF01208285}{10.1007/BF01208285}.


\bibitem[SY81b]{SY81b} R. Schoen and S. T. Yau. {\sl Proof of the positive mass theorem. II}. Comm. Math. Phys. 79.2 (1981), pp. 231–-260. DOI: \href{https://doi.org/10.1007/BF01942062}{10.1007/BF01942062}.


\bibitem[SY88]{SY88} R. Schoen and S. T. Yau. {\sl  Conformally flat manifolds, Kleinian groups and scalar curvature}. Invent. Math. 92.1 (1988), pp. 47--71. DOI: \href{https://doi.org/10.1007/BF01393992}{10.1007/BF01393992}.


\bibitem[SY17]{SY17} R. Schoen and S. T. Yau. {\sl  Positive scalar curvature and minimal hypersurface singularities}. Preprint. 2017. \href{https://arxiv.org/abs/1704.05490}{arXiv:1704.05490 [math.DG]}.


\bibitem[Wit81]{Wit81} E. Witten.  {\sl A new proof of the positive energy theorem}. Comm. Math. Phys. 80.3 (1981), pp. 381--402. DOI: \href{https://doi.org/10.1007/BF01208277}{10.1007/BF01208277}.


\bibitem[Zei20]{Zei20} R. Zeidler. {\sl Width, largeness and index theory}. SIGMA Symmetry, Integrability and Geometry: Methods and Applications, 16 (2020), Paper No. 127, 15. DOI: \href{https://doi.org/10.3842/SIGMA.2020.127}{10.3842/SIGMA.2020.127}.


\bibitem[Zei22]{Zei22} R. Zeidler.  {\sl Band width estimates via the Dirac operator}. J. Differential Geom. 122.1 (2022), pp. 155--183. DOI: \href{https://doi.org/10.4310/jdg/1668186790}{10.4310/jdg/1668186790}.


\bibitem[Zhu22]{Zhu22} J. Zhu. {\sl Positive mass theorem with arbitrary ends and its applications}. Preprint. 2022. \href{https://arxiv.org/abs/2204.05491}{arXiv:2204.05491 [math.DG]}. 

\end{thebibliography}

% tilted-PMT-arxiv-v2.bbl 
\newcommand{\etalchar}[1]{$^{#1}$}

\end{document}